 \DeclareMathOperator{\diam}{diam}
 \def\RR{{\mathbb R}}
\def\OL{\overline}
\def \UL {\underline}
\def\({\left(}
\def\){\right)}
\def\[{\left[}
\def\]{\right]}
\def \NN{{\mathbb N}}
\def \W{\widetilde}
\newtheorem{theorem}{Theorem}[section]
\newtheorem{lemma}[theorem]{Lemma}
\newtheorem {proposition}[theorem]{Proposition}
\newtheorem {corollary}[theorem]{Corollary}
\begin{document}
\date{\today}

\title[Low Complexity Energy Methods]{
LOW COMPLEXITY METHODS FOR DISCRETIZING MANIFOLDS VIA  RIESZ ENERGY MINIMIZATION  }

\author [S.V. Borodachov]{S.V. Borodachov}
\address {Towson University, Towson, MD, 21252, USA}
\address{Center for Constructive Approximation, Department of Mathematics, \hspace*{.1in}
Vanderbilt University,
Nashville, TN 37240, USA  } 

\email {sborodachov@towson.edu}
\author[D.P. Hardin]{D.P. Hardin}
 \email{doug.hardin@vanderbilt.edu}
\author[E. B. Saff]{E.B. Saff}
\email{edward.b.saff@vanderbilt.edu}

\thanks{The research of the authors was supported, in part,
by the U. S. National Science Foundation under grant DMS-1109266.  
}

\keywords{
Minimal discrete Riesz energy, Best-packing, Covering radius, Hausdorff measure, Rectifiable sets, non-Uniform distribution of points, Power law potential, Separation distance}
\subjclass{Primary 11K41, 70F10, 28A78; Secondary 78A30, 52A40}

\begin{abstract}

Let $A$ be a compact $d$-rectifiable set embedded in Euclidean space $\RR^p$, $d\le p$.  
For a given continuous distribution $\sigma(x)$ with respect to   $d$-dimensional
Hausdorff measure on $A$, our earlier results provided a method for generating $N$-point 
configurations on $A$
that
  have asymptotic distribution $\sigma (x)$
as $N\to \infty$; moreover such configurations are ``quasi-uniform'' in the sense that the ratio of the covering radius to the separation distance is bounded independent of $N$.  The method is based upon minimizing the energy of $N$ particles constrained to $A$ interacting via a weighted power law potential $w(x,y)|x-y|^{-s}$, where $s>d$ is a fixed parameter and 
$w(x,y)=\left(\sigma(x)\sigma(y)\right)^{-({s}/{2d})}$.

Here we show that one can  generate points on $A$ with the above mentioned properties keeping in the energy sums only those  pairs  of points that are located at a distance of at most $r_N=C_N N^{-1/d}$ from each other,  with $C_N$ being a positive sequence tending to infinity  arbitrarily slowly. To do this we minimize the energy with respect to a varying truncated weight $v_N(x,y)=\Phi\(\left|x-y\right|/r_N\)w(x,y)$, where $\Phi:(0,\infty)\to [0,\infty)$ is a bounded function with $\Phi(t)=0$, $t\geq 1$, and $\lim_{t\to 0^+}\Phi(t)=1$. This  reduces, under appropriate assumptions,  the complexity of generating $N$ point `low energy' discretizations to order $N C_N^d$ computations.   
\end{abstract}
\maketitle

\section {Introduction}

Points on a compact set $A$ that minimize certain energy functions often have desirable properties that reflect 
special features of $A$.  For $A=S^2$, the unit sphere in $\RR^3$, the determination of minimal Coulomb
energy points is the classic problem of Thomson \cite{Mel77, Bow00}.  Other energy functions on higher dimensional 
spheres give rise to equilibrium points that are useful for a variety of applications including coding theory \cite{ConSlo99}, cubature formulas \cite{SloWom}, and  
the generation of finite normalized  tight frames \cite{BenFic03}.  In this paper, we shall consider a generalized Thomson problem, namely minimal energy points for weighted Riesz potentials on rectifiable sets (where the weight varies as the cardinality of the configuration grows). Energy problems with varying weights arise, in particular, in physical problems involving  potentials that are not scale invariant.    

Our focus is on the hypersingular
case when short range interaction between points is the dominant effect.  Such energy functions are not 
treatable with classical potential theoretic methods, and so require different techniques of analysis.  

Let $A$ be a compact  set in $\RR^{p}$ whose $d$-dimensional Hausdorff measure\footnote{For integer $d$,  we normalize Hausdorff measure on $\RR^p$ so that $\mathcal{H}_d(U)=1$ if $U$ is a $d$-dimensional unit cube embedded in   $\RR^p$.}, $\mathcal{H}_d(A)$,  is finite and positive.
For  a collection of $N\geq 2$ distinct points
$\omega_N:=\{x_1,\ldots ,x_N\}\subset A$, a non-negative weight function $w$ on $A\times A$ (we shall specify additional conditions on $w$ shortly),
and $s>0$,
 the  {\em weighted Riesz $s$-energy of $\omega_N$} is defined by
$$
E^w_s(\omega_N):=\sum_{1\leq i\neq j\leq N}{\frac
{w(x_i,x_j)}{\left |x_i-x_j\right|^s}}=\sum_{i=1}^N\sum_{ {j=1} \atop { j\neq i}}^N\frac{w(x_i,x_j)}{\left|x_i-x_j\right|^s},
$$
while  the {\em $N$-point
weighted Riesz $s$-energy of $A$} is defined by
\begin{equation} \label{c2'}\mathcal E^w_s(A,N):=\inf \{E^w_s(\omega_N) : \omega_N\subset A ,
 \# \omega_N =N\},
\end{equation} where $\# X$ denotes
the cardinality of a set $X$.  If $v(x,y)=(w(x,y)+w(y,x))/2$, then $E_s^v(\omega_N)=E_s^w(\omega_N)$ for any $N$-point configuration $\omega_N\subset A$, and so, without loss of generality, we assume that $w$ is symmetric; i.e.,
 $w(x,y)=w(y,x)$ for $x,y\in A$.

We call $w:A\times A\to [0,\infty]$ a  {\em CPD-weight function on $A\times A$} if
\begin{enumerate}
\item[(a)] $w$ is continuous (as a function on $A\times A$) at $\mathcal{H}_d$-almost every point of
the diagonal $D(A):=\{(x,x) : x\in A\}$,
\item[(b)] there is some neighborhood $G$ of $D(A)$ (relative to $A\times A$) such that $\inf_G w>0$,  and
\item[(c)] $w$ is bounded on every closed subset $B\subset A\times A\setminus D(A)$.
\end{enumerate}
Here CPD stands for (almost everywhere) continuous and positive on the diagonal.
In particular, conditions (a), (b), and (c) hold if $w$ is bounded on $A\times A$ and continuous and positive
at every point  of the diagonal  $D(A)$ (where
continuity at a diagonal point $(x_0,x_0)$ is meant in the sense of limits taken on
$A\times A$).    We mention that if a CPD-weight $w$ is also lower semi-continuous on $A\times A$, then  the infimum in \eqref{c2'} will be attained.

If $w\equiv 1$ on $A\times A$ (which we refer to as the {\em unweighted} case),  we write
$E_s(\omega_N)$ and $\mathcal E_s(A,N)$ for $ E^w_s(\omega_N)$ and $\mathcal E^w_s(A,N)$, respectively.   For the trivial cases $N=0$ or $1$ we put
$E_s(\omega_N)=\mathcal E_s(A,N)=E^w_s(\omega_N)=\mathcal
E^w_s(A,N)=0$.

In previous works, the authors of this paper have investigated asymptotics as $N\to \infty$ for a  fixed weight $w$ for the energy $\mathcal
E^w_s(A,N)$ as well as for the optimal configurations  that achieve the minimum energy.  Our focus in this article is a generalization that allows the weight $w$ to vary with $N$.  A primary motivation for this generalization is to lower the complexity of energy computations that typically are of order $N^2$ by incorporating a ``cut-off'' function into the weight that depends on $N$.

Before stating our main results we provide some needed notation and review some relevant prior work.

A set $A\subset \RR^{p}$ is called {\it $d$-rectifiable} if $A=\phi (K)$, where $K\subset \RR^d$ is a bounded set and $\phi:K \to \RR^{p}$ is a Lipschitz mapping. A set $A\subset \RR^{p}$ is called {\em $(\mathcal H_d,d)$-rectifiable} if $\mathcal H_d(A)<\infty$ and $A$ is a union of at most a countable collection of $d$-rectifiable sets and a set of $\mathcal H_d$-measure zero.

A sequence of Borel probability measures $\{\mu_N\}$ supported on a compact set $A$ in $\RR^{p}$ is said to {\em converge in the weak* sense} to a Borel probability measure $\mu$ (supported on $A$), if for every Borel subset $B$ of $A$ whose relative boundary $\partial _A B$ with respect to $A$ has $\mu$-measure zero, we have 
$$
\lim_{N\to\infty}{\mu_N(B)}=\mu (B).
$$  
In this case we write 
$
\mu_N\stackrel{*}{\longrightarrow} \mu \text{ as } N\to\infty.
$

Let $\mathcal L_m$ denote the Lebesgue measure in $\RR^m$  and let
$$
K(\epsilon):=\{x\in \RR^{p} : {\rm dist}(x,K)<\epsilon\}
$$ 
denote the $\epsilon$-neighborhood of the set $K$ in $\RR^{p}$. The {\em upper} and the {\em lower $d$-dimensional Minkowski content} of the set $K$ are defined by
$$
\OL {\mathcal M}_d(K):=\limsup_{\epsilon\to 0^+}{\frac {\mathcal L_{p}(K(\epsilon))}{\beta_{p-d}\epsilon ^{p-d}}}
$$
and
$$
\UL {\mathcal M}_d(K):=\liminf_{\epsilon\to 0^+}{\frac {\mathcal L_{p}(K(\epsilon))}{\beta_{p-d}\epsilon ^{p-d}}}
$$
respectively, where $\beta_{m}$ is the Lebesgue measure of the unit ball in $\RR^m$, $m\in \NN$, and $\beta_0:=1$. If the limit
$$
{\mathcal M}_d(K):=\lim_{\epsilon\to 0^+}{\frac {\mathcal L_{p}(K(\epsilon))}{\beta_{p-d}\epsilon ^{p-d}}}
$$
exists, it is called the {\em $d$-dimensional Minkowski  content} of the set $K$. We also let  $\delta_x$  denote the unit point mass at  $x\in \RR^{p}$.

For $s>0$ and a CPD-weight $w$ on $A$, we say that a sequence $\{\omega_N\}_{N=2}^\infty$  of $N$-point configurations on $A$   is {\em asymptotically $(w,s)$-energy minimizing}  if
$$
\lim_{N\to\infty}{\frac {E^w_s(\omega_N)}{\mathcal E^w_s(A,N)}}=1.
$$

In the unweighted case ($w\equiv 1$) the asymptotic behavior of the minimal energy and the weak* limit distribution of energy minimizing configurations are known for wide classes of sets as stated in the following theorem.   
\begin {theorem}\label {packing}
Let $s > d$ and $p\geq d$, where $d$ and $p$ are integers. For every
infinite compact $(\mathcal H_d, d)$-rectifiable set $A$ in $\RR^{p}$ with $\mathcal M_d(A) = \mathcal H_d(A)$, we have
$$
\lim_{N\to \infty}{\frac {\mathcal E_{s}(A,N)}{N^{1+s/d}}}=\frac { C_{s,d}}{\mathcal H_d(A)^{s/d}},
$$
where $C_{s,d}$ is a positive and finite constant independent of $A$.

Moreover, if $A$ is $d$-rectifiable with $\mathcal H_d(A) > 0$, then any sequence $\{\omega_N^\ast\}_{N=2}^{\infty}$ of asymptotically $s$-energy
minimizing configurations on $A$ such that $\# \omega_N^\ast = N$ is asymptotically uniformly
distributed on $A$ with respect to $\mathcal H_d$, i.e.
\begin {equation}\label {q}
\frac {1}{N}\sum_{x\in \omega_N}{\delta_{x}}\stackrel{*}{\longrightarrow} \frac {\mathcal H_d(\cdot\cap A)}{\mathcal H_d(A)},\ \ \ N\to\infty.
\end {equation}
\end {theorem}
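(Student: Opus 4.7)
The plan is to prove the theorem in three stages: establish the limit on a unit cube in $\RR^d$, extend to general $(\mathcal{H}_d,d)$-rectifiable sets via a piecewise Euclidean approximation, and derive the weak$^*$ convergence \eqref{q} from the sharpness of these estimates.

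For the cube case $A=[0,1]^d$, set $g(N):=\mathcal{E}_s([0,1]^d,N)/N^{1+s/d}$. Finiteness of $\limsup g(N)$ follows from a lattice-type construction, while strict positivity of $\liminf g(N)$ comes from the fact that for $s>d$ the minimum energy of any $N$-point configuration in a bounded region is dominated by nearest-neighbor contributions of order $N \cdot (N^{-1/d})^{-s} = N^{1+s/d}$, combined with the elementary bound on the minimal pairwise separation distance. Existence of the limit is obtained from an approximate self-similarity argument: subdivide $[0,1]^d$ into $M^d$ sub-cubes of side $1/M$, insert a scaled optimal $n$-point configuration into each, and observe that cross-sub-cube interactions contribute only $O(N^{1+s/d}/M^{s-d})$ because $s>d$. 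Letting $M\to\infty$ and then $n\to\infty$, together with the scaling $\mathcal{E}_s([0,1/M]^d,n)=M^s\,\mathcal{E}_s([0,1]^d,n)$, identifies a common limit $C_{s,d}\in(0,\infty)$ independent of the particular cube.

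For general $A$ I would partition $A$, up to an $\mathcal{H}_d$-null set, into finitely many Borel pieces $A_1,\ldots,A_K$ each admitting a bi-Lipschitz parametrization with constant arbitrarily close to $1$; this is possible because $A$ is $(\mathcal{H}_d,d)$-rectifiable. For the upper bound, place $N_i\approx N\mathcal{H}_d(A_i)/\mathcal{H}_d(A)$ points on each $A_i$ via pullback of the cube construction; the within-piece energies combine, using the hypothesis $\mathcal{M}_d(A)=\mathcal{H}_d(A)$ to control ambient tubes and near-isometric scaling at small scales, to give $C_{s,d}\,\mathcal{H}_d(A)^{-s/d}$, while cross-piece interactions are $o(N^{1+s/d})$ from $s>d$ decay. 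For the lower bound, split an arbitrary $\omega_N^*$ into $n_i:=\#(\omega_N^*\cap A_i)$ points on each piece, apply the cube lower bound to get at least $(C_{s,d}-o(1))\,n_i^{1+s/d}/\mathcal{H}_d(A_i)^{s/d}$ within each piece, and invoke the convexity of $t\mapsto t^{1+s/d}$ (Jensen's inequality) to obtain
\begin{equation*}
\sum_{i=1}^K \frac{n_i^{1+s/d}}{\mathcal{H}_d(A_i)^{s/d}}\ge \frac{N^{1+s/d}}{\mathcal{H}_d(A)^{s/d}},
\end{equation*}
with equality precisely when $n_i/N=\mathcal{H}_d(A_i)/\mathcal{H}_d(A)$.

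The distribution statement \eqref{q} then follows from this saturation condition: for an asymptotically $s$-energy minimizing sequence $\{\omega_N^*\}$, Jensen's inequality must be saturated in the limit, so $n_i/N\to \mathcal{H}_d(A_i)/\mathcal{H}_d(A)$ for every piece of every sufficiently fine partition, which by a routine approximation yields weak$^*$ convergence of $N^{-1}\sum_{x\in\omega_N^*}\delta_x$ to $\mathcal{H}_d(\cdot\cap A)/\mathcal{H}_d(A)$. The main obstacle I foresee is the matching lower bound on the cube: showing that no $N$-point configuration can beat $C_{s,d}N^{1+s/d}$ at leading order requires a uniform near-separation estimate on nearly-minimizing configurations together with a ``locality'' argument showing that any defect in local density raises the leading-order energy. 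A secondary difficulty is guaranteeing that the \emph{same} constant $C_{s,d}$ governs every rectifiable set; this is where the Minkowski content assumption $\mathcal{M}_d(A)=\mathcal{H}_d(A)$ is essential, as it precludes pathological small-scale behavior and makes the piecewise-Euclidean approximation quantitatively tight.
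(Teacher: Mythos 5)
First, a point of reference: the paper does not actually prove Theorem~\ref{packing}; it is quoted from \cite{MMRS}, \cite{HarSaf05} and \cite{BorHarSaf08}, so your attempt can only be compared with those original arguments and with the paper's adaptation of them in Sections~4--6. Your outline does follow the same broad route as those sources: approximate self-similarity on the cube, Federer's decomposition into near-isometric bi-Lipschitz pieces, and a Jensen/strict-convexity saturation argument for the weak$^*$ limit (the last is exactly the $F(\alpha)$ argument the paper reuses at the end of the proof of Proposition~\ref{Th1''}). There are, however, two genuine gaps. The first is in the cube step: the claim that cross-sub-cube interactions are $O(N^{1+s/d}/M^{s-d})$ is unjustified as stated, because adjacent sub-cubes of your partition share faces, so points of the inserted configurations can be arbitrarily close across a common face and the cross-energy is not governed by the distance $1/M$ at all. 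The standard fix is to place the scaled configurations in slightly shrunken, hence metrically separated, sub-cubes (or to invoke a separation estimate of order $n^{-1/d}/M$ for the inserted optimal configurations and sum over dyadic shells near each face); without one of these the existence-of-the-limit argument does not close.

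The second gap is more serious and concerns the lower bound for general $A$. Your Jensen argument only accounts for the points of $\omega_N^*$ that land on the bi-Lipschitz pieces $A_1,\dots,A_K$; Federer's lemma leaves an exceptional set of $\mathcal H_d$-measure zero, and nothing in your argument prevents a positive fraction of the $N$ points from sitting on that remainder and contributing essentially no energy, in which case Jensen over the $A_i$ alone yields a strictly smaller constant. This is precisely where the hypothesis $\mathcal M_d(A)=\mathcal H_d(A)$ must be used, via a packing lemma that bounds the unweighted $s$-energy of any $N$ points on a compact set from below in terms of its upper Minkowski content, so that a remainder of small measure (and hence, by an argument like the paper's Lemma~\ref{K}, of small Minkowski content) cannot cheaply absorb a positive fraction of the points. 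You instead invoke the Minkowski content hypothesis mainly in the upper-bound construction, where plain rectifiability already suffices. Relatedly, you identify ``the matching lower bound on the cube'' as the main obstacle, but once the limit on the cube is shown to exist that bound is automatic by definition of $C_{s,d}$; the real obstacles are the lower bound on general sets just described and the fact that the closed pieces $\overline{A_i}$ and their complements must themselves inherit the property $\mathcal M_d=\mathcal H_d$ (the content of Lemma~\ref{K}) for your partition argument, and in particular the saturation step proving \eqref{q}, to be applicable to them.
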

This result was proved for the case that  $A$ is a finite union of rectifiable Jordan arcs in \cite [Theorems 3.2 and 3.4]{MMRS},  a $d$-dimensional rectifiable manifold in \cite [Theorem 2.4]{HarSaf05},  a $d$-rectifiable closed set in \cite [Theorems 1 and 2]{BorHarSaf08}, and, in the form presented above, in \cite [Theorems 1 and 2 and related remarks]{BorHarSaf08}.

We remark that the constant $C_{s,1}=2\zeta(s)$ for $s>1$,  where $\zeta(s)$ denotes the classical Riemann zeta function.  
For  other values $d$, this constant   is not yet known. However, for certain values of $d$, specifically $d=2, 4, 8$ and 24, it is conjectured (cf. \cite{BrauHarSaff}) that     
$\displaystyle C_{s,d}=|\Lambda_d|^{s/d}\zeta_{\Lambda_d} (s)$ for $s>d$,
where  $\zeta_{\Lambda_d}$  denotes the {\em Epstein zeta function} for     the hexagonal, $D_4$, $E_8$, and  Leech lattices, respectively and
$|\Lambda_d|$ denotes the co-volume of $\Lambda_d$.

Given a CPD-weight $w$ on $A\times A$,  define for any Borel set $B\subset A$, 
$$
\mathcal H_d^{s,w}(B):=\int_{B}{w(x,x)^{-d/s}{\rm d}\mathcal H_d(x)},\ \ \ s>d.
$$ 
If $0<\mathcal H_d(A)<\infty$, the corresponding probability measure on $A$ is 
\begin{equation}\label{hdsw}
h^{s,w}_d(B):=\frac {\mathcal H^{s,w}_d(B)}{\mathcal H^{s,w}_d(A)}.
\end{equation}
In the case of weighted energy the following asymptotic result is known, see \cite [Theorem 2]{BorHarSaf08}.
\begin {theorem}\label {Thknown}
Let $A\subset \RR^{p}$ be an infinite closed $d$-rectifiable set. Suppose $s>d$ and that $w$ is a $CPD$-weight function on $A\times A$. Then
$$
\lim_{N\to \infty}{\frac {\mathcal E^{w}_{s}(A,N)}{N^{1+s/d}}}=\frac {C_{s,d}}{\[\mathcal H_d^{s,w}(A)\]^{s/d}},
$$
where the constant $C_{s,d}$ is as in Theorem~\ref{packing}. Furthermore, if $\mathcal H_d(A)~>~0$, any asymptotically $(w,s)$-energy minimizing sequence of $N$-point configurations  on $A$ is uniformly distributed with respect to the probability measure $h^{s,w}_d$  defined in \eqref{hdsw}, as $N\to\infty$.
\end {theorem}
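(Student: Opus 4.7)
The strategy is to bootstrap from the unweighted Theorem \ref{packing} via a partition-and-optimize argument that exploits the near-diagonal regularity of the CPD-weight. Fix $\epsilon>0$ and decompose $A$ into finitely many disjoint Borel pieces $A_1,\ldots,A_m$, plus a residual set of small $\mathcal H_d$-mass, so that each $A_i$ is $d$-rectifiable with $\mathcal M_d(A_i)=\mathcal H_d(A_i)>0$ and there exist constants $w_i>0$ with $(1-\epsilon)w_i\le w(x,y)\le (1+\epsilon)w_i$ for every $(x,y)\in A_i\times A_i$. Such a partition exists because $w$ is continuous at $\mathcal H_d$-almost every diagonal point, bounded below on a diagonal neighborhood, and bounded on off-diagonal compact sets; the a.e.\ bad set is absorbed into the residual.

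\textbf{Upper bound.} Allocate $N_i=\lfloor\alpha_i N\rfloor$ points to $A_i$ for a probability vector $(\alpha_i)$ to be optimized, and in each $A_i$ place a near-optimal unweighted $N_i$-point configuration (shrinking $A_i$ slightly if necessary to guarantee positive inter-piece separation). The total weighted energy splits into intra-piece contributions plus cross terms; the latter are $O(N^2)=o(N^{1+s/d})$ because inter-piece distances are bounded below, $w$ is bounded off the diagonal, and $s>d$. Applying Theorem \ref{packing} to each $A_i$ yields
\begin{equation*}
\limsup_{N\to\infty}\frac{\mathcal E^w_s(A,N)}{N^{1+s/d}}\le (1+\epsilon)\,C_{s,d}\sum_i\frac{w_i\,\alpha_i^{1+s/d}}{\mathcal H_d(A_i)^{s/d}}.
\end{equation*}
A Lagrange-multiplier computation shows the right-hand side is minimized at $\alpha_i\propto \mathcal H_d(A_i)\,w_i^{-d/s}$, producing the bound $(1+\epsilon)C_{s,d}S^{-s/d}$ where $S:=\sum_i\mathcal H_d(A_i)\,w_i^{-d/s}$. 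Refining the partition sends $S\to\int_A w(x,x)^{-d/s}\,d\mathcal H_d(x)=\mathcal H_d^{s,w}(A)$, and letting $\epsilon\to 0$ completes the upper estimate.

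\textbf{Lower bound and distribution.} For an arbitrary $N$-point configuration $\omega_N$ with $N_i:=\#(\omega_N\cap A_i)$, discard cross terms and bound $w\ge (1-\epsilon)w_i$ on each piece; then invoke the unweighted lower bound from Theorem \ref{packing} to obtain
\begin{equation*}
E^w_s(\omega_N)\ge (1-\epsilon)(1-o(1))\,C_{s,d}\sum_i\frac{w_i\,N_i^{1+s/d}}{\mathcal H_d(A_i)^{s/d}},
\end{equation*}
with pieces having bounded $N_i$ absorbed into the $o(1)$. Minimizing the right-hand side over compositions $\sum N_i=N$ reproduces the same $S^{-s/d}$ expression and, after refining the partition and sending $\epsilon\to 0$, yields the matching lower bound and hence the asserted limit. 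For the distribution statement, let $\mu$ be any weak* cluster point of the empirical measures of an asymptotically $(w,s)$-energy minimizing sequence $\omega_N^*$, so that $N_i/N\to \mu(A_i)$ along a subsequence. Running the lower bound with these specific proportions shows $(\mu(A_i))$ must achieve the simplex minimum; strict convexity of $t\mapsto t^{1+s/d}$ makes this minimizer unique, and direct computation identifies it with $(h^{s,w}_d(A_i))$. Refining the partition and sending $\epsilon\to 0$ identifies $\mu$ with $h^{s,w}_d$.

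The delicate step, and the main obstacle, is constructing the partition so that all of the required properties hold at once: each $A_i$ inherits the Minkowski-content hypothesis $\mathcal M_d(A_i)=\mathcal H_d(A_i)$ needed to invoke Theorem \ref{packing}, the weight is tightly pinched on $A_i\times A_i$ despite being only almost-everywhere continuous on the diagonal, and the residual set carries arbitrarily small $\mathcal H_d$-mass so that the Riemann sums defining $\mathcal H_d^{s,w}(A)$ actually converge. One must additionally verify that points landing in the residual set contribute negligibly to the minimum energy, which uses the same $s>d$ scaling argument that disposes of the cross terms.
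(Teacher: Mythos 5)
Your argument is correct and is essentially the route the paper takes: Theorem \ref{Thknown} is quoted from \cite[Theorem 2]{BorHarSaf08}, and the reduction from the weighted to the unweighted problem is exactly the content of Lemma \ref{wth1}, whose proof is the partition-and-optimize scheme you reconstruct (pinch $w$ near the diagonal on small almost-clopen pieces, apply Theorem \ref{packing} piecewise, optimize the allocation $\alpha_i\propto \mathcal H_d(A_i)\,w_i^{-d/s}$, and use strict convexity of $t\mapsto t^{1+s/d}$ for the limit distribution). The technicalities you flag --- choosing pieces whose relative boundaries have $\mathcal H_d$-measure zero, absorbing the diagonal discontinuity set of $w$ into the residual, and showing via the $s>d$ scaling that few points can fall in the residual --- are precisely the ones that must be checked, and they are handled the same way in the cited proof.
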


One application of the above theorem is to generate points on a rectifiable set that have a specified limiting distribution with respect to Hausdorff measure on the set.  More precisely, 
if $A$ is as in Theorem~\ref{Thknown} and    $\sigma$ is a probability density on $A$ that is continuous almost everywhere with respect to $\mathcal H_d$  and is bounded above and below by positive constants, then for  fixed  $s>d$ and $w:A\times A\to [0,\infty)$    given by
  \begin{equation} \label{wrhodef}
w(x,y):=(\sigma (x)\sigma (y))^{-s/2d},\end{equation}   a sequence of normalized counting measures associated with   $N$-point $(w,s)$-energy minimizing configurations on $A$ converges
  weak*  (as $N\to \infty$) to $\sigma (\cdot)\,  \mathrm{d}\mathcal{H}_d(\cdot)$ (see also,  \cite [Corollary 2]{BorHarSaf08}).

 The outline of the paper is as follows.  In the next section we state  our main results.  In Section~\ref{section:Comp},
we provide complexity estimates for generating minimum weighted energy points that involve a  
cut-off function, and we illustrate the generation method with two examples---one for the sphere and another for
a 3-dimensional spherical shell. Section~\ref{proof2.1} is devoted to the proof of Theorem~\ref{rth1},  while Sections~\ref{proof2.2} and  \ref{proof2.2b}
are devoted to the proofs of Theorems~\ref{Th1} and \ref{C2}.  The proofs of Theorems~\ref{S1} and \ref{S2} are given in Section~\ref{sepsec}
and the complexity assertions from Section~\ref{section:Comp} are justified in Section~\ref{CompProofs}.

\section {Main results}\label {S2.1}

The main purpose of this paper is to present an efficient method for generating a large number of points on a manifold that are well-separated and approximate a given distribution. 
The low complexity of our method is accomplished by performing significantly fewer operations when computing energy sums and gradients.

We begin by stating the following  result extending Theorem \ref {Thknown} to the wider class of $(\mathcal H_d,d)$-rectifiable sets whose Minkowski content of dimension $d$ coincides with the $d$-dimensional Hausdorff measure.   We note  that this result also extends relation (\ref {q}) of Theorem \ref {packing} to this class of sets.  The proof of this result will appear in Section~\ref{proof2.1}.
\begin {theorem}\label{rth1}
Let $A\subset \RR^{p}$ be an infinite compact $(\mathcal H_d,d)$-rectifiable set with $\mathcal H_d(A)=\mathcal M_d(A)$ and suppose that $w$ is a CPD-weight
function on $A\times A$.  If $s>d$, then
\begin {equation}\label {w3j}
\lim_{N\to \infty}{\frac{\mathcal
E^w_s(A,N)}{N^{1+s/d}}}=\frac{C_{s,d}}{\left[\mathcal{H}_d^{s,w}(A)\right]^{s/d}},
\end {equation}
where the constant $C_{s,d}$ is as in Theorem~\ref{packing}.

Furthermore, if ${\mathcal H_{d}(A)~>~0}$, any sequence $\W\omega_N=\{x^N_1,\ldots,x^N_N\}$   of asymptotically
($w,s$)-energy minimizing configurations on $A$  
    is uniformly distributed   with respect to the probability measure $h_d^{s,w}$ as $N\to \infty$.\end{theorem}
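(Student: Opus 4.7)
The plan is to prove the energy asymptotic (\ref{w3j}) by sandwiching $\mathcal{E}_s^w(A,N)/N^{1+s/d}$ between matching upper and lower bounds, reducing both to Theorem~\ref{Thknown}. The starting move is a decomposition: since $A$ is $(\mathcal{H}_d,d)$-rectifiable, write $A=Z\cup\bigcup_{i=1}^{\infty}A_i$ with $\mathcal H_d(Z)=0$ and each $A_i\subset A$ compact and $d$-rectifiable. Fix $\varepsilon>0$, truncate to a finite union $A_\varepsilon:=\bigcup_{i\le m}A_i$ with $\mathcal H_d(A\setminus A_\varepsilon)<\varepsilon$, and, after slightly shrinking each piece while preserving $\mathcal H_d^{s,w}(A_i)$ up to $O(\varepsilon)$, arrange the $A_i$ to be pairwise disjoint with ${\rm dist}(A_i,A_j)>0$ for $i\ne j$. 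All that then remains is to show that each of the two one-sided asymptotics converges, as $\varepsilon\to 0^+$, to the claimed limit (using monotone convergence to pass $\mathcal H_d^{s,w}(A_\varepsilon)\to \mathcal H_d^{s,w}(A)$).

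For the upper bound, distribute the points across the pieces as $N_i:=\lfloor N\alpha_i\rfloor$ with $\alpha_i:=\mathcal H_d^{s,w}(A_i)/\mathcal H_d^{s,w}(A_\varepsilon)$ (this allocation is what minimizes the sum of $A_i$-energies in the limit), choose on each $A_i$ an asymptotically $(w,s)$-energy minimizing $N_i$-configuration via Theorem~\ref{Thknown}, and take their union. Since distinct pieces have positive separation, cross-term interactions are bounded by $O(N^2)$, which is $o(N^{1+s/d})$ because $s>d\geq 1$. Summing Theorem~\ref{Thknown} over pieces and simplifying the weighted allocation collapses the sum to a single term, giving
\begin{equation*}
\limsup_{N\to\infty}\frac{\mathcal E_s^w(A,N)}{N^{1+s/d}}\le \frac{C_{s,d}}{[\mathcal H_d^{s,w}(A_\varepsilon)]^{s/d}}\xrightarrow[\varepsilon\to 0^+]{} \frac{C_{s,d}}{[\mathcal H_d^{s,w}(A)]^{s/d}}.
\end{equation*}

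For the lower bound, fix a near-optimal configuration $\omega_N^\ast\subset A$, split it as $\omega_N^\ast=\eta_N\cup\eta_N'$ with $\eta_N:=\omega_N^\ast\cap A_\varepsilon$, and discard all cross interactions. Applying Theorem~\ref{Thknown} piecewise on $A_\varepsilon$ (together with convexity of $t\mapsto t^{1+s/d}$ to handle the allocation of $\#\eta_N$ among the $A_i$) yields a lower bound of the form $C_{s,d}(\#\eta_N)^{1+s/d}/[\mathcal H_d^{s,w}(A_\varepsilon)]^{s/d}$. The remaining task is to show $\#\eta_N'/N\to 0$ as $\varepsilon\to 0^+$ uniformly in $N$: this is where the Minkowski content hypothesis $\mathcal M_d(A)=\mathcal H_d(A)$ enters. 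Combined with the trivial inequality $\underline{\mathcal M}_d(A_\varepsilon)\ge \mathcal H_d(A_\varepsilon)$ for the rectifiable core and a packing bound coming from the uniform separation distance $N^{-1/d}$ of near-optimal configurations (Theorem~\ref{packing} style), it forces $\overline{\mathcal M}_d(A\setminus A_\varepsilon)\le \varepsilon$, hence $\#\eta_N'=o(N)$ as $\varepsilon\to 0$. This closes the sandwich. The weak* statement then follows by a standard argument: any subsequential weak* limit $\mu$ of $N^{-1}\sum_{x\in\W\omega_N}\delta_x$ must satisfy $\mu(A_i)=h_d^{s,w}(A_i)$ for each $i$ and $\mu(Z)=0$, for otherwise the piecewise lower bound applied with the allocation $\mu(A_i)$ is strictly larger than $C_{s,d}/[\mathcal H_d^{s,w}(A)]^{s/d}$, contradicting asymptotic optimality.

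The main obstacle is the lower bound, and specifically the step ruling out that an asymptotically optimal configuration concentrates a non-vanishing fraction of its points in the $\mathcal H_d$-null remainder $Z$ or in the measure-$\varepsilon$ leftover $A\setminus A_\varepsilon$. For $d$-rectifiable $A$ this is automatic, but in the $(\mathcal H_d,d)$-rectifiable case it requires the Minkowski content equality $\mathcal M_d(A)=\mathcal H_d(A)$ to ensure that the ambient $(p-d)$-volume of $\varepsilon$-neighborhoods of the remainder is small, which in turn limits how many well-separated points can live there; without this hypothesis, a geometrically ``thick'' but $\mathcal H_d$-null piece of $A$ could absorb points undetected by $\mathcal H_d^{s,w}$ and spoil the identification of the limit measure.
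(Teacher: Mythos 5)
Your route is genuinely different from the paper's: the paper proves Theorem~\ref{rth1} in a few lines by first showing (Lemma~\ref{K}) that every compact $K\subset A$ inherits the property $\mathcal M_d(K)=\mathcal H_d(K)$, so that Theorem~\ref{packing} yields $g_{s,d}(K)=C_{s,d}\mathcal H_d(K)^{-s/d}$ for \emph{all} compact $K\subset A$, and then invoking the known Lemma~\ref{wth1} (Lemma~6 of \cite{BorHarSaf08}), which converts exactly that hypothesis into the weighted conclusion, including the weak* statement. You instead re-derive the content of that lemma by decomposing $A$ into bi-Lipschitz pieces and sandwiching via Theorem~\ref{Thknown}; this is a legitimate strategy (it is close in spirit to how Proposition~\ref{Th1''} is proved later in the paper for varying weights), but as written it has two gaps.

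First, asymptotically $(w,s)$-energy minimizing configurations do \emph{not} have ``uniform separation distance $N^{-1/d}$'': one can modify a minimizer by bringing one pair of points to within $N^{-1/d-1/s}\log N\ll N^{-1/d}$ of each other while changing the energy by only $o(N^{1+s/d})$. What is true, and what your packing count actually needs, is that the number of points of $\omega_N^\ast$ whose nearest neighbor is closer than $CN^{-1/d}$ is $O(C^sN)$, since each such point contributes at least $hC^{-s}N^{s/d}$ to an energy that is $O(N^{1+s/d})$; these points must be discarded first (as in Step~1 of the proof of Proposition~\ref{Th1''}) before the volume estimate on $A(h_N)\setminus A_\varepsilon(h_N)$ controls $\#\eta_N'$. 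Second, and more seriously, the weak* argument is incomplete: knowing that every subsequential limit $\mu$ satisfies $\mu(A_i)=h^{s,w}_d(A_i)$ for the countably many pieces of one fixed decomposition does not determine $\mu$ --- it says nothing about how $\mu$ distributes its mass \emph{within} a single piece $A_i$. To conclude $\mu=h^{s,w}_d$ you must run the two-set superadditivity argument for an \emph{arbitrary} almost clopen $B\subset A$, splitting $A$ into $\OL B$ and $\OL{A\setminus B}$; that in turn requires the energy asymptotics \eqref{w3j} (at least the lower bound) on arbitrary compact subsets of $A$, and hence requires knowing that such subsets inherit $\mathcal M_d=\mathcal H_d$ --- which is precisely the paper's Lemma~\ref{K}, an ingredient your outline does not supply.
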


The following theorem,  one of the main results of this paper, concerns asymptotic results in the case when the weight function includes a ``cut-off'' function depending on $N$. 
Given a sequence of non-negative weights $\mathbf{v}=\{v_N\}$ on $(A\times A)\setminus D(A)$, we say that a sequence of $N$-point configurations $\{\omega_N\}$ on $A$  is   {\em asymptotically $(\mathbf{v},s)$-energy minimizing}  if
$$
\lim_{N\to\infty}{\frac {E^{v_N}_s(\omega_N)}{\mathcal E^{v_N}_s(A,N)}}=1.
$$ 

 \medskip

\begin {theorem}\label {Th1}
Let   $A \subset\RR^p$  be a compact $(\mathcal H_d,d)$-rectifiable set with $\mathcal H_d(A)=\mathcal M_d(A)>0$ and let $w$ be a CPD-weight
function on $A\times A$.  Suppose $\Phi$ is a non-negative, bounded function on $(0,\infty)$ such that $\lim_{t\to 0^+}{\Phi(t)}=1$  and   $\{r_N\}_{N\in \NN}$ is a sequence of positive numbers such that 
\begin{equation}\label {r_N}
\lim_{N\to\infty}r_NN^{1/d}=\infty.  
\end{equation}
For $N\in \NN$, let $\mathbf{v}=\{v_N\}_{N\in\NN}$ denote the sequence of weights
\begin {equation}\label {v}
v_N(x,y):=\Phi\(\frac {\left|x-y\right|}{r_N}\)w(x,y),\ \ \ x,y\in A,\ \ x\neq y.
\end {equation}
If $s>d$, then
\begin{equation}\label {Thm2.2c}
\lim_{N\to \infty}{\frac {\mathcal E^{v_N}_{s}(A,N)}{N^{1+s/d}}}=\frac {C_{s,d}}{\[\mathcal H_d^{s,w}(A)\]^{s/d}},
\end {equation}
where the constant $C_{s,d}$ is as in Theorem~\ref{packing}.
Furthermore, any sequence of  
  asymptotically
$(\mathbf{v},s)$-energy minimizing $N$-point configurations on $A$ is uniformly distributed with respect to the probability measure $h_d^{s,w}$,
as $N\to \infty$. 
\end {theorem}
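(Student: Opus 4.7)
The plan is to sandwich $\mathcal E_s^{v_N}(A,N)/N^{1+s/d}$ between matching upper and lower bounds equal to $C_{s,d}/[\mathcal H_d^{s,w}(A)]^{s/d}$, and then deduce the weak* convergence of minimizers via strict convexity. For the upper bound I would fix $\epsilon>0$, pick $\delta\in(0,1)$ with $\Phi(t)\le 1+\epsilon$ on $(0,\delta)$, set $M:=\sup\Phi<\infty$, and test $E_s^{v_N}$ on a sequence $\{\omega_N^\ast\}$ of asymptotically $(w,s)$-energy minimizing configurations with separation $\gtrsim N^{-1/d}$ (such sequences are produced by the constructive half of the proof of Theorem~\ref{rth1} or by invoking Theorems~\ref{S1}--\ref{S2}). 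Splitting at pairwise distance $\delta r_N$, close pairs contribute at most $(1+\epsilon)E_s^w(\omega_N^\ast)\sim (1+\epsilon)\mathcal E_s^w(A,N)$, while a dyadic annulus count around each $x_i^\ast$ together with the separation and $s>d$ bounds the far contribution by $O(MN^2 r_N^{d-s})$, which is $o(N^{1+s/d})$ precisely because $r_N N^{1/d}\to\infty$. Letting $\epsilon\to 0$ and invoking Theorem~\ref{rth1} delivers the upper inequality in \eqref{Thm2.2c}.

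For the lower bound I would partition $A$ into finitely many Borel cells $A_1,\dots,A_K$ of diameter less than $\eta$, chosen so that the CPD-weight $w$ oscillates by at most $\epsilon$ on every $A_k\times A_k$ and each $A_k$ inherits from $A$ the properties of being $(\mathcal H_d,d)$-rectifiable with $\mathcal M_d(A_k)=\mathcal H_d(A_k)$. Selecting $\delta>0$ with $\Phi(t)\ge 1-\epsilon$ on $(0,\delta)$ and $N$ large enough that $\eta<\delta r_N$, every intra-cell pair satisfies $\Phi(|x-y|/r_N)\ge 1-\epsilon$, and discarding cross-cell interactions gives
\begin{equation*}
E_s^{v_N}(\omega_N)\ \ge\ (1-\epsilon)\sum_{k=1}^K\Bigl(\inf_{A_k\times A_k} w\Bigr)\,\mathcal E_s(A_k,n_k),\qquad n_k:=|\omega_N\cap A_k|.
\end{equation*}
Applying Theorem~\ref{packing} on each $A_k$, dividing by $N^{1+s/d}$, and minimizing the resulting strictly convex Riemann sum $\sum_k c_k(n_k/N)^{1+s/d}$ over $\sum_k n_k=N$ by Lagrange multipliers yields $\bigl[\sum_k c_k^{-d/s}\bigr]^{-s/d}$, which tends to $[\mathcal H_d^{s,w}(A)]^{-s/d}$ as $\eta\to 0$; sending also $\epsilon\to 0$ matches the upper bound.

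For the distribution assertion, let $\{\widetilde\omega_N\}$ be asymptotically $(\mathbf v,s)$-energy minimizing and let $\mu$ be any weak* subsequential limit of $\nu_N:=N^{-1}\sum_{x\in\widetilde\omega_N}\delta_x$. The cellwise lower bound with $n_k/N\to\mu(A_k)$ along the subsequence forces $\sum_k c_k\mu(A_k)^{1+s/d}\le [\mathcal H_d^{s,w}(A)]^{-s/d}$, and strict convexity of $t\mapsto t^{1+s/d}$ allows equality only at the Lagrange minimizer, namely $\mu(A_k)=h_d^{s,w}(A_k)$. Running this across a sequence of refining partitions identifies $\mu=h_d^{s,w}$, so the whole sequence $\nu_N$ converges weak* to $h_d^{s,w}$.

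The main obstacle, as I see it, is the lower bound: because $\Phi$ may vanish on $(\delta,\infty)$ one cannot exploit $v_N\ge(1-\epsilon)w$ globally, so one is forced into the partition argument, which in turn requires selecting cells that are still $(\mathcal H_d,d)$-rectifiable with $\mathcal M_d(A_k)=\mathcal H_d(A_k)$, on which $w$ varies little, and arranging the Riemann-sum convergence $\sum_k w_k^{-d/s}\mathcal H_d(A_k)\to\mathcal H_d^{s,w}(A)$. This geometric bookkeeping is inherited in spirit from the analogous partition performed in the proof of Theorem~\ref{rth1}.
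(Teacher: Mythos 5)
Your lower bound contains a quantifier error that breaks the argument. You fix a partition of mesh $\eta$ and then ask for ``$N$ large enough that $\eta<\delta r_N$,'' but the hypothesis is only $r_NN^{1/d}\to\infty$; in the motivating regime $r_N=C_NN^{-1/d}\to 0$, so for every fixed $\eta>0$ one has $\eta>\delta r_N$ for all large $N$. Intra-cell pairs at distance of order $\eta$ then receive weight $\Phi(|x-y|/r_N)$ with argument tending to infinity, which may be zero, and the displayed inequality $E_s^{v_N}(\omega_N)\ge(1-\epsilon)\sum_k(\inf_{A_k\times A_k}w)\,\mathcal E_s(A_k,n_k)$ is false: $\mathcal E_s(A_k,n_k)$ charges \emph{all} intra-cell pairs, while the truncated energy only sees those within $\delta r_N$. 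Repairing this requires showing that pairs at separation between $\delta r_N$ and $\eta$ contribute only $o(N^{1+s/d})$ to the unweighted energy of a near-minimizer --- which is essentially the content of the theorem itself. The paper does not use a fixed partition for the lower bound; it first proves the special case $\Phi\le 1$, $\Phi=0$ on $(1,\infty)$ (Proposition~\ref{Th1''}) by extracting from a near-minimizer a subset of cardinality at least $(1-\alpha_C)N$ with separation $\ge CN^{-1/d}$, pushing it onto a finite disjoint union of bi-Lipschitz images of compact subsets of $\RR^d$ (Lemma~\ref{Fed1}), and proving via an annulus count \emph{in $\RR^d$} (Lemma~\ref{zeta}) that pairs farther apart than $\sqrt{C_N}\,N^{-1/d}$ --- a scale that shrinks with $N$ but dwarfs the separation --- contribute $o(N^{1+s/d})$; the general bounded $\Phi$ is then handled through Theorem~\ref{C2} and Corollary~\ref{C1}. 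Your distribution argument inherits the same defect, since it rests on the cellwise lower bound.

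Your upper bound has a related gap. The ``dyadic annulus count'' giving $\#(\omega_N^\ast\cap B(x,r))\lesssim (rN^{1/d})^d$ for a configuration with separation $\gtrsim N^{-1/d}$ is valid for subsets of $\RR^d$ or of a $d$-regular set, but not for a general $(\mathcal H_d,d)$-rectifiable $A\subset\RR^p$ with $p>d$: a volume-packing argument in the ambient space only yields the exponent $p$, and the resulting far-field sum $\sum_i i^{p-1-s}$ diverges when $d<s\le p$. Making the count legitimate again requires the bi-Lipschitz reduction to $\RR^d$ (this is exactly Step 4 of the paper's proof of Proposition~\ref{Th1''}), whereas the paper's upper estimate for general $\Phi$ instead deduces the negligibility of the tail $P^w_s(\omega_N,r_N)$ from the already-proved special case via Corollary~\ref{C1}, with no direct counting on $A$. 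You would also need to justify the existence of well-separated asymptotically $(w,s)$-energy minimizing sequences for a CPD weight that need be neither bounded nor lower semi-continuous, so that exact minimizers need not exist and Theorem~\ref{S1} does not apply directly.
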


The proof of Theorem~\ref{Th1} is given in Sections~\ref{proof2.2} and \ref{proof2.2b}.

\medskip

\noindent{\bf Remark:}  Note that {\em any} compact set $A\subset \RR^d$ (i.e., $p=d$) is automatically a $(\mathcal H_d,d)$-rectifiable set with  $\mathcal H_d(A)=\mathcal M_d(A)$ and so the conclusions of Theorem~\ref{Th1} hold for any compact $A\subset \RR^d$  such that $\mathcal H_d(A)>0$.  The same is true for any $A\subset \RR^p$ that is compact and $d$-rectifiable. \\

Theorem \ref {Th1} implies, in particular, that the  minimal $(v_N,s)$-energy has the same asymptotic dominant term on a wide class of compact rectifiable sets as the minimal $(w,s)$-energy (for $s>d$).  

We note that  if $\Phi(t)=0$ for $t>1$, then the energy sum $
E^{v_N}_s(\omega_N)$ for this cutoff function simplifies since it only involves  pairs of points from $\omega_N$ that are no further than $r_N$ apart.
In the next section, we discuss the complexity of computing such sums in more detail. \\

Theorem \ref {Th1} is a consequence of a more general result, which we present next. It provides general conditions under which one can find the asymptotic behavior of the minimal weighted energy sum where the weight varies with $N$. In view of condition (b) of the definition of a CPD-weight, there is a number $\kappa>0$ such that $w(x,y)>0$, whenever $x,y\in A$ and $\left|x-y\right|<\kappa$.
Given a  non-negative function $v(x,y)$ on $A\times A\setminus D(A)$, for every $\delta\in (0,\kappa)$, define
\begin{equation}\label{Iwv}
I^w(v,\delta)=\inf\{\frac {v(x,y)}{w(x,y)} : (x,y)\in A\times A,\ 0<\left|x-y\right|\leq \delta\}
\end {equation}
and let
\begin{equation}
S^w(v,\delta)=\sup\{\frac {v(x,y)}{w(x,y)} : (x,y)\in A\times A,\ 0<\left|x-y\right|\leq \delta\}.
\end {equation}\label{Swv}

\begin {theorem}\label {C2}
Let $s>d$, $A\subset \RR^{p}$  be a compact $(\mathcal H_d,d)$-rectifiable set with $\mathcal H_d(A)=\mathcal M_d(A)>0$, $w$ be a CPD-weight
function on $A\times A$, and $\mathbf{v}=\{v_N\}_{N\in\NN}$ be a sequence of non-negative functions on $A\times A\setminus D(A)$ such that for some constant $M>0$, 
\begin{equation}\label{1"}
v_N(x,y)\leq Mw(x,y),\ \ \ (x,y)\in A\times A\setminus D(A), \  N\in\NN,
\end{equation}
and
\begin {equation}\label {1'}
\lim_{N\to\infty}{I^w(v_N,aN^{-1/d})}=\lim_{N\to\infty}{S^w(v_N,aN^{-1/d})}=1
\end {equation}
for every positive constant $a$.
Then
\begin {equation}\label {asympt''}
\lim_{N\to \infty}{\frac {\mathcal E^{v_N}_s(A,N)}{N^{1+s/d}}}=\frac {C_{s,d}}{\[\mathcal H^{s,w}_d(A)\]^{s/d}},
\end {equation}
where the constant $C_{s,d}$ is as in Theorem~\ref{packing}. 

Furthermore, any sequence of asymptotically $(\mathbf{v} , s)$-energy minimizing $N$-point configurations on $A$ is uniformly distributed with respect to the probability measure $h^{s,w}_d$ as $N\to \infty$.
\end {theorem}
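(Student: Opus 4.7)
\emph{Proof proposal.} My plan is to sandwich $\mathcal E^{v_N}_s(A,N)$ between two quantities both asymptotic to $C_{s,d}/[\mathcal H_d^{s,w}(A)]^{s/d}\,N^{1+s/d}$ by comparison with $\mathcal E^w_s(A,N)$, invoking Theorem~\ref{rth1} as the reference point. The comparison works because for $s>d$ the Riesz $s$-energy of any $N$-point configuration that is $cN^{-1/d}$-separated is concentrated (up to a tail that can be made arbitrarily small by choosing a parameter $a$ large) on pair interactions at distance $\leq aN^{-1/d}$, which is precisely the regime where \eqref{1'} forces $v_N/w\to 1$ uniformly.

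For the \emph{upper bound}, I would take a near-$(w,s)$-minimizer $\omega_N^\ast$ on $A$ with separation $\min_{i\neq j}|x_i^\ast-x_j^\ast|\geq cN^{-1/d}$, produced by the construction underlying Theorem~\ref{rth1}, and split the pair sum in $E^{v_N}_s(\omega_N^\ast)$ at threshold $aN^{-1/d}$. By \eqref{1'} the short part is at most $S^w(v_N,aN^{-1/d})\cdot E^w_s(\omega_N^\ast)=(1+o(1))E^w_s(\omega_N^\ast)$. For the long part, \eqref{1"} bounds $v_N\leq Mw$, and the tail $\sum_{|x-y|>aN^{-1/d}}w(x,y)/|x-y|^s$ for the separated configuration $\omega_N^\ast$ is $\varepsilon(a)N^{1+s/d}$ with $\varepsilon(a)\to 0$: a dyadic shell estimate based on the bound ``at most $C(rN^{1/d})^d$ points of $\omega_N^\ast$ in a ball of radius $r\geq cN^{-1/d}$'' together with $\sum_{k\geq 0}2^{k(d-s)}<\infty$ for $s>d$ gives the required decay. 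Letting $N\to\infty$ then $a\to\infty$ proves $\limsup_N \mathcal E^{v_N}_s(A,N)/N^{1+s/d}\leq C_{s,d}/[\mathcal H^{s,w}_d(A)]^{s/d}$.

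For the \emph{lower bound} and the distribution claim, let $\{\widetilde\omega_N\}$ be any $(\mathbf v,s)$-asymptotic minimizer sequence. The first task is to establish $\min_{i\neq j}|\widetilde x_i-\widetilde x_j|\geq cN^{-1/d}$: if two points of $\widetilde\omega_N$ lay at distance $o(N^{-1/d})$, then the $I^w$ half of \eqref{1'} combined with condition (b) (positivity of $w$ near the diagonal) would make the single offending pair contribute $\gg N^{1+s/d}$ to $E^{v_N}_s(\widetilde\omega_N)$, contradicting the upper bound just proved. With separation in hand, splitting again at $aN^{-1/d}$ and bounding $v_N\geq I^w(v_N,aN^{-1/d})\,w$ on the short pairs gives
\[
E^{v_N}_s(\widetilde\omega_N)\geq I^w(v_N,aN^{-1/d})\Bigl(E^w_s(\widetilde\omega_N)-\varepsilon(a)N^{1+s/d}\Bigr),
\]
where the tail $w$-sum is controlled by the same dyadic argument applied to the now-separated $\widetilde\omega_N$. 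Using $E^w_s(\widetilde\omega_N)\geq \mathcal E^w_s(A,N)$, Theorem~\ref{rth1}, and letting $N\to\infty$ then $a\to\infty$ yields the matching $\liminf$ bound and hence \eqref{asympt''}. The same inequality, once \eqref{asympt''} is established, forces $E^w_s(\widetilde\omega_N)/\mathcal E^w_s(A,N)\to 1$, so $\{\widetilde\omega_N\}$ is asymptotically $(w,s)$-energy minimizing and Theorem~\ref{rth1} delivers the weak-$*$ convergence of its normalized counting measures to $h^{s,w}_d$.

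The main obstacle will be the separation estimate for $(\mathbf v,s)$-asymptotic minimizers: standard moving-point arguments developed for a fixed CPD-weight must be rerun with the $N$-dependent $v_N$, taking care to apply \eqref{1'} only on pairs of distance $O(N^{-1/d})$ rather than on a fixed diagonal neighborhood. A secondary technicality is that a CPD-weight $w$ need not be globally bounded above (condition (c) only supplies boundedness on each $\{|x-y|\geq\delta\}$), so the dyadic tail estimates must be run on the $w$-weighted sum directly, exploiting that the leading $C_{s,d}/[\mathcal H^{s,w}_d(A)]^{s/d}\,N^{1+s/d}$ term of $E^w_s(\omega_N^\ast)$ already resides in the short-pair part, rather than by pulling out a global supremum of $w$.
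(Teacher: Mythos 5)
Your sandwich structure is the right shape, but three of its load-bearing steps have genuine gaps. First, the separation claim for asymptotic $(\mathbf{v},s)$-energy minimizers is false: a single pair at distance $\epsilon_N N^{-1/d}$ contributes only on the order of $\epsilon_N^{-s}N^{s/d}$ to the energy, which is $o(N^{1+s/d})$ whenever $\epsilon_N N^{1/s}\to\infty$, so an asymptotically minimizing sequence can contain pairs at distance $o(N^{-1/d})$ without disturbing the leading term; your single-offending-pair contradiction only yields separation of order $N^{-1/d-1/s}$. The paper does not prove separation for asymptotic minimizers --- in Step 1 of the proof of Proposition~\ref{Th1''} it instead \emph{removes} the badly separated points and shows the surviving configuration $\omega_{N,C}$ still carries a $(1-\alpha_C)$ fraction of the points. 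Second, your dyadic tail estimate needs the count ``at most $C(rN^{1/d})^d$ points in a ball of radius $r$,'' but separation of order $N^{-1/d}$ in $\RR^p$ only gives the exponent $p$, and $\sum_k 2^{k(p-s)}$ diverges for $d<s\le p$; the $d$-dimensional count is not available on a general compact $(\mathcal H_d,d)$-rectifiable set (it is essentially $d$-regularity, cf.\ Proposition~\ref{P2'}). This is why the paper pushes the configuration onto finitely many bi-Lipschitz images of compact subsets of $\RR^d$ (Lemma~\ref{Fed1}, Steps 2--4) and runs the shell estimate in $\RR^d$ via Lemma~\ref{zeta}; for the upper-bound tail it avoids geometry altogether, obtaining $P^w_s(\omega'_N,r_N)=o(N^{1+s/d})$ for a $(w,s)$-asymptotically minimizing sequence from the identity $E^w_s=E^{v_N}_s+P^w_s$ once the cutoff asymptotics of Proposition~\ref{Th1''} are known (Corollary~\ref{C1}).

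Third, and most importantly, your route to the distribution statement --- deduce from the matching asymptotics that $E^w_s(\W\omega_N)/\mathcal E^w_s(A,N)\to1$ and then quote Theorem~\ref{rth1} --- is exactly the argument the authors flag as unavailable: in Section~\ref{proof2.2} they write that it ``might appear'' that asymptotic $(\mathbf{v},s)$-minimizers are asymptotically $(w,s)$-minimizing, ``however, the authors have not as yet found such an argument.'' Your derivation of it depends on the tail bound $\sum_{|x-y|>aN^{-1/d}}w(x,y)\left|x-y\right|^{-s}\le\varepsilon(a)N^{1+s/d}$ holding for the asymptotic $(\mathbf{v},s)$-minimizers themselves, which collapses together with the separation claim. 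The paper instead shows that asymptotic $(\mathbf{v},s)$-minimizers are asymptotically $(\mathbf{u}^0,s)$-minimizing for the sharp-cutoff weights $u^0_N(x,y)=\chi_{[0,1]}(\left|x-y\right|/r_N)w(x,y)$, and the distribution conclusion comes from rerunning the almost-clopen decomposition argument for varying weights inside the proof of Proposition~\ref{Th1''}. To repair your proof you would need either that intermediate result or an independent proof of the tail bound for asymptotic minimizers.
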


The proof Theorem~\ref{C2} is given in Section~\ref{proof2.2b}.
\medskip

We next find conditions that guarantee that a sequence of $\(v_N, s\)$-energy minimizing $N$-point configurations is {\em quasi-uniform}, that is, the ratios of the covering radius\footnote{The covering radius of a configuration (relative to a set $A$) is also referred to as the {\em fill radius}  or the {\em mesh-norm} of the configuration.} to the separation distance of the configurations stay bounded as   $N\to\infty$.  For a point configuration $X$ in $\RR^{p}$, we define its {\em separation distance} by
\begin{equation}\label{sepdef}
\delta(X):= \inf_{x,y\in X \atop x\neq y}{\left|x-y\right|},
\end{equation}
and its  {\em covering radius relative to a set $A$ in $\RR^p$} by
\begin{equation}\label{covdef}
\rho(X,A):=\sup_{y\in A}\inf_{x\in X} {\left|x-y\right|}.
\end{equation}
We shall establish quasi-uniformity of $\(v_N, s\)$-energy minimizing $N$-point configurations $\omega_N^s$ in $A$ by showing that both
$\delta(\omega_N^s)$ and $\rho(\omega_N^s, A)$ are of order $N^{-1/d}$. 

\begin {theorem}\label {S1}
Let $s>d$,  $A\subset \RR^{p}$ be a compact  set with $ \mathcal H_d (A)>0$, and $\{v_N\}$ be a uniformly bounded sequence of non-negative lower semi-continuous functions on $A\times A$ such that for $N$ sufficiently large, there holds
\begin{equation}\label{S1cond}
  v_N(x,y)>\alpha_0, \qquad (x,y)\in A\times A,\ 0<\left|x-y\right|\leq a_0N^{-1/d},
\end{equation} for some    positive constants $a_0$ and $\alpha_0$. Then for every sequence $\{\omega^s_N\}$ of $N$-point  $(v_N,s)$-energy minimizing configurations on $A$, there holds
\begin{equation}\label{S1eq1}
\liminf_{N\to\infty}\delta(\omega^s_N)N^{1/d}>0.
\end{equation}
\end {theorem}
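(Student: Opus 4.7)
The plan is the classical perturb-one-point argument familiar from minimum Riesz energy analysis. Let $\omega_N^s=\{x_1^N,\ldots,x_N^N\}$ be a $(v_N,s)$-energy minimizing configuration, whose existence follows from the lower semi-continuity of $v_N$ together with the fact that $v_N(x,y)/|x-y|^s\to\infty$ as $y\to x$ (so minimizing sequences avoid the diagonal). Set $V_N(x,y):=v_N(x,y)+v_N(y,x)$, so that $V_N\le 2M$ everywhere and $V_N(x,y)\ge 2\alpha_0$ whenever $0<|x-y|\le a_0N^{-1/d}$. For each index $i$, define the external potential
$$
U_i(y):=\sum_{k\neq i}\frac{V_N(y,x_k^N)}{|y-x_k^N|^s},\qquad y\in A\setminus\{x_k^N:k\neq i\}.
$$
Replacing the point $x_i^N$ by any admissible $y$ alters the total $(v_N,s)$-energy by exactly $U_i(y)-U_i(x_i^N)$, so minimality forces $U_i(x_i^N)\le U_i(y)$ for every such $y$.

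Suppose, toward a contradiction, that $\delta_N:=\delta(\omega_N^s)$ satisfies $\delta_N N^{1/d}\to 0$ along a subsequence, and (passing to that subsequence and relabeling) that $|x_1^N-x_2^N|=\delta_N\le a_0N^{-1/d}$. Retaining only the close-pair term in $U_1$ at the point $x_1^N$ gives the lower estimate
$$
U_1(x_1^N)\ge \frac{V_N(x_1^N,x_2^N)}{\delta_N^s}\ge \frac{2\alpha_0}{\delta_N^s}.
$$

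The heart of the argument is the matching upper bound $U_1(x_1^N)\le C\,N^{s/d}$, for which I would use a Frostman averaging. Since $\mathcal H_d(A)>0$, after passing to a compact subset of finite positive $\mathcal H_d$-measure if necessary, Frostman's lemma produces a nontrivial finite Borel measure $\mu$ on $A$ with $\mu(B(x,r))\le C_0 r^d$ for all $x\in\RR^p$, $r>0$. Choose $c_1>0$ so small that $C_0 c_1^d<\mu(A)/2$, and set
$$
A_N:=A\setminus\bigcup_{k\neq 1}B\bigl(x_k^N,c_1N^{-1/d}\bigr),
$$
so that $\mu(A_N)>\mu(A)/2$ by a union bound. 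A dyadic-annulus decomposition combined with the Frostman bound and $s>d$ yields the uniform estimate
$$
\int_{|y-x|\ge\rho}\frac{d\mu(y)}{|y-x|^s}\le C_2\,\rho^{d-s};
$$
applied with $\rho=c_1N^{-1/d}$ and summed over the $N-1$ centers $x_k^N$, this gives $\int_{A_N}U_1\,d\mu\le C_3\,N^{s/d}$. Hence some $y^*\in A_N$ satisfies $U_1(y^*)\le C_4 N^{s/d}$, and since $y^*$ differs from every other center, the minimality inequality forces $U_1(x_1^N)\le U_1(y^*)\le C_4 N^{s/d}$.

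Combining the two bounds yields $2\alpha_0/\delta_N^s\le C_4 N^{s/d}$, i.e., $\delta_N\ge (2\alpha_0/C_4)^{1/s}\,N^{-1/d}$, contradicting $\delta_N N^{1/d}\to 0$. The principal obstacle is the Frostman averaging step: one needs a measure on $A$ with positive total mass and an upper $d$-dimensional ball bound, which is exactly what the hypothesis $\mathcal H_d(A)>0$ supplies; once such a $\mu$ is in hand, the truncated Riesz integral estimate is routine because $s>d$.
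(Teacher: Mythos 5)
Your proposal is correct and follows essentially the same route as the paper: the perturb-one-point minimality inequality, a Frostman measure with the upper bound $\mu(B(x,r))\le C_0r^d$, averaging the external potential over $A$ minus balls of radius $\asymp N^{-1/d}$ about the other points (keeping at least half the mass), and finally comparing the resulting $O(N^{s/d})$ upper bound with the single close-pair lower bound $\alpha_0/\delta(\omega_N^s)^s$ supplied by condition \eqref{S1cond}. The only cosmetic differences are your explicit symmetrization $V_N$ and your phrasing as a proof by contradiction, neither of which changes the substance.
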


We note that Theorem~\ref{S1}  holds under the assumptions on $s$, $A$,  and $\{v_N\}$ in Theorem~\ref{C2} provided that 
the $v_N$'s are uniformly bounded and lower semi-continuous on $A\times A$.  

For the next result concerning the covering radius, we recall the notion of a {\em   $d$-regular} set.  A compact set $\tilde A\subset \RR^p$ is said to be {\em   $d$-regular} if
there exists   a finite positive Borel measure $\mu$ supported on $\tilde A$ that is both upper and lower $d$-regular, that is, 
there are   positive constants $c_0, C_0$ such that
\begin{equation}\label{mureg}
c_0^{-1}r^d\le  \mu(B(x,r))\le C_0r^d, \qquad (x\in \tilde A, \, 0<r<\diam \tilde A),
\end{equation}
where   $B(x,r)$ denotes the open ball in $\RR^{p}$ centered at $x$ of radius $r>0$.

\begin {theorem}\label {S2}
Assume that $s$, $A$,  and $\{v_N\}$ are as in Theorem~\ref{C2}.  In addition, assume that the $v_N$'s are uniformly bounded and lower semi-continuous on $A\times A$, and that $A$ is a subset of a   $d$-regular set $\tilde A\subset\RR^p$.  Then for every sequence $\{\omega^s_N\}$ of $N$-point configurations on $A$ such that $\omega^s_N$ minimizes the $(v_N,s)$-energy, $N\in \NN$, there holds
\begin{equation}\label{limsuprho}
\limsup_{N\to\infty}\rho(\omega^s_N,A)N^{1/d}<\infty.
\end{equation}
\end {theorem}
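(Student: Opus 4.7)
The plan is a point-swapping argument by contradiction. Suppose the conclusion fails; then, passing to a subsequence, one may assume $\rho_N:=\rho(\omega^s_N,A)$ satisfies $\rho_N N^{1/d}\to\infty$, and I choose $y_N\in A$ realizing the covering radius so that $|y_N-x|\ge \rho_N$ for every $x\in\omega^s_N$. A preliminary step is to observe that Theorem~\ref{S1} applies under these hypotheses: the CPD property of $w$ yields a neighborhood of the diagonal on which $w\ge w_0>0$, and then \eqref{1'} forces $v_N(x,y)>w_0/2$ whenever $0<|x-y|\le a_0 N^{-1/d}$ for $N$ large, which is the hypothesis of Theorem~\ref{S1}. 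This produces $\delta_N:=\delta(\omega^s_N)\ge cN^{-1/d}$ for some $c>0$ and all large $N$.

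Write $\omega^s_N=\{x^N_1,\ldots,x^N_N\}$, and set $U^N_k:=\sum_{j\ne k}v_N(x^N_k,x^N_j)/|x^N_k-x^N_j|^s$ and $U^N_y:=\sum_{j=1}^N v_N(y_N,x^N_j)/|y_N-x^N_j|^s$ (one may symmetrize $v_N$ without affecting any energy sum). Averaging and the lower bound $E^{v_N}_s(\omega^s_N)\ge c_1 N^{1+s/d}$ supplied by Theorem~\ref{C2} give
\[
\max_k U^N_k \ge \frac{E^{v_N}_s(\omega^s_N)}{N}\ge c_1 N^{s/d}.
\]
For $U^N_y$, I would use the uniform bound $v_N\le M_0$ and partition the indices by dyadic shells $2^{m-1}\rho_N\le |y_N-x^N_j|<2^m\rho_N$, $m\ge 1$. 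Since $y_N$ and each $x^N_j$ lie in $A\subseteq \tilde A$, the disjoint balls $B(x^N_j,\delta_N/2)$, intersected with $\tilde A$, fit inside $B(y_N,2^m\rho_N+\delta_N/2)\cap\tilde A$; comparing $\mu$-measures via the upper and lower $d$-regularity bounds on $\tilde A$ yields at most $C(2^m\rho_N/\delta_N)^d$ points in the $m$-th shell. Summing the resulting geometric series, convergent because $s>d$, gives
\[
U^N_y\le C' M_0\,\rho_N^{d-s}\,\delta_N^{-d}\le C''\,N\,\rho_N^{d-s}.
\]

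Combining the two estimates,
\[
\frac{U^N_y}{\max_k U^N_k}\le \frac{C''}{c_1}\,(\rho_N N^{1/d})^{d-s}\longrightarrow 0,
\]
since $s>d$ and $\rho_N N^{1/d}\to\infty$. Choosing $k_*$ to maximize $U^N_k$, the replacement $\omega'_N:=(\omega^s_N\setminus\{x^N_{k_*}\})\cup\{y_N\}$ is an admissible $N$-point configuration on $A$ whose energy differs from that of $\omega^s_N$ by $2(\tilde U_y^N-U^N_{k_*})<0$ (where $\tilde U_y^N\le U_y^N$ omits the $j=k_*$ term), contradicting minimality. The main technical point I anticipate is precisely the dyadic shell-counting bound: because $A$ itself need only be $(\mathcal H_d,d)$-rectifiable, no two-sided measure-regularity is available on $A$, and the packing estimate must be conducted inside the auxiliary set $\tilde A$, using the containments $y_N,x^N_j\in \tilde A$ and the $d$-regular measure $\mu$ on $\tilde A$ to control both $\mu\bigl(B(y_N,r)\cap\tilde A\bigr)$ from above and $\mu\bigl(B(x^N_j,\delta_N/2)\cap\tilde A\bigr)$ from below.
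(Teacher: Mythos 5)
Your proof is correct and follows essentially the same strategy as the paper's: both arguments combine the separation bound of Theorem~\ref{S1}, the energy lower bound $\mathcal E_s^{v_N}(A,N)\ge C_1N^{1+s/d}$ from Theorem~\ref{C2}, and a packing/measure comparison carried out inside the $d$-regular superset $\tilde A$ (using disjoint balls of radius comparable to $\delta(\omega_N^s)/2$ about the configuration points) to show that the potential at a deep hole $y^*$ would be too small relative to the potential at the configuration points. The differences are only in packaging: the paper sums the exchange inequality over all $j$ to obtain the pointwise bound $U(y)\ge \mathcal E_s^{v_N}(A,N)/N^2$ for every $y\in A$ and estimates the far-field potential by the layer-cake integral against the $d$-regular measure, whereas you exchange with the single point of maximal potential and count points in dyadic annuli; both routes reduce to the same comparison of $\rho(\omega_N^s,A)^{d-s}$ against $N^{(s-d)/d}$.
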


The proofs of Theorems~\ref{S1} and \ref{S2} are given in Section~\ref{sepsec}.

\bigskip

In applications with a non-uniform limiting density, it can be useful to allow the `cutoff' radius $r_N=r_N(x,y)$ in \eqref{r_N} to depend on $(x,y)\in A\times A$.  The following immediate corollary of Theorems~\ref{C2}, \ref{S1} and \ref{S2} addresses this case. 

\begin{corollary} \label{CorC2} Let $s>d$, $A\subset \RR^{p}$   be a compact $(\mathcal H_d,d)$-rectifiable set with $\mathcal H_d(A)=\mathcal M_d(A)>0$ and let $w$ be a CPD-weight
function on $A\times A$. 
Suppose $r_N:A\times A\to (0,\infty)$ is a symmetric function such that 
\begin{equation}
\label{rncond}r_N(x,y)N^{1/d}\to \infty
\end{equation} uniformly on $A\times A$ as $N\to \infty$, and  $\Phi:(0,\infty)\to [0,\infty)$ is bounded and satisfies $\lim_{t\to 0^+}\Phi(t)=1$.  
For $N\ge 1$, let
\begin {equation}\label {rxy}
v_N(x,y):=\Phi\(\frac {\left|x-y\right|}{r_N(x,y)}\)w(x,y),\ \ \ x,y\in A,\ \ x\neq y.
\end {equation}
Then the conclusions of Theorem~\ref{C2} hold. 

If, in addition,  each $v_N$  is lower semi-continuous, $w$ is bounded, and $A$ is contained in a  $d$-regular set $\tilde A\subset\RR^p$, then every sequence   of  $N$-point $(v_N,s)$-energy minimizing   configurations on $A$  is quasi-uniform on $A$.  
\end{corollary}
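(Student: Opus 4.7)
The plan is to verify, in order, the hypotheses of Theorems~\ref{C2}, \ref{S2}, and \ref{S1} for the sequence $\mathbf{v}=\{v_N\}$ defined by \eqref{rxy}. The entire argument hinges on the single observation that \eqref{rncond} forces the argument $|x-y|/r_N(x,y)$ of $\Phi$ to tend to $0$ \emph{uniformly} in $(x,y)$ on every region of the form $\{(x,y)\in A\times A:0<|x-y|\le aN^{-1/d}\}$, $a>0$.

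To apply Theorem~\ref{C2}, set $M:=\sup_{t>0}\Phi(t)$, which is finite since $\Phi$ is bounded; then $v_N(x,y)\le Mw(x,y)$ on $A\times A\setminus D(A)$, giving \eqref{1"}. For \eqref{1'}, fix $a>0$; on the set $\{0<|x-y|\le aN^{-1/d}\}$ one has
$$0<\frac{|x-y|}{r_N(x,y)}\le\frac{a}{r_N(x,y)N^{1/d}},$$
and the right-hand side tends to $0$ uniformly in $(x,y)$ by \eqref{rncond}. Since $\lim_{t\to 0^+}\Phi(t)=1$, the composition $\Phi(|x-y|/r_N(x,y))$ converges to $1$ uniformly on this region. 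Condition (b) of the CPD definition supplies a $\kappa>0$ with $w>0$ on $\{(x,y)\in A\times A:0<|x-y|<\kappa\}$, so for $N$ large enough that $aN^{-1/d}<\kappa$ the ratio $v_N/w=\Phi(|x-y|/r_N(x,y))$ is well-defined on the region, and \eqref{1'} follows. Theorem~\ref{C2} then yields \eqref{asympt''} and the weak* convergence of the normalized counting measures to $h_d^{s,w}$.

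For the quasi-uniformity statement, the added assumption that $w$ is bounded combined with $\Phi\le M$ gives the uniform bound $v_N\le M\|w\|_\infty$, and lower semi-continuity of each $v_N$ is assumed. Theorem~\ref{S2} therefore applies directly (using $A\subset\tilde A$ with $\tilde A$ being $d$-regular) and produces $\limsup_{N\to\infty}\rho(\omega_N^s,A)N^{1/d}<\infty$. To invoke Theorem~\ref{S1} I must verify \eqref{S1cond}: by condition (b) of the CPD definition and the compactness of $A\times A$ one can choose $a_0,c>0$ with $w(x,y)\ge c$ whenever $0<|x-y|\le a_0$, and the uniform convergence established above (with $a=a_0$) yields $\Phi(|x-y|/r_N(x,y))\ge 1/2$ on $\{0<|x-y|\le a_0N^{-1/d}\}$ for all sufficiently large $N$. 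Hence $v_N(x,y)\ge c/2=:\alpha_0$ on that set, and Theorem~\ref{S1} provides $\liminf_{N\to\infty}\delta(\omega_N^s)N^{1/d}>0$, completing the quasi-uniformity argument.

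No substantive analytic obstacle arises; the proof is essentially a bookkeeping exercise that repackages the heavy lifting already done in Theorems~\ref{C2}, \ref{S1}, and \ref{S2}. The one point worth flagging is that \eqref{rncond} is assumed to hold \emph{uniformly} on $A\times A$, which is exactly the strength needed to push $\Phi(\cdot)\to 1$ through the supremum and infimum appearing in $I^w$ and $S^w$; mere pointwise divergence of $r_N(x,y)N^{1/d}$ would in general be insufficient to establish either \eqref{1'} or \eqref{S1cond}.
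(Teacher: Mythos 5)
Your proposal is correct and takes essentially the same route as the paper, which dispatches the corollary in one line by noting that $v_N$ satisfies \eqref{1"} and \eqref{1'} and then invokes Theorems~\ref{C2}, \ref{S1} and \ref{S2}; your write-up simply makes explicit the uniform-convergence argument (pushing $\Phi(\cdot)\to 1$ through the supremum and infimum in $S^w$ and $I^w$ via the uniformity in \eqref{rncond}) and the verification of \eqref{S1cond} that the paper leaves to the reader.
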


Indeed with $v_N$, $r_N$ and $\Phi$ as in the above corollary, it is easy to verify that  $v_N$  satisfies \eqref{1"} and \eqref{1'}.

\bigskip

Finally, we further elucidate the behavior of a sequence of weights $\{v_N\}$ satisfying conditions of form \eqref {1'} in Theorem~\ref{C2}. Given a non-negative function $v$ on $A\times A\setminus D(A)$ and a point $x_0\in A$, let
$$
L(v,x_0):=\limsup _{(x,y)\to (x_0,x_0)\atop x\neq y}{v(x,y)}\ \ \ {\rm and }\ \ \ l(v,x_0):=\liminf _{(x,y)\to (x_0,x_0)\atop x\neq y}{v(x,y)}.
$$
\begin {proposition}\label {P3}
Let $w$ be a CPD-weight defined on $A\times A$ and $\{v_N\}$ be a sequence of non-negative functions on $A\times A\setminus D(A)$. If for some positive sequence $\{\alpha_N\}$ that tends to zero, one has
\begin {equation}\label {alpha_N}
\lim_{N\to\infty}{I^w(v_N,\alpha_N)}=\lim_{N\to\infty}{S^w(v_N,\alpha_N)}=1
\end {equation}
(in particular, if condition (\ref {1'}) holds), then
\begin {equation}\label {UCv_n}
\lim\limits_{N\to \infty}{\frac {L(v_N,x_0)}{L(w,x_0)}}=1\ \ \ and \ \ \ \lim\limits_{N\to \infty}{\frac {l(v_N,x_0)}{l(w,x_0)}}=1
\end {equation}
uniformly over $x_0\in A$.

If, in addition, $w$ is continuous on $D(A)$, then condition \eqref {alpha_N} holds for some positive sequence $\{\alpha_N\}$ with zero limit if and only if
\begin {equation}\label {x_0}
\lim_{N\to\infty}{L(v_N,x_0)}=\lim_{N\to\infty}{l(v_N,x_0)}=w(x_0,x_0)
\end {equation}
with both sequences converging uniformly for $x_0\in A$.
\end {proposition}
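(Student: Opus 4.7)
The proposition has two assertions: (I) that \eqref{alpha_N} implies \eqref{UCv_n} uniformly in $x_0\in A$; and (II) that, under the added hypothesis that $w$ is continuous on $D(A)$, condition \eqref{alpha_N} is equivalent to \eqref{x_0}. My plan is to address them in order.

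For (I), note that \eqref{alpha_N} is exactly the statement that, for every $\epsilon>0$, there is $N_0$ with $(1-\epsilon)w(x,y)\le v_N(x,y)\le (1+\epsilon)w(x,y)$ whenever $N\ge N_0$ and $(x,y)\in A\times A$ with $0<|x-y|\le \alpha_N$ (positivity of $w$ near $D(A)$, from CPD condition (b), makes $I^w$ and $S^w$ meaningful in the first place). Fix $x_0\in A$; for any sequence $(x,y)\to(x_0,x_0)$ with $x\ne y$, eventually $|x-y|\le\alpha_N$, and taking limsup and liminf in the above inequalities gives
\[
(1-\epsilon)L(w,x_0)\le L(v_N,x_0)\le(1+\epsilon)L(w,x_0),
\]
and analogously for $l$. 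Since $l(w,x_0)\ge \inf_G w>0$, dividing by $L(w,x_0)$ (or $l(w,x_0)$) is legitimate---with the natural convention $\infty/\infty=1$ when $L(w,x_0)$ is infinite---and yields both ratio statements of \eqref{UCv_n} uniformly in $x_0$.

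For the forward direction of (II), continuity of $w$ on $D(A)$ forces $L(w,x_0)=l(w,x_0)=w(x_0,x_0)$, a continuous hence bounded function on the compact set $A$. Multiplying the uniform ratio bound from (I) by $w(x_0,x_0)$ converts it into uniform convergence of the values $L(v_N,x_0),l(v_N,x_0)$ to $w(x_0,x_0)$, which is \eqref{x_0}.

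The converse direction of (II) is the principal technical obstacle: from pointwise-in-$x_0$ limsup/liminf control (uniform over $x_0$) we must produce a single sequence $\alpha_N\to 0$ along which $v_N/w$ is uniformly close to $1$ on the annular region $\{0<|x-y|\le\alpha_N\}$. Given $\epsilon>0$, \eqref{x_0} furnishes $N_0$ with $|L(v_N,x_0)-w(x_0,x_0)|<\epsilon/3$ and $|l(v_N,x_0)-w(x_0,x_0)|<\epsilon/3$ for $N\ge N_0$ and every $x_0\in A$. The definitions of $L$ and $l$ then yield, for each such $N$ and $x_0$, a positive radius $\delta(N,x_0)$ with $|v_N(x,y)-w(x_0,x_0)|<\epsilon/2$ on $\{\max(|x-x_0|,|y-x_0|)<\delta(N,x_0),\ x\ne y\}$. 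A standard sequential-compactness argument, using continuity of $w$ at every diagonal point together with compactness of $A$, produces a modulus $\delta^*>0$ such that $|w(x,y)-w(x_0,x_0)|<\epsilon/2$ whenever $\max(|x-x_0|,|y-x_0|)<\delta^*$; after replacing $\delta(N,x_0)$ by $\min(\delta(N,x_0),\delta^*)$ we obtain $|v_N(x,y)-w(x,y)|<\epsilon$ on the same local neighborhood. For each fixed $N\ge N_0$, a finite subcover of $A$ by balls $B(x_0^{(i)},\delta(N,x_0^{(i)})/2)$ combined with the triangle inequality shows that the positive number $\eta_N:=\min_i \delta(N,x_0^{(i)})/2$ has the property that every $(x,y)$ with $|x-y|<\eta_N$ lies in one of the prescribed neighborhoods; hence $|v_N(x,y)/w(x,y)-1|<2\epsilon/\inf_G w$ uniformly on $\{0<|x-y|<\eta_N\}$. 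Applying this with $\epsilon=1/k$ for $k=1,2,\ldots$ yields increasing $N_k$ and positive $\eta_N^{(k)}$, and defining $\alpha_N:=\min(\eta_N^{(k)},1/k)$ for $N_k\le N<N_{k+1}$ produces a positive sequence tending to $0$ for which \eqref{alpha_N} holds.
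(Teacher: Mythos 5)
Your argument is correct and matches the paper's proof in all essentials: part (I) is obtained by sandwiching $L(v_N,x_0)$ and $l(v_N,x_0)$ between $I^w(v_N,\alpha_N)$- and $S^w(v_N,\alpha_N)$-multiples of the corresponding quantities for $w$, and the converse in part (II) by localizing the control of $v_N-w$ near each diagonal point, extracting a uniform radius through compactness and a finite subcover, and then diagonalizing over $\epsilon=1/k$. The only cosmetic differences are that you cover $A$ by half-radius balls and take a minimum where the paper covers $D(A)$ inside $A\times A$ and uses the distance to the complement of the union, and that you treat $I^w$ and $S^w$ simultaneously via a two-sided bound where the paper handles the upper and lower estimates separately.
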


The proof of Proposition~\ref{P3} is given in  Appendix~\ref{P3proof}. 

\medskip

If $L(w,x_0)=\infty$ in \eqref {UCv_n}, we agree that $L(v_N,x_0)/L(w,x_0)=0$ when $L(v_N,x_0)<\infty$ and $L(v_N,x_0)/L(w,x_0)=1$  when $L(v_N,x_0)=\infty$. A similar agreement is also in place for the lower limits $l(\cdot,x_0)$.

We also remark that if the limit of $w$   at some point $(x_0,x_0)\in D(A)$ does not exist, one can construct a sequence of weights $\{v_N\}$ such that \eqref {alpha_N} fails for any positive sequence $\{\alpha_N\}$ converging to zero. This can be done even if $w$ is assumed to be bounded on $A\times A$.

\section{Complexity estimates and numerical experiments} \label{section:Comp}

Throughout this section we assume that $\Phi$ is a `cutoff' function as in Theorem \ref {Th1}  such that $$\Phi(t)=0 \text{ for $t>1$.}$$  
For such $\Phi$, we consider the complexity of evaluating 
\begin {equation}\label {u^0}
f(x_1,\ldots, x_N):=E^{v_N}_s(\omega_N)=\sum_{(i,j):i\neq j}{\Phi\(\frac {\left|x_i-x_j\right|}{r_N(x_i,x_j)}\)\frac {w(x_i,x_j)}{\left|x_i-x_j\right|^s}}, 
\end {equation}
where $\omega_N=\{x_1,\ldots, x_N\}$. Assuming  $\Phi$, $r_N$, and $w$ are sufficiently smooth, and $A$ is a compact set in $\RR^d$ of positive Lebesgue measure with boundary of measure zero, we also shall consider the complexity of evaluating
 the  gradient of $f$; i.e., the vector in $\RR^{Nd}$ with    $(d  (i-1)+\ell)^{\rm th}$ component given by
  \begin {equation}\label {u^1}
\partial_{x_{i,\ell}} f(x_1,\ldots, x_N)=  2\sum_{ j: j\neq i }\partial_{x_{i,\ell}}\left({\Phi\(\frac {\left|x_i-x_j\right|}{r_N(x_i,x_j)}\)\frac {w(x_i,x_j)}{\left|x_i-x_j\right|^s}}\right), 
\end {equation} where $x_{i,\ell}$ denotes the $\ell^{th}$ component of $x_i$
for $i=1,\ldots, N$ and $\ell=1,\ldots, d$, as well as the complexity of evaluating the Hessian of $f$; i.e.,    the $Nd\times Nd$ matrix with $(d  (i-1)+\ell,d (j-1)+k)$ component  given by
\begin {equation}\label {u^2}
\partial_{x_{i,\ell}}\partial_{x_{j,k}} f(x_1,\ldots, x_N)=   2\partial_{x_{i,\ell}}\partial_{x_{j,k}}\left({\Phi\(\frac {\left|x_i-x_j\right|}{r_N(x_i,x_j)}\)\frac {w(x_i,x_j)}{\left|x_i-x_j\right|^s}}\right)  
\end {equation}
for $1\le i\neq j\le N$ and $\ell,k=1,\ldots, d$,  and 
\begin {equation}\label {u^3}
\partial_{x_{i,\ell}}\partial_{x_{i,k}} f(x_1,\ldots, x_N)=   2\sum_{ j: j\neq i }\partial_{x_{i,\ell}}\partial_{x_{i,k}}\left({\Phi\(\frac {\left|x_i-x_j\right|}{r_N(x_i,x_j)}\)\frac {w(x_i,x_j)}{\left|x_i-x_j\right|^s}}\right)  
\end {equation}
for $i=1,\ldots, N$ and 
$\ell,k=1,\ldots, d$.

The number of non-zero terms in \eqref{u^0} of the form 
\begin {equation}\label {w*}
\Phi\(\frac {\left|x_i-x_j\right|}{r_N(x_i,x_j)}\)\frac {w(x_i,x_j)}{\left|x_i-x_j\right|^s}
\end {equation} 
does not exceed  the cardinality of $\{(x,y)\in \omega_N\times \omega_N : 0<\left|x-y\right|\leq r_N(x,y)\}$, and so, if $r_N(x,y)\le \delta_N$ for all $x,y\in A$, then    the quantity
\begin {equation}\label {Zdef}
Z(\omega_N,\delta_N):=\# \{(x,y)\in \omega_N\times \omega_N : 0<\left|x-y\right|\leq \delta_N\},
\end {equation} 
 times the maximal complexity of evaluating a single term 
provides an upper bound for the complexity of computing $E^{v_N}_s(\omega_N)$.   Similarly, the number of nonzero terms of the form 
$$
\partial_{x_{i,\ell}}\left({\Phi\(\frac {\left|x_i-x_j\right|}{r_N(x_i,x_j)}\)\frac {w(x_i,x_j)}{\left|x_i-x_j\right|^s}}\right)
$$
required to compute the gradient of $f$ is  bounded above by  $dZ(\omega_N,\delta_N)$, while the number of nonzero elements of the Hessian (each of the form in \eqref{u^2} or \eqref {u^3}) is bounded above by  $2d^2Z(\omega_N,\delta_N)$.  Hence, the computational complexity of one step in a gradient descent optimization scheme (or to evaluate $f$ and its gradient and Hessian, as required in one step of a second-order optimization scheme) is bounded by a constant (determined by the maximal complexity of the individual terms and the dimension $d$) times  $Z(\omega_N,\delta_N)$.  Finally, we mention that  determining the set $\{(x,y)\in \omega_N\times \omega_N : 0<\left|x-y\right|\leq \delta_N\}$ is  known as the {\em fixed-radius near neighbor problem} which can be solved using so-called {\em bucketing algorithms} with (expected) complexity of order $O(N+Z(\omega_N,\delta_N))$   (cf. \cite{BenStaWil77}). 
 
 We further provide  bounds on  $Z(\omega_N,\delta_N)$ based on geometrical and/or energy properties of $\omega_N$.  In order to use Theorem~\ref{Th1} and Corollary~\ref{CorC2}, we must have that $\delta_N$ is of the form  $\delta_N=C_N N^{-1/d}$, for some positive sequence $C_N$    with   infinite limit and we shall assume this form in the following.  We
first observe that
\begin{equation}\label{Zbnd}
Z(\omega_N,\delta_N)\le N\max_{x\in \omega_N}{\#\(\omega_N \cap  B[x,\delta_N]\)},
\end{equation}
 where $B[x,r]$ denotes the closed ball in $\RR^p$ with radius $r$ and center  $x$.  Hence, if  $\{\omega_N\}$ is a sequence of $N$-point configurations on $A$ such that
\begin{equation}\label{Zbnd2}
\max_{x\in \omega_N}{\#\(\omega_N \cap   B[x,\delta_N]\)}=O(N\delta_N^d)=O(C_N^d),\ \ \ N\to \infty,
\end{equation}
then $Z(\omega_N,\delta_N)=O(NC_N^d)$, $N\to \infty$.

If $A$ is a compact subset of $\RR^d$ with boundary of positive Lebesgue measure or $A$ is a $d$-regular subset of $\RR^{p}$, we can still estimate the number of non-zero terms in \eqref {u^0} of form \eqref {w*}. We can show that  a well-separated sequence of configurations $\omega_N$ on a compact $d$-regular set  $A$ satisfies \eqref{Zbnd2} and so we  obtain:
\begin {proposition}\label {P2'}
Let $A$ be a compact $d$-regular set in $\RR^p$, $d\leq p$, and $\{\omega_N\}$ be a sequence of $N$-point configurations on $A$ such that 
\begin {equation}\label {well}
\liminf_{N\to \infty}{\delta (\omega_N)N^{1/d}}>0.
\end {equation}
If $\delta_N=C_N N^{-1/d}$, where $\{C_N\}$ is a positive sequence bounded below by some $c>0$, then
$$
Z(\omega_N,\delta_N)=O(NC_N^d), \ \ \ N\to \infty.
$$
\end {proposition}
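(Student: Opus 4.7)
\textbf{Proof plan for Proposition~\ref{P2'}.} The starting point is the pointwise bound \eqref{Zbnd}, which reduces the claim to showing that the local count
\[
M_N := \max_{x \in \omega_N} \#\bigl(\omega_N \cap B[x, \delta_N]\bigr)
\]
satisfies $M_N = O(C_N^d)$. I would establish this via a standard packing argument, using the $d$-regular measure $\mu$ on $A$ furnished by \eqref{mureg} as a substitute for Lebesgue volume.

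First, the separation hypothesis \eqref{well} produces an $\eta > 0$ and an $N_0$ such that $\delta(\omega_N) \ge \eta N^{-1/d}$ for all $N \ge N_0$. Setting $r_N := (\eta/3) N^{-1/d}$ makes the collection of open balls $\{B(y, r_N) : y \in \omega_N\}$ pairwise disjoint. For any fixed $x \in \omega_N$, the triangle inequality then gives $B(y, r_N) \subset B(x, \delta_N + r_N)$ for every $y \in \omega_N \cap B[x, \delta_N]$.

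Next, I would apply the lower $d$-regular bound to each of the disjoint balls $B(y, r_N)$ (valid once $N$ is large enough that $r_N < \diam A$) and the upper bound to the enclosing ball to obtain
\[
M_N \cdot c_0^{-1} r_N^d \;\le\; \sum_{y \in \omega_N \cap B[x, \delta_N]} \mu\bigl(B(y, r_N)\bigr) \;\le\; \mu\bigl(B(x, \delta_N + r_N)\bigr) \;\le\; C_0 (\delta_N + r_N)^d.
\]
Inserting $\delta_N = C_N N^{-1/d}$ and $r_N = (\eta/3) N^{-1/d}$, and using $C_N \ge c > 0$ to absorb the additive $r_N$ into a constant multiple of $\delta_N$, one arrives at $M_N \le K \, C_N^d$ for a constant $K$ depending only on $\eta, c, c_0, C_0, d$. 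Combining this with \eqref{Zbnd} yields $Z(\omega_N, \delta_N) = O(N C_N^d)$.

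No substantial obstacle is anticipated; the argument is a direct measure-comparison of the type familiar from point-counting arguments on $d$-regular sets, with the separation hypothesis supplying disjointness and the regularity of $\mu$ converting that into a count. The only technicality is the regime $\delta_N + r_N \ge \diam A$, where the upper regularity inequality should be replaced by the trivial estimate $\mu\bigl(B(x, \delta_N + r_N)\bigr) \le \mu(A)$; this is still dominated by a constant multiple of $(\delta_N + r_N)^d$ since $\delta_N + r_N$ is then bounded below by $\diam A$, so the conclusion is unaffected.
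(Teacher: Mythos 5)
Your proposal is correct and follows essentially the same packing argument as the paper: disjoint small balls of radius comparable to $\delta(\omega_N)$ around the points of $\omega_N\cap B[x,\delta_N]$, the lower $d$-regularity bound on each, the upper bound on the enclosing ball of radius $\delta_N$ plus the small radius, and then \eqref{Zbnd}. The only (immaterial) differences are that the paper takes the small radius to be $\delta(\omega_N)/2$ rather than a fixed multiple of $N^{-1/d}$, and your explicit treatment of the regime $\delta_N+r_N\ge\diam A$ is a small point the paper leaves implicit.
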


\begin{figure}[htbp]
\begin{center}

\includegraphics[scale=.7]{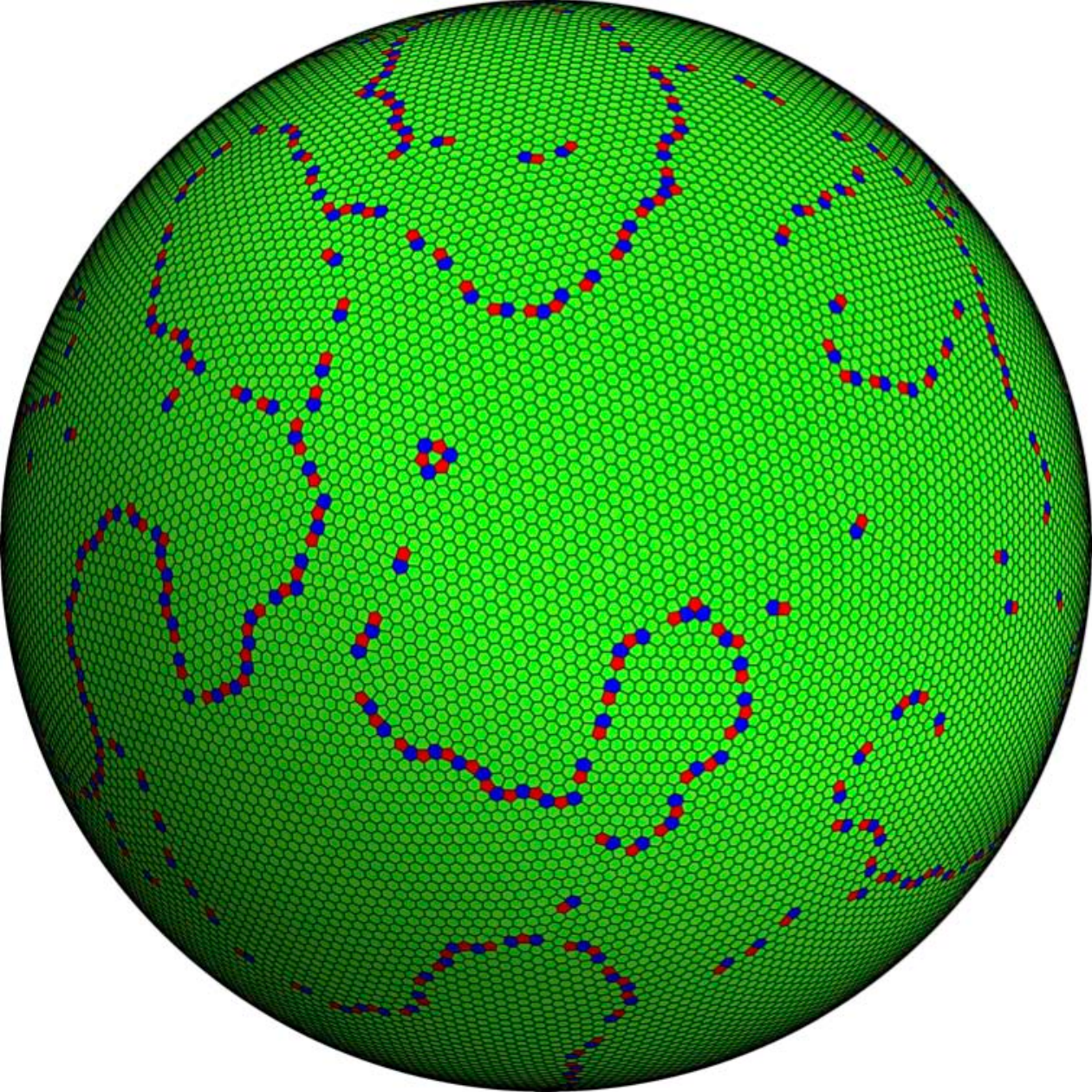}

\caption{{\bf A configuration of 30,000 near optimal $s=3.5$ energy points on the sphere.}}
\label{fig1}
\end{center}
\end{figure}

The following estimate is the most important for our applications to calculating low energy configurations.  
 \begin {proposition}\label {P5}
Let $s>0$, $A$ be a compact set in $\RR^{p}$ and $\omega$ be arbitrary finite configuration on $A$. Then
$$
Z(\omega,\delta)\leq \delta^sE_s(\omega).
$$ 
In particular, if $s>d>0$ and a sequence $\{\omega_N\}$ of $N$-point configurations on $A$ is such that 
$$
E_s(\omega_N)=O(N^{1+s/d}), \ \ \ N\to \infty, \ \ {\rm and}\ \  \delta_N=C_N N^{-1/d}, 
$$
where $\{C_N\}$ is a positive sequence bounded below by some $c>0$, then
$$
Z(\omega_N,\delta_N)=O(NC_N^s),\ \ \ N\to \infty.
$$
\end {proposition}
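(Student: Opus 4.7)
The proof is essentially a one-line lower bound on the energy followed by a substitution, so the plan is very short.

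My plan is to observe that every ordered pair $(x,y)\in\omega\times\omega$ counted by $Z(\omega,\delta)$ satisfies $0<|x-y|\le\delta$, hence contributes at least $\delta^{-s}$ to the Riesz energy sum. Since all summands of $E_s(\omega)=\sum_{i\neq j}|x_i-x_j|^{-s}$ are positive, restricting the sum to these close pairs yields
$$
E_s(\omega)\ \ge\ \sum_{\substack{(x,y)\in\omega\times\omega\\ 0<|x-y|\le\delta}}\frac{1}{|x-y|^s}\ \ge\ Z(\omega,\delta)\,\delta^{-s},
$$
which rearranges to the stated inequality $Z(\omega,\delta)\le \delta^s E_s(\omega)$.

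For the second assertion, I would simply substitute $\delta_N=C_N N^{-1/d}$ into the bound just established. Using the assumption $E_s(\omega_N)=O(N^{1+s/d})$, I get
$$
Z(\omega_N,\delta_N)\ \le\ \delta_N^s E_s(\omega_N)\ =\ C_N^s N^{-s/d}\cdot O(N^{1+s/d})\ =\ O(NC_N^s),
$$
as $N\to\infty$, which is exactly the claim. The hypothesis that $C_N$ is bounded below by $c>0$ is not even needed for the upper bound itself; it is only relevant if one wants to interpret the estimate as sharp relative to $NC_N^s$ (in particular guaranteeing $NC_N^s\ge cN$, so the bound is never vacuous).

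There is no substantive obstacle here: both parts reduce to monotonicity of $t\mapsto t^{-s}$ on close pairs and a direct substitution. The only thing worth stating carefully is that since $Z(\omega,\delta)$ counts ordered pairs (as in the definition in \eqref{Zdef}) and $E_s(\omega)$ sums over ordered pairs $i\neq j$, no factor of $2$ is lost or gained in the comparison.
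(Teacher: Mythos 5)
Your proof is correct and follows essentially the same route as the paper: bound the energy from below by the contribution of pairs within distance $\delta$, each of which contributes at least $\delta^{-s}$, then substitute $\delta_N=C_NN^{-1/d}$. Your side remarks (no factor of $2$ is lost since both counts are over ordered pairs, and the lower bound on $C_N$ is not needed for the estimate itself) are accurate.
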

\noindent
{\bf Remark.}
Note that if $w$ is a bounded CPD weight on $A\times A$ and $s>d$, then $E_s(\omega_N)=O(N^{1+s/d})$ if and only if $E_s^w(\omega_N)=O(N^{1+s/d})$   and so either of these energies can be used in the assumptions of Proposition~\ref{P5}.

To illustrate the utility of our results, we present  two examples of low-energy discretizations.  The first, shown in Figure~\ref{fig1},  shows the Voronoi decomposition of the   unit sphere $\mathbb{S}^2$ for a configuration of 30,000 points on the sphere obtained from a random starting configuration followed by 500 iterations of gradient descent.  We used $s=3.5$, $w(x,y)=1$, $\Phi(t)=(1-t^2)^3\chi_{[0,1]}(t)$, where $\chi_{[0,1]}(t)$ is the characteristic function of the  interval $[0,1]$, and $r_N(x,y)=(\ln N)N^{-1/2}$ 
($\ln N\approx 10$ for $N=30,000$).   We observe (and this is almost always the case for large low energy configurations on the sphere) that  all of Voronoi cells are either pentagons, hexagons, or heptagons, with the large majority being nearly regular hexagons.   This hexagonal dominant local structure lends support to the conjectured value  of $C_{s,2}$ given in the discussion following Theorem~\ref{packing}.

The second example consists of a configuration of $500,000$ low energy points computed in a 3-dimensional spherical shell with inner radius $R_0=.55$ and outer radius $R_1=1$.  We used the same  $s$, $w$, and $\Phi$ as in the previous example.  In this case we chose $r_N(x,y)=(1/4)(\ln N) N^{-1/3}$.  The configuration was obtained by applying 1000 gradient descent iterations to a random starting configuration.  In Figure~\ref{fig3} we show the energy for the configuration at each iteration step and in Figure~\ref{fig2} we show a portion of the configuration near a slice of the shell for the final 1000-th iteration.

\begin{figure}[htbp]
\begin{center}
\includegraphics[scale=1.2]{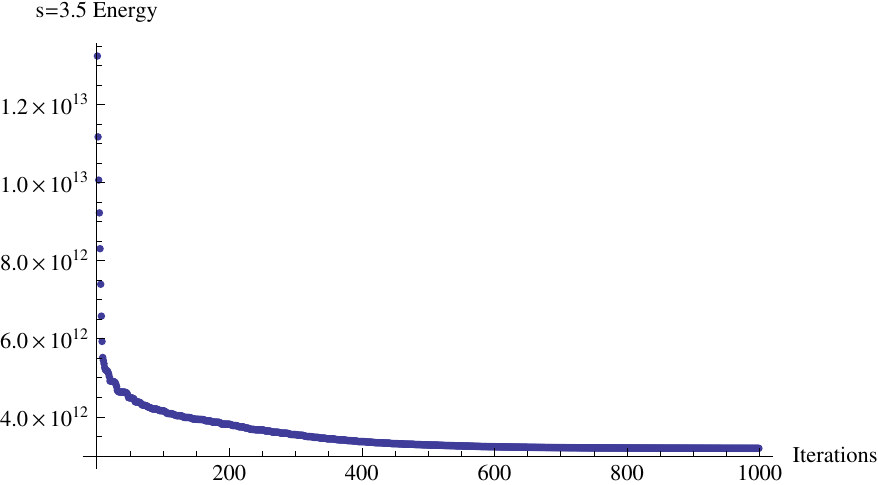}

\caption{{\bf The  energy ($s=3.5$) of a  sequence of 500,000 point configurations in a spherical shell resulting from 1000 gradient descent iterations starting from a random configuration.}}
\label{fig3}
\end{center}
\end{figure}

\begin{figure}[htbp]
\begin{center}
\vspace*{-1.2in}
\hspace*{-1.in}
\includegraphics[scale=.7]{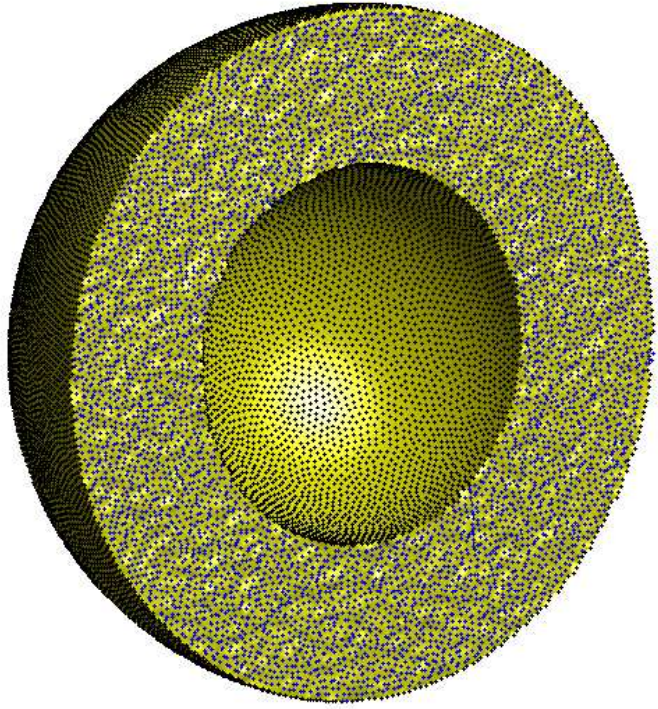}

\vspace*{-2in}
\caption{{\bf A configuration of 500,000 near optimal $s=3.5$ energy points on a spherical shell for geoscience (earth's mantle) applications. }}
\label{fig2}
\end{center}
\end{figure}

\section{Proof of Theorem \ref {rth1}}\label{proof2.1}

Given a sequence of CPD-weight functions $\mathbf{v}=\{v_N\}$ on $A\times A$, let
$$
\UL g^{\mathbf{v}}_{s,d}(A):=\liminf_{N\to\infty}{\frac {\mathcal E^{v_N}_s(A,N)}{N^{1+s/d}}}, \quad
\OL g^{\mathbf{v}}_{s,d}(A):=\limsup_{N\to\infty}{\frac {\mathcal E^{v_N}_s(A,N)}{N^{1+s/d}}},
$$
and, if the limit (possibly infinite) exists, 
$$
g^{\mathbf{v}}_{s,d}(A):=\lim_{N\to\infty}{\frac {\mathcal E^{v_N}_s(A,N)}{N^{1+s/d}}}.
$$
For a constant sequence $v_N=w$, we write $\UL g^{w}_{s,d}(A)$, $\OL g^{w}_{s,d}(A)$, $  g^{w}_{s,d}(A)$ for these respective quantities.  
 In particular, when $v_N\equiv 1$, we omit the superscript.

We shall need the following known results from geometric measure theory.
\begin {theorem}\label {Fed}{\rm (}\cite [Theorem 3.2.39]{FedGMT}{\rm )}
If $W\subset \RR^{p}$ is a closed $d$-rectifiable
set, then $\mathcal M_d(W) =\mathcal H_d(W)$.
\end {theorem}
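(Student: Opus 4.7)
The statement is a foundational result in geometric measure theory (Federer, Theorem 3.2.39) that a closed $d$-rectifiable set $W \subset \RR^p$ has $d$-dimensional Minkowski content equal to its $d$-dimensional Hausdorff measure. My plan is to establish the sandwich $\mathcal{H}_d(W) \le \UL{\mathcal{M}}_d(W) \le \OL{\mathcal{M}}_d(W) \le \mathcal{H}_d(W)$, which simultaneously forces existence of $\mathcal{M}_d(W)$ and the claimed equality. The middle inequality is immediate from the definitions, so the work splits into two one-sided bounds that require very different tools.

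The lower bound $\mathcal{H}_d(W) \le \UL{\mathcal{M}}_d(W)$ does not actually require rectifiability; it is a standard covering estimate valid for any bounded set in $\RR^p$. I would take a maximal $\epsilon$-separated subset $\{x_i\}_{i=1}^{N_\epsilon} \subset W$, and use that the disjoint balls $B(x_i,\epsilon/2)$ all lie in $W(\epsilon/2)$ to bound $N_\epsilon$ from above by $\mathcal{L}_p(W(\epsilon/2))/(\beta_p(\epsilon/2)^p)$. Simultaneously, by maximality $\{B(x_i, 2\epsilon)\}$ covers $W$, which bounds $\mathcal{H}_d^{4\epsilon}(W)$ by $N_\epsilon$ times a constant multiple of $\epsilon^d$. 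Chasing the normalizations against the factor $\beta_{p-d}\epsilon^{p-d}$ and sending $\epsilon \to 0^+$ yields the inequality.

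The upper bound $\OL{\mathcal{M}}_d(W) \le \mathcal{H}_d(W)$ is where rectifiability is essential. Since $W = \phi(K)$ for compact $K \subset \RR^d$ and Lipschitz $\phi$, I would apply Rademacher's theorem together with a Whitney--Lusin approximation to decompose $W$, modulo an arbitrarily small $\mathcal{H}_d$-residual $N_\eta$ (with $\mathcal{H}_d(N_\eta)<\eta$), into finitely many pieces $W_1,\dots,W_{m(\eta)}$, each contained in a $C^1$ $d$-dimensional submanifold $M_i \subset \RR^p$ that is locally the graph of a nearly affine function over a tangent $d$-plane. On each such piece, the implicit function theorem and the area formula give $\mathcal{L}_p(W_i(\epsilon)) = \beta_{p-d}\epsilon^{p-d}\mathcal{H}_d(W_i)(1+o_\epsilon(1))$. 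Summing these contributions, controlling the overlaps of the tubes $(W_i)(\epsilon)$, and then sending first $\epsilon \to 0^+$ and then $\eta \to 0^+$ gives the inequality.

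The main obstacle is uniformly controlling two error terms. First, the tubes $(W_i)(\epsilon)$ can overlap substantially when the pieces lie close together, so one must choose the finite decomposition so that the tubes are essentially disjoint at the scales of interest, and absorb the rest into the residual $N_\eta$. Second, and more delicately, one must show that $\mathcal{L}_p(N_\eta(\epsilon)) = o(\epsilon^{p-d})$ as $\epsilon \to 0^+$ for each fixed $\eta$, and that this error term can also be made small with $\eta$. This is the technical heart of Federer's argument; it makes essential use of the closedness of $W$ (so that $N_\eta$ can be handled via compact approximation) together with a Besicovitch-type covering of $N_\eta$ by small balls whose total $d$-content is controlled by $\mathcal{H}_d(N_\eta) < \eta$.
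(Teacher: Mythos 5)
First, a point of reference: the paper does not prove this statement. Theorem \ref{Fed} is quoted as a known external result, Federer \cite[Theorem 3.2.39]{FedGMT}, and is used as a black box (e.g., in the proof of Lemma \ref{K}). So there is no internal proof to compare yours against; you are reconstructing a textbook theorem. Your overall architecture (sandwiching $\mathcal H_d(W)$ between $\UL{\mathcal M}_d(W)$ and $\OL{\mathcal M}_d(W)$) and your treatment of the upper bound --- a $C^1$/Lusin-type decomposition into nearly flat pieces, a tube-volume computation on each piece, and absorption of a small residual --- are consistent in spirit with Federer's argument. One caveat there: for the residual $N_\eta$ you should not aim for $\mathcal L_p(N_\eta(\epsilon))=o(\epsilon^{p-d})$ at fixed $\eta$ (too strong, and not what the limit interchange needs); what you need is $\OL{\mathcal M}_d(N_\eta)\le C\eta$, and the natural route is not a Besicovitch covering of $N_\eta$ itself but the fact that $N_\eta=\phi(K')$ with $\mathcal L_d(K')$ small, so covering $K'$ by roughly $\eta(L/\epsilon)^d$ cubes of side $\epsilon/L$ and pushing forward gives $\mathcal L_p(N_\eta(\epsilon))\le C\eta\,\epsilon^{p-d}$ uniformly in $\epsilon$.

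The genuine gap is in the lower bound. You assert that $\mathcal H_d(W)\le \UL{\mathcal M}_d(W)$ requires no rectifiability and follows from a maximal $\epsilon$-separated set argument. Run the constants: with $N_\epsilon$ such points, disjointness of the balls $B(x_i,\epsilon/2)\subset W(\epsilon/2)$ gives $N_\epsilon\le \mathcal L_p(W(\epsilon/2))/\left(\beta_p(\epsilon/2)^p\right)$, while the covering of $W$ by the balls $B(x_i,\epsilon)$ gives $\mathcal H_d(W)\le \liminf_{\epsilon\to 0^+} N_\epsilon\,\beta_d\epsilon^d$. Combining these yields only $\mathcal H_d(W)\le \frac{2^d\beta_d\beta_{p-d}}{\beta_p}\,\UL{\mathcal M}_d(W)$, and the constant $2^d\beta_d\beta_{p-d}/\beta_p$ is strictly larger than $1$ (it equals $8/\pi$ for $d=1$, $p=2$, and $6$ for $d=2$, $p=3$). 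An inequality with a constant exceeding $1$ cannot close the sandwich, so the proof of the claimed equality collapses at exactly this step. The sharp constant-$1$ lower bound is where rectifiability and closedness do real work on this side as well: one uses the approximate tangent planes of the nearly flat pieces $W_i$ to show that $W(\epsilon)$ contains, up to controllably small overlaps, the normal tubes over the $W_i$, each of volume at least $(1-o(1))\,\beta_{p-d}\epsilon^{p-d}\mathcal H_d(W_i)$. In short, the same decomposition you construct for the upper bound must also be deployed for the lower bound; the naive packing estimate cannot deliver the exact constant.
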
 
A mapping $\varphi: K\to \RR^{p}$, $K\subset \RR^d$, is called {\em bi-Lipschitz} with constant $L>1$ if
$$
L^{-1}\left|x-y\right|\leq \left|\varphi(x)-\varphi(y)\right|\leq L\left|x-y\right|,\ \ \ x,y\in K.
$$
\begin {lemma}\label {Fed1}(see \cite[3.2.18]{FedGMT})
Let $W \subset \RR^{p}$ be an $(\mathcal H_d,d)$-rectifiable set.
Then for every $\epsilon > 0$, there exist (at most countably many) compact sets $K_1, K_2, K_3, \ldots \subset \RR^d$ and
bi-Lipschitz mappings  $\psi_i : K_i \to \RR^{p}$ with constant $1+\epsilon$, $i = 1, 2, 3, \ldots$, such
that  $\psi_1(K_1), \psi_2(K_2),  \psi_3(K_3), \ldots $ are disjoint subsets of $W$ with
$$
\mathcal H_d\(W \setminus\bigcup_i \psi_i(K_i)\)= 0.
$$
\end {lemma}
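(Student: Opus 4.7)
The plan is to derive the lemma from the definition of $(\mathcal H_d,d)$-rectifiability together with Rademacher's differentiability theorem and a Lusin-type measurable decomposition. By hypothesis, $W$ differs by an $\mathcal H_d$-null set from $\bigcup_{j=1}^{\infty} \phi_j(F_j)$ where each $F_j\subset\RR^d$ is bounded and each $\phi_j : F_j \to \RR^p$ is Lipschitz. After extending each $\phi_j$ to a Lipschitz map on $\RR^d$ componentwise via a McShane extension, Rademacher's theorem ensures that $\phi_j$ is differentiable at $\mathcal L_d$-a.e.\ $x \in F_j$. On the Borel set $Z_j$ where the derivative $\phi_j'(x)$ has rank strictly less than $d$, the area formula yields $\mathcal H_d(\phi_j(Z_j)) = 0$, and this portion is absorbed into the null set we are permitted to discard.

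On the remainder $D_j := F_j \setminus Z_j$, I will decompose $D_j$ into countably many Borel pieces on which $\phi_j$ is, after precomposition with a fixed linear isomorphism of $\RR^d$, almost an isometry. Fix a countable dense collection $\{M_\alpha\}$ of rank-$d$ linear maps $\RR^d\to\RR^p$, and for each $\alpha$ choose a linear isomorphism $A_\alpha : \RR^d \to \RR^d$ so that $M_\alpha\circ A_\alpha^{-1}$ is a linear isometry. For $k\in\NN$, define
\[
B_{j,\alpha,k} := \Bigl\{x\in D_j : \|\phi_j(y)-\phi_j(x)-M_\alpha(y-x)\|\le \tfrac{\epsilon}{3}\|M_\alpha(y-x)\| \text{ for all } y\in F_j\cap B(x,1/k)\Bigr\}.
\]
These are Borel subsets of $D_j$, their union over all $(\alpha,k)$ exhausts $D_j$ up to an $\mathcal L_d$-null set, and after further subdividing each $B_{j,\alpha,k}$ by a grid of cubes of diameter less than $1/(2k)$, the map $\psi := \phi_j\circ A_\alpha^{-1}$ restricted to the $A_\alpha$-image of the refined piece is bi-Lipschitz with constant at most $1+\epsilon$.

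By inner regularity of $\mathcal L_d$ I then replace each such refined Borel piece by an exhaustion by compact subsets, and enumerate the resulting countable collection of compact sets and maps as $(K_i,\psi_i)$. To enforce disjointness of the images, I proceed inductively: given $\psi_1(K_1),\ldots,\psi_{i-1}(K_{i-1})$, each a compact subset of $W$, replace $K_i$ by a compact subset of $K_i \setminus \psi_i^{-1}\bigl(\bigcup_{l<i}\psi_l(K_l)\bigr)$ exhausting it up to $\mathcal L_d$-measure zero. The continuous preimage of a compact set is closed, so the removal is well-defined, and since $\psi_i$ is Lipschitz the discarded $\mathcal L_d$-null set maps into an $\mathcal H_d$-null subset of $W$. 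Combining the null sets discarded through all these steps yields a single $\mathcal H_d$-null remainder.

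The main technical obstacle I anticipate is in the first decomposition step: the defining inequality for $B_{j,\alpha,k}$ is asymmetric in $x$ and $y$, so it does not directly yield bi-Lipschitzness on $B_{j,\alpha,k}$ itself. This is resolved by the grid subdivision, which forces any two points of a refined piece to lie within distance $1/(2k)$ of each other, allowing the inequality to be applied with either point as the base; after the isometric straightening by $A_\alpha$ this upgrades the one-sided estimate to bi-Lipschitz control with constant $1+\epsilon$, as required.
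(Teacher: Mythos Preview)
The paper does not prove this lemma; it simply cites it from Federer's \emph{Geometric Measure Theory} (3.2.18) as a known result from the literature. Your sketch is a correct outline of the standard argument that underlies Federer's statement (differentiate the Lipschitz pieces via Rademacher, discard the rank-deficient locus using the area formula, linearize near points of full-rank derivative, straighten by the polar factor, exhaust by compact sets, and disjointify). One minor point: in the final disjointification step you speak of replacing $K_i$ by a single compact subset of $K_i\setminus\psi_i^{-1}\bigl(\bigcup_{l<i}\psi_l(K_l)\bigr)$ ``exhausting it up to $\mathcal L_d$-measure zero,'' but a single compact subset will generally not do this; you need a countable compact exhaustion of that relatively open set and a re-enumeration. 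This is routine and does not affect the argument.
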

We start by proving the following auxiliary statement.
\begin {lemma}\label {K}
Let $A\subset \RR^{p}$ be a compact $(\mathcal H_d,d)$-rectifiable set with $\mathcal M_d(A)=\mathcal H_d(A)$. Then every compact subset $K\subset A$ is also $(\mathcal H_d,d)$-rectifiable and $\mathcal M_d(K)=\mathcal H_d(K)$.
\end {lemma}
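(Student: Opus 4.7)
The plan is to treat the two conclusions of the lemma separately.

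For the $(\mathcal H_d,d)$-rectifiability of $K$, I would decompose $A$ directly from the definition: write $A = \bigcup_i A_i \cup N$ with each $A_i = \phi_i(B_i)$ a Lipschitz image of a bounded set $B_i \subset \RR^d$ and $\mathcal H_d(N)=0$. Intersecting with the compact set $K$ gives $K = \bigcup_i \phi_i(B_i\cap \phi_i^{-1}(K)) \cup (K \cap N)$, and since $B_i\cap \phi_i^{-1}(K)$ is bounded in $\RR^d$, each $K\cap A_i$ is $d$-rectifiable. Combined with $\mathcal H_d(K)\le \mathcal H_d(A)<\infty$ and $\mathcal H_d(K\cap N)=0$, this establishes the $(\mathcal H_d,d)$-rectifiability of $K$.

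For the equality $\mathcal M_d(K)=\mathcal H_d(K)$, I would prove the lower and upper bounds separately. The lower bound $\UL{\mathcal M}_d(K) \ge \mathcal H_d(K)$ is the easier direction: apply Lemma \ref{Fed1} to $A$ to obtain disjoint bi-Lipschitz images $\psi_i(K_i)\subset A$ of compact sets $K_i\subset \RR^d$ whose union covers $A$ up to an $\mathcal H_d$-null set. Setting $K'_i:=\psi_i^{-1}(K)\cap K_i$ and $K_{(n)}:=\bigcup_{i\le n}\psi_i(K'_i)$, each $K_{(n)}$ is a compact $d$-rectifiable set (a finite union of $d$-rectifiable sets is again $d$-rectifiable by translating the bounded domains to be pairwise disjoint within $\RR^d$), so Theorem \ref{Fed} gives $\mathcal M_d(K_{(n)})=\mathcal H_d(K_{(n)})$. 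Since $K_{(n)}\subset K$ implies $K_{(n)}(\epsilon)\subset K(\epsilon)$, monotonicity of $\mathcal L_p$ yields $\UL{\mathcal M}_d(K) \ge \mathcal M_d(K_{(n)})=\mathcal H_d(K_{(n)})\uparrow \mathcal H_d(K)$ as $n\to\infty$.

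The main obstacle will be the upper bound $\OL{\mathcal M}_d(K) \le \mathcal H_d(K)$: the direct subadditivity estimate $\OL{\mathcal M}_d(A)\le \OL{\mathcal M}_d(K)+\OL{\mathcal M}_d(\OL{A\setminus K})$ only bounds the sum from below and is useless for bounding $\OL{\mathcal M}_d(K)$ from above. Instead, I plan to exploit the hypothesis $\mathcal M_d(A)=\mathcal H_d(A)$ through an inner-regularity approximation. Since $\mathcal H_d$ restricted to the compact set $A$ is a finite Borel (hence Radon) measure, for each $n$ there is a compact $F_n \subset A\setminus K$ with $\mathcal H_d(F_n) \ge \mathcal H_d(A\setminus K)-1/n$. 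Disjointness of the compact sets $K$ and $F_n$ gives $\delta_n:={\rm dist}(K,F_n)>0$, so for $\epsilon<\delta_n/2$ their $\epsilon$-neighborhoods are disjoint subsets of $A(\epsilon)$ and
\[
\mathcal L_p(K(\epsilon)) + \mathcal L_p(F_n(\epsilon)) \le \mathcal L_p(A(\epsilon)).
\]
Dividing by $\beta_{p-d}\epsilon^{p-d}$ and taking $\limsup$ as $\epsilon\to 0^+$ gives $\OL{\mathcal M}_d(K) + \UL{\mathcal M}_d(F_n) \le \mathcal M_d(A) = \mathcal H_d(A)$. Applying the lower bound above to the compact subset $F_n\subset A$ (no circularity, as the lower bound did not use the upper bound) gives $\UL{\mathcal M}_d(F_n) \ge \mathcal H_d(F_n) \ge \mathcal H_d(A\setminus K)-1/n$, hence $\OL{\mathcal M}_d(K) \le \mathcal H_d(K) + 1/n$; letting $n\to\infty$ finishes the proof.
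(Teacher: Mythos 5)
Your proposal is correct and follows essentially the same route as the paper: the lower bound $\UL{\mathcal M}_d(K)\ge \mathcal H_d(K)$ by exhausting $K$ from inside with compact $d$-rectifiable sets and invoking Theorem~\ref{Fed}, and the upper bound by choosing a compact $L\subset A\setminus K$ of nearly full measure, using ${\rm dist}(K,L)>0$ to get $\OL{\mathcal M}_d(K)+\UL{\mathcal M}_d(L)\le \mathcal M_d(A)=\mathcal H_d(A)$, and feeding the lower bound back in for $L$. The only differences are presentational (you route the lower bound through Lemma~\ref{Fed1} and spell out the rectifiability of $K$, which the paper leaves to the reader).
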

\begin{proof} Let $K\subset A$ be compact.  It is not difficult to verify that $K$ is also $(\mathcal H_d,d)$-rectifiable. Then, if $\epsilon>0$, it follows from the definition  of $(\mathcal H_d,d)$-rectifiability that there is some $d$-rectifiable compact set $J\subset K$ such that 
$\mathcal H_d(J) \ge \mathcal H_d(K)-\epsilon$.   Using Theorem~\ref{Fed}, we obtain
$$
\UL {\mathcal M}_d(K)\geq \mathcal M_d(J)=\mathcal H_d(J)\ge \mathcal H_d(K)-\epsilon, 
$$
and so $\UL {\mathcal M}_d(K)\geq \mathcal H_d(K)$.

It remains to show that  $\OL {\mathcal M}_d(K)\le \mathcal H_d(K)$. 
Let $\epsilon>0$.
Since, by assumption  $\OL {\mathcal M}_d(A)<\infty$, we must have $\OL {\mathcal M}_d(K)<\infty$. 
Hence, we can find a compact set $L\subset A\setminus K$   such that $\mathcal H_d(L)+\mathcal H_d(K)>\mathcal H_d(A)-\epsilon$. Since ${\rm dist}(K,L)>0$, it is easily seen that   $\OL {\mathcal M}_d(K\cup L)\ge \OL {\mathcal M}_d(K)+\UL {\mathcal M}_d(L)$
and so, using the first part of this proof, we obtain $${\mathcal H}_d(A)={\mathcal M}_d(A)\ge \OL {\mathcal M}_d(K\cup L)\ge \OL {\mathcal M}_d(K)+\UL {\mathcal M}_d(L)\ge \OL {\mathcal M}_d(K)+  {\mathcal H}_d(L).$$ Hence, 
$$\OL {\mathcal M}_d(K)\le {\mathcal H}_d(A)- {\mathcal H}_d(L)\le  {\mathcal H}_d(K)+\epsilon,$$
and it follows that $\OL {\mathcal M}_d(K)\le \mathcal H_d(K)$.
\end{proof}

In view of Lemma \ref {K}, applying Theorem \ref {packing}, we have
$g_{s,d}(K)=C_{s,d}\mathcal H_d(K)^{-s/d}$ for every compact subset $K\subset A$.
Hence, Theorem \ref {rth1} is a consequence of the following known lemma. 
\begin {lemma}\label{wth1}(see \cite [Lemma 6]{BorHarSaf08})
Suppose that $s>d$, $A$ is a compact set in $\RR^{p}$
with $\mathcal H_d (A)<\infty$, and that $w$ is a CPD-weight function on $A\times A$. Furthermore, suppose that for
any compact subset $K\subset A$, the limit $g_{s,d}(K)$ exists and is given by
\begin {equation}\label {w*1}
g_{s,d}(K)=
\frac{C_{s,d}}{\mathcal H_d (K)^{s/d}}.
\end {equation}
Then $g^w_{s,d}(A)$ exists and is given by
\begin {equation}\label {www1}
g^w_{s,d}(A)= C_{s,d}\(\mathcal H^{s,w}_{d}(A)\)^{-s/d}.
\end{equation}
Moreover, if a sequence
$\{\W\omega_N\}_{N=2}^\infty$, where
$\W\omega_N=\{x^N_1,\ldots,x^N_N\}$, is asymptotically
($w,s$)-energy minimizing on the set $A$ and $\mathcal H_d(A)>0$, then
\begin {equation}\label{www2}
\frac 1N\sum_{k=1}^{N}{\delta _{x^N_k}}\stackrel{*}{\longrightarrow} h^{s,w}_{d},\ \ N\to \infty.
\end{equation}
\end{lemma}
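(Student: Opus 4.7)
The plan is to establish the conclusion by the standard partition/sandwich argument: first extract an upper bound on $\OL g^w_{s,d}(A)$ of the form $C_{s,d}[\mathcal{H}_d^{s,w}(A)]^{-s/d}$ via an explicit near-optimal construction, then match it with a lower bound on $\UL g^w_{s,d}(A)$ that simultaneously forces every weak$^*$ limit of normalized counting measures of asymptotically minimizing configurations to equal $h_d^{s,w}$. The unweighted hypothesis \eqref{w*1}, applied to compact subsets of $A$, is the only input we have and will be invoked on each piece of a finite partition of $A$.

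For the upper bound I would fix $\varepsilon>0$, partition $A$ (modulo an $\mathcal{H}_d$-null set and a set of small measure where $w$ is not continuous) into finitely many compact pieces $K_1,\dots,K_m$ of small diameter on which $w$ oscillates by at most $\varepsilon$; pick representative values $w_i\approx w(x,x)$ for $x\in K_i$. For any allocation $N_i$ with $\sum N_i=N$, place on each $K_i$ an $N_i$-point near-minimizer of the unweighted energy, of energy $(1+o(1))C_{s,d}N_i^{1+s/d}/\mathcal{H}_d(K_i)^{s/d}$ by \eqref{w*1}. The weighted energy of the union is then $(1+O(\varepsilon))\sum_i w_i C_{s,d}N_i^{1+s/d}/\mathcal{H}_d(K_i)^{s/d}$ plus cross-piece terms, which are $O(N^2)=o(N^{1+s/d})$ because $s>d$. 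Choosing $N_i/N\to h_d^{s,w}(K_i)$ (which is the minimizer of the homogeneous sum above by an elementary Hölder/Lagrange calculation) and letting $\varepsilon\to 0$ gives $\OL g^w_{s,d}(A)\le C_{s,d}[\mathcal{H}_d^{s,w}(A)]^{-s/d}$.

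For the lower bound and the distribution statement, let $\{\tilde\omega_N\}$ be asymptotically $(w,s)$-energy minimizing, and pass to any weak$^*$ subsequential limit $\mu$ of $\frac1N\sum_{x\in\tilde\omega_N}\delta_x$. Using the same partition $\{K_i\}$ and writing $N_i(N):=\#(\tilde\omega_N\cap K_i)$, I would restrict the energy sum to pairs within a single piece, replace $w$ there by the lower value $w_i(1-\varepsilon)$, and invoke \eqref{w*1} applied to the compact set $K_i$ to obtain
$$
\liminf_{N\to\infty}\frac{E^w_s(\tilde\omega_N)}{N^{1+s/d}}\ge (1-\varepsilon)C_{s,d}\sum_{i=1}^m w_i\,\frac{\mu(K_i)^{1+s/d}}{\mathcal{H}_d(K_i)^{s/d}}.
$$
Jensen's inequality (or equivalently a Lagrange multiplier calculation) shows the right-hand side is bounded below by $(1-\varepsilon)C_{s,d}[\sum_i w_i^{-d/s}\mathcal{H}_d(K_i)]^{-s/d}\approx(1-\varepsilon)C_{s,d}[\mathcal{H}_d^{s,w}(A)]^{-s/d}$, with equality only when $\mu(K_i)/\mathcal{H}_d(K_i)$ is proportional to $w_i^{-d/s}$. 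Combining with the upper bound forces equality throughout in the limit, yielding \eqref{www1}; moreover, since $\mu$ is then forced to agree with $h_d^{s,w}$ on every piece of arbitrarily fine partitions, $\mu=h_d^{s,w}$, and \eqref{www2} follows because every subsequential limit is the same.

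The main obstacle will be the bookkeeping around the partition: choosing the $K_i$ compact, pairwise disjoint, of small diameter, covering $A$ up to $\mathcal{H}_d$-negligibly small error, each with $\mathcal{H}_d(K_i)>0$ so that \eqref{w*1} applies meaningfully, and on each of which $w$ is uniformly close to a positive constant (this uses parts (a) and (b) of the CPD condition, together with care at points where $w(x,x)$ may be large or where the discontinuity set of $w$ lives). A secondary care-point is handling the allocations $N_i$ that need not be integers or positive, which is dealt with by small perturbation and by the convention that empty or one-point sub-configurations contribute zero energy. Once these technicalities are in place, the Hölder/Jensen step and the cross-term estimate using $s>d$ are straightforward.
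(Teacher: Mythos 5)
The paper itself contains no proof of this lemma: it is imported verbatim as Lemma~6 of \cite{BorHarSaf08}, so there is no in-paper argument to compare against. Your partition-and-H\"older strategy is, in outline, the same one used in that reference and its predecessors: split $A$ into small compact pieces on which $w$ is nearly constant near the diagonal, apply the unweighted hypothesis \eqref{w*1} to each piece, and optimize the allocation $N_i/N$, which is what produces the density $w(x,x)^{-d/s}$ and hence $\mathcal H_d^{s,w}$ in \eqref{www1} and \eqref{www2}. The upper-bound half of your sketch is sound as written, provided you note that the boundedness of the cross-piece terms comes from CPD condition (c) applied to the closed sets $K_i\times K_j$ (which avoid the diagonal because the $K_i$ are disjoint compacta, hence metrically separated), and that pairs within a single $K_i$ that are far apart also need this off-diagonal bound, since near-constancy of $w$ can only be arranged near the diagonal.

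The lower bound, however, contains a genuine gap rather than mere bookkeeping. Your pieces cover $A$ only up to an exceptional set $S$ of small positive $\mathcal H_d$-measure (you must excise the discontinuity set of $w$ and the region where $w(x,x)$ is huge or infinite), and the H\"older/Jensen step $\sum_i w_i\,\mu(K_i)^{1+s/d}\mathcal H_d(K_i)^{-s/d}\ge C\bigl[\sum_i w_i^{-d/s}\mathcal H_d(K_i)\bigr]^{-s/d}$ is valid only when $\sum_i\mu(K_i)=1$. Nothing in your argument prevents an asymptotically positive fraction of the points of a near-minimizer from sitting in $S$, in which case the bound degrades by a factor $\bigl(1-\mu(S)\bigr)^{1+s/d}$ and no longer meets the upper bound. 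The repair is to carry $\overline S$ as an additional piece: CPD condition (b) gives $w\ge h>0$ near the diagonal, so after discarding pairs farther than $\kappa$ apart (which contribute $O(N^2)=o(N^{1+s/d})$) the energy of the points in $S$ is at least $h\,\mathcal E_s(\overline S,\#(\omega_N\cap S))+o(N^{1+s/d})$, and \eqref{w*1} applied to the compact set $\overline S$ inserts the small term $h^{-d/s}\mathcal H_d(\overline S)$ into the H\"older sum; this is what forces the escaping mass to vanish as the partition is refined (one must also arrange the partition so that $\mathcal H_d(\overline S)$, not just $\mathcal H_d(S)$, is small, e.g.\ by choosing pieces whose relative boundaries are $\mathcal H_d$-null). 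A second, smaller point: for closed $K_i$, weak$^*$ convergence gives only $\limsup_N \#(\W\omega_N\cap K_i)/N\le\mu(K_i)$, which is the wrong direction for your lower bound; the standard remedy is to pass to a subsequence along which each ratio converges to some $\alpha_i$ with $\sum_i\alpha_i=1$, run the H\"older computation in the $\alpha_i$, and only then identify $\alpha_i=h_d^{s,w}(K_i)$ from the equality case.
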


\section {A special case of Theorem \ref {Th1}}\label {proof2.2}

We first establish the following special case of Theorem \ref {Th1}.

\begin {proposition}\label {Th1''}
With the assumptions of Theorem \ref {Th1} and the additional hypotheses that {\rm(a)} $\Phi(t)\leq 1$ for $t\in (0,\infty)$ and 
{\rm (b)} $\Phi(t)=0$ for $t>1$, the conclusions of Theorem \ref {Th1} hold; i.e., for $s>d$, we have
\begin {equation}\label {asympt}
g^{\mathbf{v}}_{s,d}(A)=\lim_{N\to \infty}{\frac {\mathcal E^{v_N}_{s}(A,N)}{N^{1+s/d}}}=\frac {C_{s,d}}{\[\mathcal H_d^{s,w}(A)\]^{s/d}},
\end {equation}
where the constant $C_{s,d}$ is as in Theorem~\ref{packing}.
Furthermore, any sequence of  
  asymptotically
$(\mathbf{v},s)$-energy minimizing $N$-point configurations on $A$ is uniformly distributed with respect to the probability measure $h_d^{s,w}$
as $N\to \infty$. 
\end {proposition}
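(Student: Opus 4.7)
My plan is to sandwich $\mathcal E^{v_N}_s(A,N)/N^{1+s/d}$ between matching upper and lower bounds equal to $C_{s,d}/[\mathcal H^{s,w}_d(A)]^{s/d}$, and then to extract the weak-$\ast$ convergence from the equality case of Jensen's inequality. The upper bound is immediate from hypothesis (a): since $\Phi\le 1$, we have $v_N\le w$ pointwise on $A\times A\setminus D(A)$, hence $\mathcal E^{v_N}_s(A,N)\le \mathcal E^w_s(A,N)$, and Theorem~\ref{rth1} applied to $(A,w)$ gives $\limsup_{N\to\infty}\mathcal E^{v_N}_s(A,N)/N^{1+s/d}\le C_{s,d}/[\mathcal H^{s,w}_d(A)]^{s/d}$.

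For the matching lower bound I would localize at scale $\tau r_N$. Fix $\epsilon\in(0,1)$ and, using $\lim_{t\to 0^+}\Phi(t)=1$, choose $\tau=\tau(\epsilon)\in(0,1)$ with $\Phi(t)\ge 1-\epsilon$ on $(0,\tau]$. For each $N$ decompose $A$ into finitely many compact pieces $A^N_1,\dots,A^N_{m_N}$ of diameter at most $\tau r_N$ (for instance by intersecting $A$ with a cubical grid of mesh $\tau r_N/\sqrt p$ in $\RR^p$). By Lemma~\ref{K} each $A^N_j$ is $(\mathcal H_d,d)$-rectifiable with $\mathcal M_d(A^N_j)=\mathcal H_d(A^N_j)$. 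Given any $(v_N,s)$-energy-minimizing configuration $\omega_N$, set $\omega^N_j:=\omega_N\cap A^N_j$ and $n^N_j:=\#\omega^N_j$. On a common piece $|x-y|/r_N\le\tau$, so $v_N(x,y)\ge(1-\epsilon)w(x,y)$, and discarding the non-negative cross-piece interactions yields
$$E^{v_N}_s(\omega_N)\ge(1-\epsilon)\sum_{j=1}^{m_N}\mathcal E^w_s(A^N_j,n^N_j).$$

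The core step is then a uniform refinement of Theorem~\ref{rth1}: for every $\eta>0$ there exists $n_0=n_0(\eta,A,w,s)$ such that $\mathcal E^w_s(K,n)\ge(1-\eta)C_{s,d}n^{1+s/d}/[\mathcal H^{s,w}_d(K)]^{s/d}$ for all compact $K\subset A$ with $\mathcal H_d(K)>0$ and all $n\ge n_0$. A standard Minkowski content/covering count gives $m_N=O(r_N^{-d})$, so the pieces with $n^N_j<n_0$ together account for at most $n_0m_N=O((r_NN^{1/d})^{-d}\cdot N)=o(N)$ points by \eqref{r_N}. Applying the uniform bound on the remaining "large" pieces $J$ and then Jensen's inequality for $\phi(t)=t^{1+s/d}$ with weights $p_j=\mathcal H^{s,w}_d(A^N_j)/\sum_{i\in J}\mathcal H^{s,w}_d(A^N_i)$ gives
$$\sum_{j\in J}\frac{(n^N_j)^{1+s/d}}{[\mathcal H^{s,w}_d(A^N_j)]^{s/d}}\ge\frac{N^{1+s/d}(1-o(1))}{[\mathcal H^{s,w}_d(A)]^{s/d}},$$
and letting $N\to\infty$ then $\epsilon,\eta\to 0^+$ completes the lower bound and hence \eqref{asympt}. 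The weak-$\ast$ convergence then follows by a standard dichotomy: for any subsequential weak-$\ast$ limit $\mu$ of the normalized counting measures of an asymptotically $(\mathbf v,s)$-minimizing sequence, refining the partitions $\{A^N_j\}$ above a fixed partition of $A$ into $h^{s,w}_d$-continuity sets and tracking the slack in Jensen's inequality forces $\mu$ to agree with $h^{s,w}_d$ on those continuity sets; exhausting by finer continuity partitions identifies $\mu=h^{s,w}_d$.

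The principal obstacle is the uniform lower bound above: Theorem~\ref{rth1} only asserts convergence for each fixed compact subset, while here the pieces $A^N_j$ shrink with $N$. I would address this either by rerunning the partition-and-rescaling argument underlying Lemma~\ref{wth1} (exploiting that a CPD weight $w$ is essentially constant near the diagonal, so that $w$-energy on small pieces reduces to the unweighted energy governed by Theorem~\ref{packing}), or by a contradiction argument in which any hypothetical sequence $K_m\subset A$, $n_m\to\infty$ violating uniformity is rescaled to contradict Theorem~\ref{packing}.
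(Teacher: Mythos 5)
The core step of your lower bound --- the claimed uniform refinement of Theorem~\ref{rth1}, namely that for every $\eta>0$ there is an $n_0$ with $\mathcal E^w_s(K,n)\ge(1-\eta)C_{s,d}\,n^{1+s/d}[\mathcal H^{s,w}_d(K)]^{-s/d}$ for \emph{all} compact $K\subset A$ of positive measure and all $n\ge n_0$ --- is false as stated, and this is a genuine gap rather than a technicality. For a counterexample fix $n=n_0$ and let $K_\mu\subset A$ be a union of $n_0$ small pieces of total measure $\mu$ whose pairwise distances are bounded below by a positive constant: then $\mathcal E^w_s(K_\mu,n_0)$ stays bounded as $\mu\to0^+$ (place one point in each piece; $w$ is bounded off the diagonal by condition (c)), while the claimed lower bound blows up like $\mu^{-s/d}$. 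What you actually need is the bound for the shrinking grid pieces $A^N_j$ of diameter $\tau r_N$, but nothing in the hypotheses controls the shape of $A\cap Q$ for a small cube $Q$ (only $A$ itself is assumed $(\mathcal H_d,d)$-rectifiable with $\mathcal M_d=\mathcal H_d$), nor the number of points $n^N_j$ falling in each piece, so the same degeneracy can occur. Theorems~\ref{packing} and \ref{rth1} are obtained by soft subadditivity arguments carrying no rate and no uniformity over subsets, and neither of your proposed remedies supplies one; in particular the compactness/contradiction route founders because $K\mapsto\mathcal H_d(K)$ is not continuous under Hausdorff convergence of compact sets.

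The paper sidesteps the issue by never localizing the lower bound. Given a near-minimizing $\omega_N$, it (i) discards a proportion of points controlled by $C^s$ (those with a neighbor within $CN^{-1/d}$) and by $C^{2d}$ (those far from a finite union $D$ of bi-Lipschitz images of compact subsets of $\RR^d$ supplied by Lemma~\ref{Fed1}); (ii) pushes the surviving well-separated configuration onto $D$ and uses the tail estimate of Lemma~\ref{zeta} in $\RR^d$ to show that the total $w$-energy of pairs farther apart than $\sqrt{C_N}N^{-1/d}$ is $o(N^{1+s/d})$; and (iii) concludes $E^{v_N}_s(\omega_N)\ge\OL\Phi\bigl(C_N^{-1/2}\bigr)\bigl(\mathcal E^w_s(A,\#\eta_{N,C})+o(N^{1+s/d})\bigr)$, so that only the already-established asymptotics on the \emph{whole} set $A$ are invoked, after which $C\to 0$. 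Your upper bound coincides with the paper's, but your weak-$*$ step would also need repair: the paper does not track slack in a fine partition, it uses the standard two-set superadditivity argument together with the strict convexity of $\alpha\mapsto\alpha^{1+s/d}\mathcal H^{s,w}_d(B)^{-s/d}+(1-\alpha)^{1+s/d}\mathcal H^{s,w}_d(A\setminus B)^{-s/d}$, applied to the closures $\OL B$ and $\OL{A\setminus B}$ (legitimate by Lemma~\ref{K}) with suitably re-indexed cutoff sequences.
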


We start by proving the following basic estimate.
\begin {lemma}\label {zeta}
Let $s>d$ and $\omega\subset \RR^d$ be a point configuration such that
$
\delta (\omega)\geq a>0.
$
Then for every $R>a$ and $x\in \omega$,
$$
U_s(\omega;x,R):=\sum_{y\in \omega \atop \left|y-x\right|>R}{\frac {1}{\left|y-x\right|^s}}\leq \frac {d5^{d}}{a^s}\sum_{i=\[R/a\]}^{\infty}{\frac {1}{i^{s-d+1}}}.
$$
\end {lemma}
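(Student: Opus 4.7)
The strategy is a standard shelling argument exploiting the separation hypothesis, so most of the work is geometric bookkeeping.

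\medskip

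\noindent\textbf{Plan.} First I would decompose the far portion of the configuration into spherical shells of width $a$ centered at $x$. Specifically, for each integer $k \ge \lfloor R/a\rfloor$, set
$$
S_k := \{y\in \omega : ka \le |y-x| < (k+1)a\},
$$
so that $\{y\in\omega : |y-x|>R\}\subset \bigcup_{k\ge \lfloor R/a\rfloor} S_k$. Since every $y\in S_k$ satisfies $|y-x|\ge ka$, we get the crude bound
$$
U_s(\omega;x,R)\le \sum_{k\ge \lfloor R/a\rfloor} \frac{\# S_k}{(ka)^s},
$$
and the whole task reduces to bounding $\# S_k$ by a constant multiple of $k^{d-1}$.

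\medskip

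\noindent\textbf{Bounding $\# S_k$ by volume packing.} The plan here is to thicken each point of $S_k$ by a ball of radius $a/2$. By the hypothesis $\delta(\omega)\ge a$, the balls $B(y,a/2)$ for distinct $y\in\omega$ are pairwise disjoint, and for $y\in S_k$ each such ball lies inside the annular shell
$$
A_k := \bigl\{z\in\RR^d : (k-\tfrac12)a \le |z-x| \le (k+\tfrac32)a\bigr\}.
$$
Comparing Lebesgue measures and using $b^d-c^d \le d(b-c)b^{d-1}$ with $b=(k+\tfrac32)a$ and $c=(k-\tfrac12)a$, one obtains
$$
\# S_k \;\le\; \frac{\mathcal L_d(A_k)}{\mathcal L_d(B(0,a/2))} \;\le\; \frac{\beta_d\, d\cdot 2a\cdot ((k+\tfrac32)a)^{d-1}}{\beta_d (a/2)^d} \;=\; 2^{d+1} d\,(k+\tfrac32)^{d-1}.
$$
For $k\ge 1$ we have $k+\tfrac32 \le \tfrac52 k$, so $\# S_k \le 2^{d+1} d\,(5/2)^{d-1} k^{d-1} \le d\cdot 5^d\, k^{d-1}$.

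\medskip

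\noindent\textbf{Assembly.} Substituting this bound back and reindexing gives
$$
U_s(\omega;x,R) \le \sum_{k\ge \lfloor R/a\rfloor} \frac{d\cdot 5^d\, k^{d-1}}{(ka)^s} = \frac{d\cdot 5^d}{a^s}\sum_{k\ge \lfloor R/a\rfloor} \frac{1}{k^{s-d+1}},
$$
which is exactly the claim. The hypothesis $R>a$ ensures $\lfloor R/a\rfloor \ge 1$, so the case $k=0$ (which would require separate handling) does not appear and the constant $\tfrac52 k$ bound on $k+\tfrac32$ is valid throughout the sum. The convergence of the tail series is guaranteed by $s>d$, which makes $s-d+1>1$.

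\medskip

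\noindent\textbf{Main obstacle.} There is no deep obstacle; the only delicate point is tracking constants in the volume estimate carefully enough to land at (or below) the stated $d\cdot 5^d$. A slightly different shell thickness or a different radius for the packing balls would inflate the constant, so the choice of width $a$ and ball radius $a/2$ is what makes the bookkeeping come out cleanly.
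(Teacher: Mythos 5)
Your proposal is correct and follows essentially the same route as the paper: the same width-$a$ shells, the same packing of disjoint radius-$a/2$ balls into the annulus from $(k-\tfrac12)a$ to $(k+\tfrac32)a$, and the same difference-of-powers estimate leading to the constant $d\cdot 5^d$ (the paper computes $(2i+3)^d-(2i-1)^d\le 4d(2i+3)^{d-1}\le d5^d i^{d-1}$, which is your bound after cancelling the factors of $a$). All steps, including the use of $R>a$ to start the sum at $k\ge 1$, check out.
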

\begin{proof} For every point $x\in \omega$, let  
$$
T_i(x)=\{y\in \omega : ai\leq\left|y-x\right|< a(i+1)\},\ \ \ i\in \NN.
$$
Then since the collection of open balls of radius $a/2$ centered at points of $\omega$ is pairwise disjoint, we have
\begin{equation}
\begin{split}
\# T_i(x)&\leq \frac {\mathcal L_d\(B(x,(i+3/2)a)\setminus B(x,(i-1/2)a)\)}{\mathcal L_d(B({\bf 0},a/2))}\\
&=\frac {(i+3/2)^da^d-(i-1/2)^da^d}{(a/2)^d}
=(2i+3)^d-(2i-1)^d\\
&\leq 4d (2i+3)^{d-1}\leq d 5^{d}i^{d-1},\ \ \ i\in \NN.
\end{split}
\end{equation}Hence,
$$
U_s(\omega;x,R)\leq \sum_{i=\[R/a\]}^{\infty}{\sum_{y\in T_i(x)}{\frac {1}{\left|y-x\right|^s}}}\leq  \sum_{i=\[R/a\]}^{\infty}\frac {\# T_i(x)}{(ai)^s}\leq  \frac {d5^{d}}{a^s}\sum_{i=\[R/a\]}^{\infty}\frac {1}{i^{s-d+1}},
$$
which concludes the proof.
\end{proof}

\begin{proof}[Proof of Proposition~\ref{Th1''}.]
From \eqref{v} and the additional hypotheses on $\Phi$ we have $v_N(x,y)\leq w(x,y)$, $x,y\in A$, $x\neq y$. Hence, in view of Theorem \ref {rth1}, there holds for $s>d$, 
\begin {equation}\label {upper}
\OL g^{\mathbf{v}}_{s,d}(A)=\limsup_{N\to \infty}{\frac {\mathcal E^{v_N}_{s}(A,N)}{N^{1+s/d}}}\leq \lim_{N\to \infty}{\frac {\mathcal E^{w}_{s}(A,N)}{N^{1+s/d}}}=\frac {C_{s,d}}{\left[\mathcal H_d^{s,w}(A)\right]^{s/d}}, 
\end {equation}
 proving the upper estimate for (\ref {asympt}). \\

{\bf Bounded weight.} We first establish the required  lower bound for (\ref {asympt}) under the assumption that the CPD-weight $w$ is bounded on $A\times A$. 
Let $h$ and $\kappa$ be positive numbers such that $w(x,y)>h$ whenever $x,y\in A$ and $\left|x-y\right|<\kappa$. Such numbers $h$ and $\kappa$ exist in view of condition (b) in the definition of the CPD-weight. 
Define
\begin {equation}\label {OLPhi}
\OL \Phi(t):=\inf_{u\in (0,t]}{\Phi(u)},\ \ \ t>0.
\end {equation}
Let $\{\omega_N\}$ be any sequence of point configurations on $A$ such that $\# \omega_N=N$ and
\begin {equation}\label {*}
\left|\mathcal E^{v_N}_{s}(A,N)-E^{v_N}_{s}(\omega_N)\right|=o(N^{1+s/d}),\ \ \ N\to\infty.
\end {equation}

Let \begin{equation}\label{CN}
C_N:=r_N N^{1/d} \qquad (N=2,3,\ldots).
\end{equation} 
Our goal is to show that the total energy of the pairs of points in $\omega_N$ that are at least $\sqrt{C_N}N^{-1/d}$ away from each other is $o(N^{1+s/d})$, from which the lower bound will follow.   The argument consists of the following five steps: 

\emph{Step 1.} For a sufficiently small positive constant $C$, we remove from $\omega_N$ all those points whose $CN^{-1/d}$-neighborhood contains another point from $\omega_N$ and show that the configuration $\omega_{N,C}$ of the remaining points has sufficiently large cardinality.

\emph{Step 2.} We choose a subset $D\subset A$ that consists of finitely many pairwise disjoint bi-Lipschitz embeddings of compact subsets of $\RR^d$ and whose complement with respect to $A$ has small $\mathcal H_d$-measure and
 show that the set $\eta_{N,C}$ of points from $\omega_{N,C}$ that are sufficiently close to $D$ still has a sufficiently large cardinality.

\emph{Step 3.} We move each point in $\eta_{N,C}$ to a close point in $D$ and show that the resulting configuration $z_{N,C}$ has almost the same separation as $\eta_{N,C}$.

\emph{Step 4.} We prove that the total energy of the pairs of points in $z_{N,C}$ that are sufficiently separated from each other is $o(N^{1+s/d})$. Since the bi-Lipschitz pieces of $D$ are metrically separated, only the pairs of points from the same piece will make a significant contribution. This allows us to switch to estimating energies in $\RR^d$ using Lemma \ref {zeta}.

\emph{Step 5.} Since $\lim_{t\to 0^+}\Phi(t)=1$, by varying the constant $C$ the leading term of the $(w,s)$-energy of $\eta_{N,C}$ can be made as close as we like to the leading term of $(v_N,s)$-energy of $\omega_N$ thus giving us a sharp lower estimate for $\mathcal E^{v_N}_s(A,N)$. 
 
For Step 1, choose a number $C\in (0,1/2)$ such that 
\begin{equation}\label{alphaC}
\alpha_C:=3^{d+1} \beta_{p-d}\beta_{p}^{-1}C^{2d}+C^sh^{-1}g^w_{s,d}(A)<1
\end{equation}
and set
$$
\omega_{N,C}:=\{x\in \omega_N : {\rm dist}(x,\omega_N\setminus \{x\})>CN^{-1/d}\},\ \ \ N\in \NN.
$$
Let $y_x$ be a point in $\omega_N\setminus \{x\}$ closest to a given point $x\in \omega_N$.  Then, for every $N$ sufficiently large,
\begin{equation*}
\begin{split}
{E}^{v_N}_s&(\omega_N)=\sum_{x\in \omega_N}{\sum_{y\in \omega_N\setminus \{x\}}{\frac {v_N(x,y)}{\left|x-y\right|^s}}}\geq \sum_{x\in \omega_N\setminus \omega_{N,C}}{\frac {\Phi\(\left|x-y_x\right|r_N^{-1}\)w(x,y_x)}{\left|x-y_x\right|^s}}\\
&\geq  \sum_{x\in \omega_N\setminus \omega_{N,C}}{\frac {\OL\Phi\(CN^{-1/d}r_N^{-1}\)w(x,y_x)}{\left|x-y_x\right|^s}}\geq h\OL\Phi\(\frac {C}{C_N}\)
\sum_{x\in \omega_N\setminus \omega_{N,C}}\frac{1}{\left|x-y_x\right|^s}\\
&
\geq \# (\omega_N\setminus \omega_{N,C})\cdot  hC^{-s}\OL\Phi\(\frac {C}{C_N}\)N^{s/d}.
\end{split}
\end{equation*}
Consequently,
\begin{equation} \label{pf5.1.1}\begin{split}
g^w_{s,d}(A)&=\lim_{N\to \infty}{\frac {\mathcal E^{w}_{s}(A,N)}{N^{1+s/d}}}\geq 
\limsup_{N\to \infty}{\frac {\mathcal E^{v_N}_{s}(A,N)}{N^{1+s/d}}}\\
&=\limsup_{N\to \infty}{\frac {E^{v_N}_s(\omega_N)}{N^{1+s/d}}}\geq hC^{-s}\limsup_{N\to \infty}{\frac {\# (\omega_N\setminus \omega_{N,C})}{N}}.
\end{split}
\end{equation}
By Theorem \ref {rth1}, since $\mathcal H_d(A)>0$, the quantity $g^w_{s,d}(A)$ is finite. Hence, from \eqref{pf5.1.1}, we have
\begin{equation}\label {omegaNC}
\liminf_{N\to \infty}{\frac {\#  \omega_{N,C}}{N}}\geq 1-C^sh^{-1}g^w_{s,d}(A),
\end {equation}
which completes Step 1. 

To proceed with Step 2, let $\delta=C^{4d}$.  In view of Lemma \ref {Fed1}, there exist compact sets $K_1,K_2,\ldots, K_m\subset \RR^d$ and bi-Lipschitz mappings $\psi_i:K_i\to \RR^{p}$ with bi-Lipschitz constant $\lambda$, $i=1,\ldots,m$, such that the set 
$$
D:=\bigcup_{i=1}^{m}\psi_i(K_i) 
$$
is contained in $A$ and satisfies
$$
\mathcal H_d(D)>\mathcal H_d(A)-\delta.
$$
Moreover, $\psi_i(K_i)$, $i=1,\ldots,m$, are pairwise disjoint.
Since each set $\psi_i(K_i)$ is $d$-rectifiable, the set $D$ is also $d$-rectifiable, and by Theorem \ref {Fed},
$$
\mathcal M_d(D)=\mathcal H_d(D)>\mathcal H_d(A)-\delta=\mathcal M_d(A)-\delta.
$$
Let 
$
h_N:= {C^2}/(3N^{1/d}),\ \ N\in \NN,
$
and recall that $A(\epsilon)$ denotes the $\epsilon$-neighborhood of a set $A$ in $\RR^{p}$.
Then for every $N$ sufficiently large, 
$$
\mathcal L_{p}\(A(h_N)\setminus D(h_N)\)=\mathcal L_{p}\(A(h_N)\)-\mathcal L_{p}\( D(h_N)\)\leq 
$$
$$
\leq\beta_{p-d}\(\mathcal M_d(A)+\delta\)h_N^{p-d}-\beta_{p-d}\(\mathcal M_d(D)-\delta\)h_N^{p-d}\leq 3\delta \beta_{p-d}h_N^{p-d}.
$$
Let $\W \eta_{N,C}:=\omega_{N,C}\setminus D(3h_N)$ and
$$
F_N=\bigcup_{x\in \W \eta_{N,C}}{B(x,h_N)},\ \ \ N\in \NN.
$$
Then $F_N\subset A(h_N)\setminus D(h_N)$. Since for every $x,y\in \omega_{N,C}$, $x\neq y$, we have
$$
\left|x-y\right|\geq CN^{-1/d}\geq C^2N^{-1/d}=3h_N,
$$
the collection $\{B(x,h_N) : x\in \W \eta_{N,C}\}$, is pairwise disjoint. Thus
$$
\# \W \eta_{N,C}=(\beta_{p}h_N^{p})^{-1}\mathcal L_{p}(F_N)\leq (\beta_{p}h_N^{p})^{-1}\mathcal L_{p}\(A(h_N)\setminus D(h_N)\)\leq
$$
\begin {equation}\label{uNC}
\leq 3\delta \beta_{p-d}h_N^{p-d}(\beta_{p}h_N^{p})^{-1}=3^{d+1}\beta_{p-d}\beta_{p}^{-1}C^{2d}N.
\end {equation}
Setting $\eta_{N,C}:=\omega_{N,C}\cap D(3h_N)$,  it follows from \eqref{uNC}, \eqref{omegaNC}, and \eqref{alphaC}, that 
 \begin{equation}\label{alphaC2}
 \liminf_{N\to \infty}\frac{\#\eta_{N,C}}{N}\ge 1-C^sh^{-1}g^w_{s,d}(A)-3^{d+1}\beta_{p-d}\beta_{p}^{-1}C^{2d}=1-\alpha_{C},
 \end{equation} 
 which completes Step 2.

 For the next step, let $z:\eta_{N,C}\to D$ be a mapping, where $z(x)$, $x\in \eta_{N,C}$, is a point in $D$ such that $\left|z(x)-x\right|<3h_N$. Then
$$
\left|z(x)-x\right|<C^2N^{-1/d}\leq C\delta (\omega_{N,C})\leq C\delta (\eta_{N,C}),\ \ \ x\in \eta_{N,C},
$$
and for every pair of distinct points $x,y\in \eta_{N,C}$, we have
$$
\left|z(x)-z(y)\right|\geq \left|x-y\right|-\left|z(x)-x\right|-\left|z(y)-y\right|\geq
$$
$$
\geq\left|x-y\right|-2C\delta (\eta_{N,C})\geq (1-2C)\left|x-y\right|>0,
$$
which implies that $z$ is an injective mapping. Similarly,
$$
\left|z(x)-z(y)\right|\leq (1+2C)\left|x-y\right|, \ \ \ x,y\in \eta_{N,C},\ \ x\neq y,
$$
completing Step 3.

We now consider Step 4.  Let $\xi_N:=(1-2C)\sqrt{C_N}N^{-1/d}$ and  $z_{N,C}:=z(\eta_{N,C})=\{z(x): x\in \eta_{N,C}\}$.
Then since $w$ was assumed to be bounded, we have
\begin{equation*}
\begin{split}
\Pi^w_s(\eta_{N,C})&:=\sum_{ {x,y\in \eta_{N,C}}\atop {|x-y|> \sqrt{C_N}N^{-1/d}} }{\frac {w(x,y)}{\left|y-x\right|^s}}\leq \|w\|_{\infty}\sum_{ {x,y\in \eta_{N,C}}\atop {|x-y|> \sqrt{C_N}N^{-1/d}} }{\frac {1}{\left|y-x\right|^s}}\\
 &
\leq (1+2C)^s\|w\|_\infty \sum_{ {x,y\in \eta_{N,C}}\atop {|x-y|> \sqrt{C_N}N^{-1/d}} }{\frac {1}{\left|z(x)-z(y)\right|^s}}\\ &\leq (1+2C)^s\|w\|_\infty \sum_{x,y\in z_{N,C}\atop \left|x-y\right|>\xi_N}{\frac {1}{\left|y-x\right|^s}}.
\end{split}
\end{equation*}

Let $G_i=\psi_i^{-1}(z_{N,C})\cap K_i$, $i=1,\ldots,m$. Then 
$$
\delta(G_i)\geq \frac {1-2C}{\lambda}\delta(\eta_{N,C})\geq \theta_N:=\frac {C(1-2C)}{\lambda N^{1/d}}
$$ 
and since $\psi_i(K_i)$ are pairwise disjoint,
$\sum_{i=1}^{m}{(\# G_i)}=\# \eta_{N,C}$. Since
$$
\tau:=\min_{1\leq i\neq j\leq m}{\rm dist}(\psi_i(K_i),\psi_j(K_j))>0,
$$ 
and $G_i\subset \RR^d$, $i=1,\ldots,m$, taking into account Lemma \ref {zeta}, we have, for $N$ sufficiently large, with $\sigma^s_w:=
(1+2C)^s\|w\|_\infty$
\begin{equation*}
\begin{split}
\Pi^w_s&(\eta_{N,C})\leq \sigma^s_w\(\sum_{i=1}^{m}{\sum_{x,y\in z_{N,C}\cap \psi_i(K_i)\atop \left|x-y\right|>\xi_N}{\frac {1}{\left|y-x\right|^s}}}+\sum_{x,y\in z_{N,C}\atop \left|x-y\right|\geq\tau}{\frac {1}{\left|y-x\right|^s}}\)\\
&
\leq \sigma^s_w\(\lambda^s \sum_{i=1}^{m}{\sum_{x,y\in z_{N,C}\cap \psi_i(K_i)\atop \left|y-x\right|>\xi_N}{\frac {1}{\left|\psi_i^{-1}(x)-\psi_i^{-1}(y)\right|^s}}}+\tau^{-s}N^2\)\\
&\leq \sigma^s_w\(\lambda^s \sum_{i=1}^{m}{\sum_{x\in G_i}\sum_{y\in G_i\atop \left|y-x\right|>\xi_N/\lambda}{\frac {1}{\left|x-y\right|^s}}}+\tau^{-s}N^2\)\
\\&\leq \sigma^s_w\(\lambda^s \sum_{i=1}^{m}{\sum_{x\in G_i}\frac {d5^{d}}{\theta_N^s}\sum_{j=\[\frac {\xi_N}{\lambda\theta_N}\]}^{\infty}{\frac {1}{j^{s-d+1}}}}+\tau^{-s}N^2\)\\
&
=\sigma^s_w\(\lambda^{2s} \sum_{i=1}^{m}{\frac {d5^{d}(\# G_i)N^{s/d}}{C^s(1-2C)^s}\sum_{j=\[\sqrt{C_N}/C\]}^{\infty}{\frac {1}{j^{s-d+1}}}}+\tau^{-s}N^2\)
\\
&
=\sigma^s_w\(\lambda^{2s}\frac {d5^{d}(\# \eta_{N,C})o(N^{s/d})}{C^s(1-2C)^s}+o(N^{1+s/d})\)\\
&=o(N^{1+s/d}), \qquad N\to\infty,
\end{split}
\end{equation*}
which completes Step 4. 

For the last step, we use the above estimates to obtain
\begin{equation} \label {**}
\begin{split}
E^{v_N}_s(\omega_N)&\geq E^{v_N}_s(\eta_{N,C})\geq \sum_{x,y\in \eta_{N,C},\ x\neq y\atop \left|x-y\right|\leq \sqrt{C_N}/N^{1/d}}{\frac {\Phi\(\frac {\left|x-y\right|}{r_N}\)w(x,y)}{\left|x-y\right|^s}}
\\
&\geq \OL \Phi\(\frac {1}{\sqrt{C_N}}\)\sum_{x,y\in \eta_{N,C},\ x\neq y\atop \left|x-y\right|\leq \sqrt{C_N}/N^{1/d}}\frac {w(x,y)}{\left|x-y\right|^s}
\\
&=\OL \Phi\(\frac {1}{\sqrt{C_N}}\)\(E^w_s(\eta_{N,C})-\Pi^w_s(\eta_{N,C})\)
\\
&\geq \OL \Phi\(\frac {1}{\sqrt{C_N}}\)\(\mathcal E^w_s(A,\# \eta_{N,C})+o(N^{1+s/d})\),\ \ \ N\to \infty.
\end{split}
\end{equation}

Then,   taking into account Theorem \ref {rth1},  relations \eqref{*} and \eqref {alphaC2} and the fact that  $\lim_{t\to 0^+}{\OL \Phi(t)}=1$, we have
\begin{equation*}  
\begin{split}
\UL g^{\mathbf{v}}_{s,d}(A)&=\liminf_{N\to\infty}\ \frac {\mathcal E^{v_N}_s(A,N)}{N^{1+s/d}}=\liminf_{N\to\infty}\frac {E^{v_N}_s(\omega_N)}{N^{1+s/d}}\\
&\geq 
\liminf_{N\to\infty}\frac {\OL \Phi\(\frac {1}{\sqrt{C_N}}\)\mathcal E^w_s(A,\# \eta_{N,C})}{N^{1+s/d}}
\\
&\geq\lim_{N\to\infty}\frac {\mathcal E^w_s(A,\# \eta_{N,C})}{(\# \eta_{N,C})^{1+s/d}}\cdot \liminf_{N\to\infty}\(\frac {\# \eta_{N,C}}{N}\)^{1+s/d}
\\
&\geq\frac {C_{s,d}}{\mathcal H^{s,w}_{d}(A)^{s/d}}\(1-\alpha_C\)^{1+s/d}.
\end{split}
\end{equation*}
Letting $C\to 0$ gives 
$
\UL g^{\mathbf{v}}_{s,d}(A)\geq   {C_{s,d}}/{\mathcal H^{s,w}_{d}(A)^{s/d}},
$
 completing Step 5.  
Taking into account (\ref {upper}), we obtain relation (\ref {asympt}) for the case of bounded CPD-weight $w$.\\

{\bf Unbounded weight.} We now prove \eqref{asympt}  for  an   arbitrary (not necessarily bounded) CPD-weight $w$  on $A\times A$.  Let 
$$
w^M(x,y):=\min\{w(x,y),M\}, \ \ x,y\in A, \ \ M>0. 
$$
It is not difficult to see that $w^M$ is also a CPD-weight function on $A\times A$.   Let $\mathbf{u}=\{u_N\}$ denote the  sequence of `truncated' weights
$$
u_N(x,y):=\Phi\(\frac {\left|x-y\right|}{r_N}\)w^M(x,y),\ \ \ x,y\in A,\ \ x\neq y,\ \ \ N\in \NN.
$$
As shown above (\ref {asympt}) holds for bounded CPD-weights, and hence, for every $M>0$, we have
$$
\UL g^{\mathbf{v}}_{s,d}(A)\geq g^{\mathbf{u}}_{s,d}(A)=C_{s,d}\(\int_{A}{(w^M(x,x))^{-d/s}{\rm d}\mathcal H_d(x)}\)^{-s/d}.
$$
Letting $M\to \infty$, we obtain from the Monotone Convergence Theorem that
$$
\UL g^{\mathbf{v}}_{s,d}(A)\geq g^w_{s,d}(A)=C_{s,d}\(\int_{A}{(w(x,x))^{-d/s}{\rm d}\mathcal H_d(x)}\)^{-s/d}=\frac {C_{s,d}}{\left[\mathcal H_d^{s,w}(A)\right]^{s/d}}.
$$
Together with \eqref{upper}, we get (\ref {asympt}) for the case of an unbounded weight.\\

\noindent {\bf Remark.} It is easy to see that (\ref {asympt}) holds if $r_N$ is only defined for a subsequence $\mathcal N\subset \NN$, a fact that we shall use in the next part of the proof.
In this case, we shall also use $g^{\mathbf{v}}_{s,d}(A)$ to denote the limit along this subsequence.\\

We next prove the limit distribution assertion in Proposition \ref {Th1''}.   
Let $\{\omega_N\}$ be an asymptotically $\(\mathbf{v},s\)$-energy minimizing sequence of $N$-point configurations on $A$.   It might appear to the reader that a simple argument would show that $\{\omega_N\}$ is also asymptotically $(w,s)$-energy minimizing so that the limiting distribution statement in Theorem~\ref{rth1} may be applied.  However, the authors have not as yet found such an argument.  Instead, we adapt  methods in \cite{HarSaf05} and \cite{BorHarSaf08} to the varying weight case.  

Let $B$ be an arbitrary \emph{almost clopen} subset of $A$, that is,     the boundary $\partial_A B$ of  $B$ relative to $A$ has $\mathcal H^{s,w}_d$-measure zero.  Since $B$ is an arbitrary almost clopen subset of $A$, the condition that $\omega_N$ is uniformly distributed with respect to $h^{s,w}_d$ is equivalent to   \begin {equation}\label {distlim}
\lim_{N\to \infty}{\frac {\# (\omega_N\cap B)}{N}}= h^{s,w}_d(B).
\end {equation}

     By Lemma \ref {K}, both the closure $\OL B$ of $B$ and the closure $\OL {D}$ of $D:=A\setminus B$ are compact $(\mathcal H_d,d)$-rectifiable sets, for which the $d$-dimensional Minkowski content exists and coincides with the $\mathcal H_d$-measure.

  We  consider the case $0<\mathcal H_d^{s,w}(B)<\mathcal H^{s,w}_d(A)$ and leave the cases $\mathcal H_d^{s,w}(B)=0$ or $\mathcal H_d^{s,w}(B)=\mathcal H^{s,w}_d(A)$ to the reader. Then both $\mathcal H_d^{s,w}(\OL B)$ and $\mathcal H_d^{s,w}(\OL {D})$ are positive. 
Let   $\mathcal N\subset \NN$ be an infinite subset such that the limit
\begin {equation}\label {a}
\alpha:=\lim_{\mathcal N\ni N\to\infty}{\frac {\# (\omega_N\cap B)}{N}}
\end {equation}
exists. Denote $N_B=\# (\omega_N\cap B)$ and $N_D=\# (\omega_N\setminus B)$. Then for every $N\in \mathcal N$, we have
$$
E^{v_N}_s(\omega_N)\geq E^{v_N}_s(\omega_N\cap B)+E^{v_N}_s(\omega_N\setminus B)\geq \mathcal E^{v_N}_s(\OL B,N_B)+\mathcal E^{v_N}_s(\OL D,N_D).
$$
If $\alpha\in (0,1]$, 
denote by $\mathcal N_1\subset \mathcal N$ an infinite subset such that the sequence $\{\# (\omega_N\cap B)\}_{N\in \mathcal N_1}$ is strictly increasing. Let also $\mathcal M_1=\{\# (\omega_N\cap B) : N\in \mathcal N_1\}$. 

If $\alpha\in [0,1)$, we further
let $\mathcal N_2 \subset \mathcal N_1 $ (if $\alpha=0$, let $\mathcal N_2 \subset \mathcal N$) be an infinite subset such that 
 the sequence $\{\# (\omega_N\setminus B)\}_{N\in \mathcal N_2}$ is strictly increasing. Let also $\mathcal M_2=\{\# (\omega_N\setminus B) : N\in \mathcal N_2\}$.

Denote by $n(M)$, $M\in \mathcal M_1$, the unique integer from $\mathcal N_1$ such that $\# (\omega_{n(M)}\cap B)=M$ and let $k(M)$, $M\in \mathcal M_2$ be the unique integer from $\mathcal N_2$ such that $\# (\omega_{k(M)}\setminus B)=M$. 
Note that if $\alpha\in (0,1]$, in view of assumption (\ref {a}),
$$
r_{n(M)}=\frac {C_{n(M)}}{n(M)^{1/d}}
=\frac {\alpha^{1/d} C_{n(M)}}{M^{1/d}}(1+o(1)),\ \ \ \mathcal M_1\ni M\to \infty,
$$
where $C_{n(M)}\to \infty$, $\mathcal M_1\ni M\to \infty$. Analogously, if $\alpha\in [0,1)$, we have
$$
r_{k(M)}=\frac {C_{k(M)}}{k(M)^{1/d}}=\frac {(1-\alpha )^{1/d} C_{k(M)}}{M^{1/d}}(1+o(1)),\ \ \ \mathcal M_2\ni M\to \infty,
$$
where $C_{k(M)}\to \infty$, $\mathcal M_2\ni M\to \infty$.

In view of relation (\ref {asympt}), for any positive sequence $\{\kappa_N\}$ satisfying $\lim_{N\to \infty}\kappa_NN^{1/d}=\infty$, we have
\begin{equation}\label{gu}
g^{\mathbf{u}}_{s,d}(V)=g^w_{s,d}(V)=\frac{C_{s,d}}{\left[\mathcal H^{s,w}_d(V)\right]^{s/d}}, \quad \text{($V=\OL B$ or $V=\OL {D}$)},
\end{equation}
where $\mathbf{u}=\{u_N\}$ is given by 
$
u_N(x,y)=\Phi\(  {\left|x-y\right|}/{\kappa_N}\)w(x,y).
$

Suppose $\alpha\in (0,1)$. 
Applying relation (\ref {asympt}) to the set $A$,   using (\ref {gu}) with $\kappa_M= r_{n(M)}$ for $M\in\mathcal{M}_1$ (respectively, $\kappa_M= r_{k(M)}$ for $M\in\mathcal{M}_2$) and  $V=\OL B$ (respectively, $\OL D$),   
and taking into account that $\mathcal N_2\subset \mathcal N_1$, we  obtain
\begin{equation*} 
\begin{split}
 \frac {C_{s,d}}{\left[\mathcal H^{s,w}_d(A)\right]^{s/d}}  &=\lim_{N\to \infty}{\frac {E^{v_N}_s(\omega_N)}{N^{1+s/d}}}
= \lim_{\mathcal N_2\ni N\to\infty}{\frac {E^{v_N}_s(\omega_N)}{N^{1+s/d}}}
\\
&\geq \liminf_{\mathcal N_2 \ni N\to\infty}{\frac {\mathcal E^{v_N}_s(\OL B,N_B)}{N_B^{1+s/d}}\cdot \(\frac {N_B}{N}\)^{1+s/d}}
\\
&\hspace{1in} +\liminf_{\mathcal N_2 \ni N\to\infty}{\frac {\mathcal E^{v_N}_s(\OL D,N_D)}{N_D^{1+s/d}}\cdot \(\frac {N_D}{N}\)^{1+s/d}}
\\
&\geq\alpha^{1+s/d}\liminf_{\mathcal N_1\ni N\to\infty}{\frac {\mathcal E^{v_N}_s(\OL B,N_B)}{N_B^{1+s/d}}}
\\
&\hspace{1in}+(1-\alpha)^{1+s/d}\liminf_{\mathcal N_2\ni N\to\infty}{\frac {\mathcal E^{v_N}_s(\OL D,N_D)}{N_D^{1+s/d}}}
\\
&=\alpha^{1+s/d}\lim_{\mathcal M_1\ni M\to\infty}{\frac {\mathcal E^{v_{n(M)}}_s(\OL B,M)}{M^{1+s/d}}}
\\
&\hspace{1in}
+(1-\alpha)^{1+s/d}\lim_{\mathcal M_2\ni M\to\infty}{\frac {\mathcal E^{v_{k(M)}}_s(\OL D,M)}{M^{1+s/d}}}
\\
&
=C_{s,d} \(\frac{\alpha^{1+s/d}}{\mathcal H^{s,w}_d(B)^{s/d}}+\frac{(1-\alpha)^{1+s/d}}{\mathcal H^{s,w}_d(D)^{s/d}}\)=:F(\alpha).
\end{split}
\end {equation*}
We remark that if $\alpha=0$ or $\alpha=1$, then appropriate terms may be dropped and the final inequality   still holds.
Furthermore, it is not difficult to see that the minimum value of $F$ on [0,1] is given by ${C_{s,d}}{\left[\mathcal H^{s,w}_d(A)\right]^{-s/d}}$ and occurs only at the   point
$$
\W \alpha:=\frac {\mathcal H^{s,w}_d(B)}{\mathcal H^{s,w}_d(B) +\mathcal H^{s,w}_d(D)}=h^{s,w}_d(B).
$$
Hence, the above inequality shows   $\alpha=\W\alpha$. Since $\mathcal N\subset \NN$ is arbitrary, we obtain (\ref{distlim}), which completes the proof of Proposition~\ref{Th1''}.

%

\end{proof}

 We shall use the following corollary in the proof of Theorem~\ref{C2}.
\begin {corollary}\label {C1}
Let $s>d$, $A\subset \RR^{p}$, $p\geq d$, be a compact $(\mathcal H_d,d)$-rectifiable set with $\mathcal H_d(A)=\mathcal M_d(A)>0$, $w$ be a CPD-weight
function on $A\times A$, and $\{r_N\}$ be a positive sequence satisfying (\ref {r_N}).
For any asymptotically $(w,s)$-energy minimizing sequence $\{\omega_N\}$ of point configurations on $A$ such that $\# \omega_N=N$, there holds
$$
P^w_s(\omega_N,r_N):=\sum_{x,y\in \omega_N \atop \left|x-y\right|>r_N}{\frac {w(x,y)}{\left|x-y\right|^s}}=o(N^{1+s/d}),\ \ \ N\to\infty.
$$
\end {corollary}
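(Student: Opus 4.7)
The plan is to apply Proposition~\ref{Th1''} to the specific choice $\Phi := \chi_{(0,1]}$, the indicator function of the interval $(0,1]$. This $\Phi$ is non-negative, bounded above by $1$, vanishes on $(1,\infty)$, and satisfies $\lim_{t\to 0^+}\Phi(t) = 1$, so the additional hypotheses (a) and (b) of Proposition~\ref{Th1''} hold. With this choice, the weight $v_N(x,y) := \Phi(|x-y|/r_N)\, w(x,y)$ reduces to $w(x,y)$ when $0<|x-y|\le r_N$ and to $0$ when $|x-y|>r_N$; consequently, for any $N$-point configuration $\omega_N\subset A$,
$$
E^w_s(\omega_N) - E^{v_N}_s(\omega_N) \;=\; P^w_s(\omega_N,r_N).
$$

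The next step is to compare the two energy scales. The hypothesis $r_N N^{1/d}\to\infty$ of Corollary~\ref{C1} is precisely condition (\ref{r_N}), so Proposition~\ref{Th1''} applies and yields
$$
\frac{\mathcal{E}^{v_N}_s(A,N)}{N^{1+s/d}} \longrightarrow \frac{C_{s,d}}{[\mathcal{H}^{s,w}_d(A)]^{s/d}}.
$$
Since $\{\omega_N\}$ is asymptotically $(w,s)$-energy minimizing, Theorem~\ref{rth1} gives the same limit for $E^w_s(\omega_N)/N^{1+s/d}$.

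Combining these, and using the trivial inequality $E^{v_N}_s(\omega_N)\ge \mathcal{E}^{v_N}_s(A,N)$, we obtain
$$
0 \;\le\; \frac{P^w_s(\omega_N,r_N)}{N^{1+s/d}} \;=\; \frac{E^w_s(\omega_N) - E^{v_N}_s(\omega_N)}{N^{1+s/d}} \;\le\; \frac{E^w_s(\omega_N) - \mathcal{E}^{v_N}_s(A,N)}{N^{1+s/d}} \;\longrightarrow\; 0,
$$
which is exactly the conclusion $P^w_s(\omega_N,r_N) = o(N^{1+s/d})$. Every ingredient comes directly from results already established, so there is no substantial obstacle; the only subtlety worth noting is verifying that the discontinuous cutoff $\chi_{(0,1]}$ is admissible in Proposition~\ref{Th1''}, which it is, since no continuity or lower semi-continuity of $\Phi$ (hence of $v_N$) is required by that proposition—only boundedness, the bound $\Phi\le 1$, vanishing beyond $t=1$, and the limit at $0^+$.
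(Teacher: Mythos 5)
Your proposal is correct and follows essentially the same route as the paper: both apply Proposition~\ref{Th1''} with the indicator cutoff $\Phi_0=\chi_{[0,1]}$ (your $\chi_{(0,1]}$ is identical on the relevant domain $(0,\infty)$) and exploit the identity $E^w_s(\omega_N)=E^{v_N}_s(\omega_N)+P^w_s(\omega_N,r_N)$ together with the fact that $E^w_s(\omega_N)/N^{1+s/d}$ and $\mathcal E^{v_N}_s(A,N)/N^{1+s/d}$ share the same finite limit. The only cosmetic difference is that the paper phrases the squeeze by first noting that $\{\omega_N\}$ is also asymptotically $(\mathbf{v},s)$-energy minimizing, whereas you use the trivial bound $E^{v_N}_s(\omega_N)\ge\mathcal E^{v_N}_s(A,N)$ directly; these are equivalent.
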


\begin{proof}
  Let $\Phi_0:=\chi_{[0,1]}$ be the characteristic function of $[0,1]$ and let $\mathbf{v}=\{v_N\}$ be as in Proposition~\ref {Th1''} with $\Phi$ replaced by $\Phi_0$.  Since $\{\omega_N\}$ is an asymptotically $(w,s)$-energy minimizing,   it is also asymptotically $(\mathbf{v},s)$-energy minimizing.  Observing that 
  $$E_s^w(\omega_N)=E_s^{v_N}(\omega_N)+P^w_s(\omega_N,r_N),\quad (N\in \NN)$$
and  applying Proposition~\ref {Th1''} completes the proof.  

\end{proof}

\section {Proof of Theorems~\ref{Th1} and \ref{C2}.} \label{proof2.2b}

We shall first prove Theorem~\ref{C2} from which we will deduce Theorem~\ref{Th1}.

\begin{proof}[Proof of Theorem \ref {C2}.] We first show that condition (\ref {1'}) implies that there is a positive sequence $\{r_N\}$ satisfying (\ref {r_N}) such that
\begin {equation}\label {I}
\lim_{N\to\infty}{I^w(v_N,r_N)}=\lim_{N\to\infty}{S^w(v_N,r_N)}=1.
\end {equation}
Indeed, for every $K\in \NN$, one can choose a number $N_K\in \NN$ such that
$$
\left|I^w(v_N,KN^{-1/d})-1\right|<\frac {1}{K},\ \ \ \text{ for every }\ \ N>N_K,
$$
and that $N_1<N_2<N_3<\ldots$. Furthermore, we can increase each $N_k$ so that $N_k>k^{d+1}$ and $\{N_k\}$ is still an increasing sequence.  Define a sequence $\{C_N\}$ in the following way. Let $C_1,\ldots,C_{N_1}$ be arbitrary positive numbers and let $C_N:=1$ for $N_1<N\leq N_2$, $C_N:=2$ for $N_2<N\leq N_3$, ..., $C_N:=m$ for $N_m<N\leq N_{m+1}$, ... . Then since $N>N_{C_N}$ for every $N>N_1$, we have 
$$
\left|I^w(v_N,C_NN^{-1/d})-1\right|<\frac {1}{C_N},\ \ \ \text{ for every }\ \ N>N_1.
$$
Since $C_N\to\infty$, $N\to \infty$, we have
$$
\lim_{N\to\infty}{I^w(v_N,\tau_N)}=1,
$$
where $\tau_N:=C_NN^{-1/d}$. Since $\tau_N=C_NN^{-1/d}<C_NN_{C_N}^{-1/d}<C_N^{-1/d}$, we have $\tau_N\to 0$ as $N\to \infty$.  Analogously, one can show that there is a positive sequence $\{\kappa_N\}$ satisfying (\ref {r_N}) such that  $k_N\to 0$ as $N\to \infty$ and 
$$
\lim_{N\to\infty}{S^w(v_N,\kappa_N)}=1.
$$
Then $r_N:=\min\{\tau_N,\kappa_N\}$, for $N\in \NN$,  satisfies (\ref {r_N}), $r_N\to 0$ as $N\to \infty$, and
$$
I^w(v_N,\tau_N)\leq I^w(v_N,r_N)\leq S^w(v_N,r_N)\leq S^w(v_N,\kappa_N),
$$
which implies (\ref {I}). 

For $N\in \NN$, and $x,y\in A$, we define $\mathbf{u}^0=\{u_N^0\}$ by
\begin{equation}\label{u0N}
u^0_N(x,y):=\Phi_0\(  \frac{\left|x-y\right|}{r_N}\)w(x,y),
\end{equation}
where we recall that $\Phi_0=\chi_{[0,1]}$.
It is not difficult to see that for any sequence $\{\omega_N\}$ of $N$-point configurations on $A$     we have, for $N$ sufficiently large, that
\begin {equation}\label {p5}
\begin{split}
E^{v_N}_s(\omega_N)&\geq \sum_{x,y\in \omega_N\atop 0<\left|x-y\right|\leq r_N}{\frac {v_N(x,y)}{\left|x-y\right|^s}}\geq I^w(v_N,r_N)\sum_{x,y\in \omega_N\atop 0<\left|x-y\right|\leq r_N}{\frac {w(x,y)}{\left|x-y\right|^s}}\\
&= I^w(v_N,r_N)\sum_{x,y\in \omega_N\atop x\neq y}{\frac {u^0_N(x,y)}{\left|x-y\right|^s}}\\
&= I^w(v_N,r_N)E^{u^0_N}_s(\omega_N)\geq I^w(v_N,r_N)\mathcal E^{u^0_N}_s(A,N).
\end{split}
\end {equation}
 
Let $\{\OL \omega_N\}$ be an asymptotically $\(\mathbf{v},s\)$-energy minimizing sequence of $N$-point configurations on $A$. Then by (\ref {I}) and Proposition \ref {Th1''},
\begin {equation}\label {5'}
\liminf _{N\to\infty}{\frac {E^{v_N}_s(\OL\omega_N)}{N^{1+s/d}}}\geq\lim _{N\to\infty}{\frac {I^w(v_N,r_N)\mathcal E^{u^0_N}_s(A,N)}{N^{1+s/d}}}=\frac {C_{s,d}}{\left[\mathcal H^{s,w}_d(A)\right]^{s/d}}.
\end {equation}
On the other hand,
if $\{\omega'_N\}$ is an asymptotically $(w,s)$-energy minimizing sequence of $N$-configurations on $A$, by Corollary \ref {C1}, we obtain
$$
\limsup _{N\to\infty}{\frac {E^{v_N}_s(\OL\omega_N)}{N^{1+s/d}}}=\limsup _{N\to\infty}{\frac {\mathcal E^{v_N}_s(A,N)}{N^{1+s/d}}}\leq \limsup _{N\to\infty}{\frac {E^{v_N}_s(\omega'_N)}{N^{1+s/d}}}.
$$
Then by Corollary \ref {C1}, we have
$$
E^{v_N}_s(\omega'_N)=\sum_{x,y\in \omega'_N \atop 0<\left|x-y\right|\leq r_N}{\frac {v_N(x,y)}{\left|x-y\right|^s}}+\sum_{x,y\in \omega'_N \atop \left|x-y\right|>r_N}{\frac {v_N(x,y)}{\left|x-y\right|^s}}\leq
$$
$$
\leq S^w(v_N,r_N)\sum_{x,y\in \omega'_N \atop 0<\left|x-y\right|\leq r_N}{\frac {w(x,y)}{\left|x-y\right|^s}}+C\sum_{x,y\in \omega'_N \atop \left|x-y\right|>r_N}{\frac {w(x,y)}{\left|x-y\right|^s}}\leq
$$
$$
\leq S^w(v_N,r_N)E^w_s(\omega'_N)+o(N^{1+s/d}),\ \ N\to \infty.
$$
Thus from (\ref {I}) and the fact that $\{\omega'_N\}$ is asymptotically $(w,s)$-energy minimizing, we obtain
$$
\limsup_{N\to \infty}{\frac {E^{v_N}_s(\OL\omega_N)}{N^{1+s/d}}}\leq  {\lim_{N\to \infty}{\frac { S^w(v_N,r_N)E^{w}_s(\omega'_N)}{N^{1+s/d}}}}=\frac {C_{s,d}}{\left[\mathcal H^{s,w}_d(A)\right]^{s/d}}.
$$
Taking into account (\ref {5'}), it follows that
$$
\lim_{N\to \infty}{\frac {\mathcal E^{v_N}_s(A,N)}{N^{1+s/d}}}=\lim_{N\to \infty}{\frac {E^{v_N}_s(\OL\omega_N)}{N^{1+s/d}}}=\frac {C_{s,d}}{\left[\mathcal H^{s,w}_d(A)\right]^{s/d}},
$$
which proves (\ref {asympt''}).

To prove the assertion of Theorem \ref {C2} on the limiting distribution, we use (\ref {asympt''}) and (\ref {p5}) and obtain that
$$
\frac {C_{s,d}}{\left[\mathcal H^{s,w}_d(A)\right]^{s/d}}=\lim_{N\to \infty}{\frac {E^{v_N}_s(\OL \omega_N)}{N^{1+s/d}}}\geq \limsup_{N\to \infty}{\frac  {I^w(v_N,r_N)  E^{u^0_N}_s(\OL \omega_N)}{N^{1+s/d}}}\geq 
$$
$$
\geq \liminf_{N\to \infty}{\frac  {  E^{u^0_N}_s(\OL\omega_N)}{N^{1+s/d}}}\geq 
\lim_{N\to \infty}{\frac  {\mathcal E^{u^0_N}_s(A,N)}{N^{1+s/d}}}=\frac {C_{s,d}}{\left[\mathcal H^{s,w}_d(A)\right]^{s/d}},
$$
which implies that the sequence $\{\OL \omega_N\}$ is also asymptotically $\(\mathbf{u}^0,s\)$-energy minimizing. By Proposition~\ref {Th1''}, we obtain that the sequence $\{\OL \omega_N\}$ is asymptotically uniformly distributed with respect to the measure $\mathcal H^{s,w}_d$, which completes the proof of Theorem \ref {C2}.
\end{proof}

We next provide the proof of Theorem~\ref{Th1}.

\begin{proof}[Proof of Theorem \ref {Th1}.] With $v_N$ defined as in  \eqref {v}, the boundedness of the function $\Phi$ implies that \eqref {1"} holds. 
We next verify that condition \eqref{1'} is also satisfied.  Let $a$ be a positive constant and assume $N$ is sufficiently large. If   $(x,y)\in A\times A$ is such that $0<\left|x-y\right|\leq aN^{-1/d}$, then
$$
\frac {v_N(x,y)}{w(x,y)}= \Phi\(\frac {\left|x-y\right|}{r_N}\)\geq \OL \Phi\(\frac {a}{N^{1/d}r_N}\),
$$
where the function $\OL \Phi$ is defined in \eqref {OLPhi}.
Hence,
\begin {equation}\label {qlow}
I^w\(v_N,aN^{-1/d}\)\geq \OL \Phi\(\frac {a}{N^{1/d}r_N}\),
\end {equation}
for every $N$ sufficiently large.
On the other hand, with
$$
\W \Phi(t):=\sup\limits_{u\in (0,t]}{\Phi(u)},\ \ \ t>0,
$$
we have for $(x,y)\in A\times A$, $0<\left|x-y\right|\leq aN^{-1/d}$ that
$$
\frac {v_N(x,y)}{w(x,y)}=\Phi\(\frac {\left|x-y\right|}{r_N}\)\leq \W \Phi\(\frac {a}{N^{1/d}r_N}\).
$$
Consequently, for every $N$ sufficiently large, we have
\begin {equation}\label {q1low}
S^w\(v_N,aN^{-1/d}\)\leq \W \Phi\(\frac {a}{N^{1/d}r_N}\).
\end {equation}
Since $\lim_{t\to 0^+}\OL \Phi(t)=\lim_{t\to 0^+}\W \Phi(t)=1$, letting $N\to \infty$ in \eqref {qlow} and \eqref {q1low}, we obtain condition \eqref {1'}. Then applying Theorem \ref {C2} we obtain Theorem \ref {Th1}.
\end{proof}

\section {Proof of Theorems \ref{S1} and \ref{S2}}\label{sepsec}

Throughout this section we shall assume that $A\subset \RR^{p}$ is a compact  set with $ \mathcal H_d (A)>0$. 
We first note that   Frostman's lemma  (cf. \cite[Theorem 8.8] {MatGSMES}) implies that there is a Borel measure $\mu$ on $\RR^{p}$ with  support contained in $A$ such that $0<\mu(A)<\infty$ and
\begin{equation}\label{uppermu}
\mu(B(x,r))\leq r^d,\ \ \ x\in \RR^p,\ \ \ r>0.
\end{equation}

 The proof of Theorem~\ref{S1} follows arguments from \cite{BorHarSaf08}, which in turn, use a technique from \cite {KuiSaf98}.  Also see \cite{HarSafWhi12}.   We shall appeal to the following lemma whose proof follows standard arguments as in \cite {KuiSaf98}.  

\begin {lemma}\label {sep}
Let $\omega=\{x_1,\ldots,x_N\}$ be a point configuration on  $A$ with $\mu$   satisfying \eqref{uppermu}, $r_0:=(\mu(A)/(2N))^{1/d}$,  
$$
D_i:=A\setminus \bigcup _{j:j\neq i}{B(x_j,r_0)},\ \ \ i=1,\ldots,N, 
$$
and 
$$
U_i(\omega,x):=\sum_{j:j\neq i}{\frac {1}{\left|x-x_j\right|^s}},\ \ \ x\notin \omega\setminus \{x_i\},\ \ \ i=1,\ldots,N.
$$

Then for any $s>d$ and $N\in \NN$,
$$
\frac {1}{\mu(D_i)}\int_{D_i}{U_i(\omega,x)}{\rm d}\mu(x)\leq \frac {s}{(s-d)}\(\frac {2N}{\mu(A)}\)^{s/d},\ \ \ i=1,\ldots,N.
$$
\end {lemma}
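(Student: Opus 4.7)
The plan is a two-step estimate: bound $\mu(D_i)$ from below, then bound the integral of each pairwise term $|x-x_j|^{-s}$ over $D_i$ from above using the volume growth hypothesis on $\mu$.

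First, for the lower bound on $\mu(D_i)$, I would use subadditivity together with the Frostman-type bound $\mu(B(x_j,r_0))\leq r_0^d$ and the definition $r_0^d=\mu(A)/(2N)$. Specifically,
\[
\mu(D_i)\geq \mu(A)-\sum_{j\neq i}\mu(B(x_j,r_0))\geq \mu(A)-(N-1)\,r_0^d\geq \mu(A)-\tfrac{1}{2}\mu(A)=\tfrac{1}{2}\mu(A).
\]

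Next, for the main estimate, fix $j\neq i$. Since every $x\in D_i$ satisfies $|x-x_j|\geq r_0$, the layer-cake (distribution function) representation with the substitution $r=u^{-1/s}$ gives
\[
\int_{D_i}\frac{d\mu(x)}{|x-x_j|^s}=s\int_{0}^{\infty}\mu\bigl(D_i\cap B(x_j,r)\bigr)\,r^{-s-1}\,dr=s\int_{r_0}^{\infty}\mu\bigl(D_i\cap B(x_j,r)\bigr)\,r^{-s-1}\,dr,
\]
because the integrand vanishes for $r\leq r_0$. Bounding the measure factor by $\mu(B(x_j,r))\leq r^d$ and using $s>d$ to make the tail integral converge yields
\[
\int_{D_i}\frac{d\mu(x)}{|x-x_j|^s}\leq s\int_{r_0}^{\infty}r^{d-s-1}\,dr=\frac{s}{s-d}\,r_0^{d-s}.
\]

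Summing over the $N-1\leq N$ indices $j\neq i$, dividing by $\mu(D_i)\geq \mu(A)/2$, and substituting $r_0^d=\mu(A)/(2N)$ collapses the exponents:
\[
\frac{1}{\mu(D_i)}\int_{D_i}U_i(\omega,x)\,d\mu(x)\leq \frac{2}{\mu(A)}\cdot N\cdot\frac{s}{s-d}\,r_0^{d-s}=\frac{s}{s-d}\Bigl(\frac{2N}{\mu(A)}\Bigr)^{s/d},
\]
which is exactly the claimed estimate. There is no essential obstacle here: the only subtlety is that the lower bound on $\mu(D_i)$ is precisely tight enough to absorb the factor of $2$, which is why the constant $r_0=(\mu(A)/(2N))^{1/d}$ (rather than, say, $(\mu(A)/N)^{1/d}$) was chosen in the statement.
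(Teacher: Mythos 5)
Your proof is correct and is precisely the ``standard argument'' from Kuijlaars--Saff that the paper invokes without writing out: lower-bound $\mu(D_i)$ by subadditivity and the Frostman bound, then estimate each $\int_{D_i}|x-x_j|^{-s}\,d\mu$ via the layer-cake formula with $\mu(B(x_j,r))\le r^d$, cut off at $r_0$. All the exponent bookkeeping checks out, including the observation that the choice $r_0^d=\mu(A)/(2N)$ makes the factor of $2$ from $\mu(D_i)\ge\mu(A)/2$ combine exactly into $(2N/\mu(A))^{s/d}$.
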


\begin{proof}[Proof of Theorem \ref {S1}.] Denote by $\W x_1,\ldots,\W x_N$ the points in the $(v_N,s)$-energy minimizing configurations $\omega_N^s$ and let
$$
U_{i,N}(x):=\sum_{j:j\neq i}{\frac {v_N(\W x_j,x)}{\left|x-\W x_j\right|^s}},\ \ \ x\in A,\ \  \ i=1,\ldots,N.
$$
Let $M>0$ be a number such that $v_N(x,y)\leq M$, $x,y\in A$, $N\in\NN$. Then for every $i=1,\ldots,N$, since $\omega_N^s$ is energy minimizing, we have
$$
E^{v_N}_s(\omega_N^s\setminus \{\W x_i\})+2U_{i,N}(\W x_i)=E^{v_N}_s(\omega_N^s)\leq E^{v_N}_s((\omega_N^s\setminus \{\W x_i\})\cup \{x\})=
$$
$$
=E^{v_N}_s(\omega_N^s\setminus \{\W x_i\})+2U_{i,N}(x)\ \ x\in A, \ \ x\neq x_1,\ldots,x_N.
$$
Hence,
$$
U_{i,N}(\W x_i)\leq U_{i,N}(x)=\sum_{j:j\neq i}{\frac {v_N(\W x_j,x)}{\left|\W x_j-x\right|^s}}
$$
$$
\leq \sum_{j:j\neq i}{\frac {M}{\left|\W x_j-x\right|^s} },\ \ \ x\in A, \ \ \ x\neq x_1,\ldots,x_N.
$$
By Lemma \ref {sep},   for   $i=1,\ldots,N$, we have  
\begin{equation}\label{UiN}
U_{i,N}(\W x_i)\leq \frac {M}{\mu(D_i)}\int_{D_i}{U_i(\omega^s_N,x)}{\rm d}\mu(x)\leq  \left(\frac {sM}{ s-d }\right)\(\frac {2N}{\mu(A)}\)^{s/d}.
\end{equation}
Clearly, it is sufficient to only consider $N$ such that $\delta(\omega^s_N)< a_0 N^{-1/d}$.  For such $N$, let $\W x_p,\W x_q\in \omega^s_N$
satisfy  $\left|\W x_p-\W x_q\right|=\delta(\omega^s_N)$. Then for every $N$ sufficiently large, using \eqref{S1cond} and \eqref{UiN}, we obtain 
\begin{equation*}
 \frac {\alpha_0}{\delta(\omega^s_N)^s} \leq   \frac {v_N(\W x_p,\W x_q)}{ \left|\W x_p-\W x_q\right|^s}\leq
  \sum_{j:j\neq p}\frac {v_N(\W x_p,\W x_j)}{\left|\W x_p-\W x_j\right|^s} 
 =  U_{p,N}(\W x_p) 
 \leq  \left(\frac{sM}{s-d}\right) \(\frac {2N}{\mu(A)}\)^{s/d},
\end{equation*}
which implies the result. \end{proof}

\begin{proof}[Proof of Theorem~\ref{S2}]
We shall adapt an argument given in \cite{HarSafWhi12}.  
Let $\omega_N^s=\{x_1,\ldots,x_N\}$ be an $N$-point $(v_N,s)$-energy minimizing  configuration for  the compact set $A$  and, for $y\in A$,   consider the  function
\begin{equation}\label{point}U(y):=\frac{1}{N}\sum_{i=1}^N \frac{v_N(y,x_i)}{|y-x_i|^s}.\end{equation}
For fixed $1\leq j\leq N$, the function $U(y)$ can be decomposed as
\begin{equation}\label{decomp}U(y)=\frac{1}{N}\frac{v_N(y,x_j)}{|y-x_j|^s}+\frac{1}{N}\sum_{\substack{i=1\\i\not=j}}^N\frac{v_N(y,x_i)}{|y-x_i|^ s},\end{equation}
and, since $\omega_N^s$ is a minimizing configuration on $A$, the point $x_j$  minimizes the sum over $i\not=j$ on the right-hand side of equation~\eqref{decomp}. Thus for each fixed $j$ and $y\in A$
\begin{align}\label{lower}
U(y)&\geq \frac{1}{N}\frac{v_N(y,x_j)}{|y-x_j|^ s}+\frac{1}{N}\sum_{\substack{i=1\\i\not=j}}^N\frac{v_N(x_j,x_i)}{|x_j-x_i|^ s}.
\end{align}
Summing over $j$ gives
 \begin{align}
 N U(y)&\geq\frac{1}{N}\sum_{j=1}^N\frac{v_N(y,x_j)}{|y-x_j|^ s}+\frac{1}{N}\sum_{j=1}^N \sum_{\substack{i=1\\i\not=j}}^N\frac{v_N(x_j,x_i)}{|x_j-x_i|^ s}
\\&= U(y)+\frac{1}{N}{\mathcal {E}}_{s}^{v_N}(A,N),
\end{align}
and thus
\begin{equation}
\label{Ulowbnd} U(y)\geq\frac{1}{N(N-1)}{\mathcal {E}}_{s}^{v_N}(A,N)\ge \frac{{\mathcal {E}}_{s}^{v_N}(A,N)}{N^2} \qquad (y\in A).\end{equation}
Since $A$ is compact, there exists a point $y^*\in A$ such that
\begin{equation}\label{y}\min_{1\leq i\leq N}|y^*-x_i|=\rho(\omega_N^s,A)=:\rho(\omega_N^s).
\end{equation}

Then, by  \eqref{asympt''} in Theorem~\ref{C2}, there is a constant $C_1>0$ and some positive integer $N_0$ such that
 \begin{equation}\label{EKlowbnd}{\mathcal {E}}_{s}^{v_N}(A,N)\ge    C_1  N^{1+s/d}\qquad (N\ge N_0).
\end{equation}
Since~\eqref{Ulowbnd} holds for the point $y^*$ of~\eqref{y}, we combine \eqref{Ulowbnd} with~\eqref{EKlowbnd} to obtain
\begin{equation}\label{lower-pot}U(y^*)\geq  \frac{{\mathcal {E}}_{s}^{v_N}(A,N)}{N^2} \ge C_1N^{s/d-1} \qquad (N\ge N_0).
\end{equation}
In addition, by equation \eqref{S1eq1} of Theorem~\ref{S1},    there is some $C_2>0$ such that $ \delta(\omega_N^s)\ge C_2N^{-1/d}$ for   $N\ge 2$.
  
Let $\mathcal{N}$ consist of those $N\ge N_0$ such that
\begin{equation}\label{reverse}\rho(\omega_N^s)\geq \frac{C_2}{2}N^{-1/d}.
\end{equation}
If $\mathcal{N}$ is empty (or finite) then we are done.   Assuming that $\mathcal{N}$ is infinite,
let $N\in \mathcal{N}$ be fixed.
 For $0<\epsilon<1/2$, let
\begin{equation}\label{r0def}
r_0=r_0(N,\epsilon):=\epsilon\,  {C_2}N^{-1/d}.
\end{equation} Note that any two of the relative balls $\tilde{B}(x_i,r_0):=  \tilde{A}\cap B(x_i,r_0)$, for $1\leq i\leq N,$ do not intersect since $r_0<{\delta(\omega_N^s)}/{2}$.
For any $x\in \tilde B(x_i,r_0)$, inequalities~\eqref{y} and \eqref{reverse} imply
\begin{align}\label{gooba}
\begin{split}
|x-y^*|&\leq |x-x_i|+|x_i-y^*|\leq r_0+|x_i-y^*|\\ &\leq  2\epsilon\, \rho(\omega_N^s) +|x_i-y^*|\leq (1+ 2\epsilon)|x_i-y^*|.\end{split}
\end{align}

Now let $\mu$ denote a  $d$-regular measure on $\tilde A$ satisfying \eqref{mureg} with positive constants $c_0, C_0$.
For fixed $1\leq i\leq N$, using ~\eqref{gooba} and  taking an average value on $\tilde{B}(x_i,r_0)$ we obtain
\begin{align}\label{mort}
\begin{split}\frac{v_N(x_i,y^*)}{|x_i-y^*|^s}&\leq \frac{C_3(1+ 2\epsilon)^s}{ \mu(\tilde{B}(x_i,r_0))}
\int_{\tilde{B}(x_i,r_0)}\frac{d\mu(x)}{|x-y^*|^s}\\ &\leq  \frac{C_3\,(1+ 2\epsilon)^s\,  c_0}{ \,r_0^{d}}
\int_{\tilde{B}(x_i,r_0)}\frac{d\mu(x)}{|x-y^*|^s},\end{split}
\end{align}
where $C_3$ denotes the uniform bound of the ${v_N}$ on $A\times A$.

Inequality  \eqref{reverse} and definition \eqref{r0def} imply $2\epsilon \rho(\omega_N^s)\ge r_0$  and thus,
for $x\in \tilde{B}(x_i,r_0)$, we obtain
\begin{align}\label{gabba}
\begin{split}
|x-y^*| &\geq|x_i-y^*|-|x-x_i| \geq |x_i-y^*|-r_0\\ &\geq  |x_i-y^*|-  2\epsilon \, \rho(\omega_N^s)\geq (1- 2\epsilon)\rho(\omega_N^s) . \end{split}
\end{align}
Inequality~\eqref{gabba} implies $$\bigcup_{i=1}^N\tilde{B}(x_i, r_0)\subset  \tilde{A}\setminus \tilde{B}(y^*,(1- 2\epsilon)\rho(\omega_N^s)),$$and since the left-hand side is a disjoint union, averaging the inequalities of~\eqref{mort}   we
have
\begin{align}\label{nadamas}
\begin{split}U(y^*)&\leq \frac{ C_3\,(1+ 2\epsilon)^s\,  c_0}{ N\,r_0^{d}}
\sum_{i=1}^N\int_{\tilde{B}(x_i,r_0)}\frac{d\mu(x)}{|x-y^*|^s}\\
&\leq\frac{ C_3\,(1+ 2\epsilon)^s\,  c_0}{ N\,r_0^{d}}\int_{\tilde{A}\setminus \tilde{B}(y^*, (1- 2\epsilon)\rho(\omega_N^s))}\frac{d\mu(x)}{|x-y^*|^s}. \end{split}\end{align}
%


Next we use the standard conversion of the integral with respect to $\mu$ to an integral with respect to Lebesgue measure (see e.g. \cite[Theorem 1.15]{MatGSMES}) to obtain
\begin{align}\label{needed}\begin{split}
 \int_{\tilde{A}\setminus \tilde{B}(y^*, (1- 2\epsilon)\rho(\omega_N^s))}\frac{d\mu(x)}{|x-y^*|^s}
  &=  \int_0^\infty\mu\{x\in\tilde{A}  : t<\frac{1}{|x-y^*|^s }<\frac{1}{[(1- 2\epsilon)\rho(\omega_N^s)]^{s}}\}\, dt\\
 & \leq  
 C_0\int_0^{((1- 2\epsilon)\rho(\omega_N^s))^{-s}}t^{-d/s}dt  \\ &=\frac{ {C}_0}{(1-d/s) (1-2\epsilon)^{s-d}} \rho(\omega_N^s)^{d-s}.
\end{split}\end{align}

Let  $N\in \mathcal{N}$. Relations \eqref{r0def}, \eqref{nadamas} and \eqref{needed}  imply
\begin{equation}\label{Uyup}
U(y^*)\le \left(\frac{ C_0C_3\,(1+ 2\epsilon)^s\,  c_0}{ (1-d/s) (1-2\epsilon)^{s-d} \epsilon^{d}C_2^d} \right) \   \rho(\omega_N^s)^{d-s}.
\end{equation}
Choosing $\epsilon=({2(2(s/d)-1)})^{-1}<\frac{1}{2}$ minimizes the right hand side of inequality \eqref{Uyup} for $\epsilon$ in $(0,1/2)$ giving
\begin{equation}\label{Uyup2}
U(y^*)\le \left[\frac{ 4^dC_0C_3  c_0s^d}{ (1-d/s)^{s-d+1}    (dC_2)^d} \right] \   \rho(\omega_N^s)^{d-s}. 
\end{equation}
Using  \eqref{lower-pot} and \eqref{Uyup2}, we then obtain
\begin{equation}\label{rhobnd1}
\rho(\omega_N^s)\le \left[\frac{ 4^dC_0C_3  c_0s^d}{ (1-d/s)^{s-d+1}   C_1 (dC_2)^d} \right]^{1/(s-d)} N^{-1/d}, 
\end{equation}
for any $N\in\mathcal{N}$.
If  $N\ge N_0$ is not in $\mathcal{N}$, then $\rho(\omega_N^s)<\frac{C_2}{2} N^{-1/d}$ and thus \eqref{limsuprho} holds. \end{proof}

\section {Proof of statements from Section \ref {section:Comp} \label{CompProofs}}

\begin{proof}[Proof of Proposition \ref {P2'}.] Denote $a:=\delta(\omega_N)/2$. For any distinct points $y_1,y_2\in \omega_N$, we have $B(y_1,a)\cap B(y_2,a)=\emptyset$.  Let $\mu$ be a $d$-regular measure on $A$ satisfying \eqref {mureg} with constants $c_0$ and $C_0$, for every point $x\in \omega_N$.  Then we have
$$
\# (\omega_N\cap B(x,\delta_N))\cdot c_0^{-1}a^d\leq \sum\limits_{y\in \omega_N\cap B(x,\delta_N)}{\mu\(A\cap B(y,a)\)}
$$
$$
=\mu\(\bigcup\limits _{y\in \omega_N\cap B(x,\delta_N)}A\cap B(y,a)\)
$$
$$
\leq \mu\(A\cap B\(x,\delta_N+a\)\)\leq C_0 (\delta_N+a)^d.
$$
Taking into account relation \eqref {well} and the fact that $\delta _N N^{1/d}=C_N$, we have
$$
Z(\omega_N,\delta_N)\leq N\max\limits_{x\in \omega_N}{\# (\omega_N\cap B(x,\delta_N))}
$$
$$
\leq  {C_0}{c_0}N \(\frac {2\delta_N}{\delta (\omega_N)}+1\)^d=O(NC_N^d),\ \ \ N\to\infty,
$$
which completes the proof of Proposition \ref {P2'}.
\end{proof}

\begin{proof}[Proof of Proposition \ref {P5}.] From the estimate
$$
E_s(\omega)\geq \sum_{x\in \omega}{\sum_{y\in \omega\atop 0<\left|y-x\right|\leq \delta}{\frac {1}{\left|y-x\right|^s}}}
$$
$$
\geq  \sum_{x\in \omega}\delta ^{-s}\# \{y\in \omega : 0<\left|y-x\right|\leq \delta\}=\delta^{-s}(Z(\omega,\delta)).
$$
Then we have $Z(\omega,\delta)\leq \delta^s E_s(\omega)$.
To prove the second part of the proposition, we write
$$
Z(\omega_N,r_N)\leq C_N^s \frac {E_s(\omega_N)}{N^{s/d}}=O(N C_N^s),\ \ \ N\to \infty. 
$$
Proposition \ref {P5} is proved.

\end{proof}

\bigskip

\noindent
{\bf Acknowledgements:}  We thank Grady Wright for generating the image in Figure 3 based on data provided by the authors using the algorithms described above.

\begin{thebibliography}{99}
\bibitem {BenFic03}
J. Benedetto, M. Fickus, Finite normalized tight frames, {\it Adv.
Comput. Math.} {\bf 18} (2003), 357--385.
\bibitem{BenStaWil77} J. L. Bentley, D. F. Stanat, and E. H. Williams, The complexity of finding fixed-radius near neighbors,  {\it Inform. Proc. Lett.}, {\bf 6}(6) (1977), 209--212.
\bibitem {BorHarSaf07}
S.V. Borodachov, D. P. Hardin, E.B. Saff, Asymptotics of best-packing on rectifiable sets, {\it Proc. Amer. Math. Soc.}, {\bf 135} (2007), 2369--2380.
\bibitem {BorHarSaf08}
S.V. Borodachov, D.P. Hardin, E.B. Saff, Asymptotics for discrete weighted minimal Riesz energy problems on rectifiable sets, {\it Trans. Amer. Math. Soc.}, {\bf 360} (2008), 1559-1580.
\bibitem{Bow00} M. Bowick,    D.R. Nelson,  A.  Travesset, Interacting topological defects in 
frozen topographies, {\it Phys. Rev. B} {\bf 62}  (2000),  8738--8751.
\bibitem {BrauHarSaff}
J.S. Brauchart, D.P. Hardin, E.B. Saff, The next-order term for optimal Riesz and logarithmic energy asymptotics on the sphere, {\it Contemp. Math.}, {\bf 578} (2012), 31--61.
\bibitem{Cha83}  B. Chazelle, An improved algorithm for the fixed-radius neighbor problem,  {\it Inform. Proc. Lett.},  {\bf 16} (1983), 193--198. \bibitem {ConSlo99}
J.H. Conway, N.J.A. Sloane, {\it Sphere Packings, Lattices and Groups,}
Springer Verlag, New York: 3rd ed., 1999.
\bibitem {FedGMT}
H. Federer, {\it Geometric Measure Theory}, Springer-Verlag, 1969.
\bibitem {HarSaf05}
D.P. Hardin, E.B. Saff, Minimal Riesz energy point configurations for rectifiable $d$-dimensional manifolds, {\it Adv.  Math.}, {\bf 193} (2005), no. 1, 174--204.
\bibitem{HarSafWhi12} D. P. Hardin, E. B. Saff, and J. T. Whitehouse, Quasi-uniformity of minimal weighted energy points, {\it J.  Complexity} {\bf 28} (2012), 177-191.
\bibitem {KuiSaf98}
A.B.J. Kuijlaars, E.B. Saff, Asymptotics for minimal discrete energy on the sphere, {\it Trans. Amer. Math. Soc.} {\bf 350} (1998), no. 2, 523--538.
\bibitem {MMRS}
A. Martinez-Finkelshtein, V. Maymeskul, E. A. Rakhmanov, and E.B. Saff, Asymptotics for minimal discrete Riesz energy on curves in $R^d$, {\it Canad. J. Math.}, {\bf 56} (2004), 529--552.
\bibitem {MatGSMES}
P. Mattila, {\it Geometry of Sets and Measures in Euclidean Space}, Cambridge University Press, 1995.
\bibitem{Mel77} T.W. Melnyk,  O. Knop, and W. R. Smith,  Extremal arrangements of points and unit charges on a sphere: equilibrium configurations revisited,  {\it Canad. J. Chem.} {\bf 55}, (1977), 1745--1761.
\bibitem{SloWom} I.H. Sloan, R.S. Womersley, Extremal systems of points and numerical integration on the sphere, {\it Adv. Comp. Math.} {\bf 21} (2004), 102--125. 
\end {thebibliography}

\appendix 
\section{Proof of Proposition \ref {P3}.} \label{P3proof}

Denote
$$
{\rm dist}((x,y),(x_0,y_0))=\sqrt{\left|x-x_0\right|^2+\left|y-y_0\right|^2}
$$
and let
$$
Q_\delta (x_0)=\{(x,y)\in A\times A\setminus D(A) : 0<{\rm dist}((x,y),(x_0,x_0))\leq \delta\}.
$$
Since $w$ is a CPD weight, there is a number $\kappa>0$ such that $w(x,y)>0$, $\left|x-y\right|<\kappa$. The inequality ${\rm dist}((x,y),(x_0,x_0))\leq \delta$ implies that $\left|x-y\right|<2\delta$. Assume first that $I^w(v_N,\alpha_N)>0$. Taking into account (\ref {alpha_N}), we will have
$$
L(v_N,x_0)=\lim_{\delta\to 0}{\sup_{(x,y)\in Q_\delta (x_0)}{v_N(x,y)}}
$$
$$
\geq \lim_{\delta\to 0}{\sup_{(x,y)\in Q_\delta (x_0)}{I^w(v_N,2\delta)w(x,y)}}
$$
$$
\geq\liminf_{\delta\to 0}I^w(v_N,2\delta)\cdot  \lim_{\delta\to 0}{\sup_{(x,y)\in Q_\delta (x_0)}{w(x,y)}}
$$
$$
\geq I^w(v_N,\alpha_N)L(w,x_0).
$$
For every $x_0$ such that $L(w,x_0)=\infty$, the above estimate implies that $L(v_N,x_0)=\infty$, $N>N_0$, where $N_0$ does not depend on $x_0$. By the agreement, $L(v_N,x_0)/L(w,x_0)=1$ for $N>N_0$ and every such $x_0\in A$. Assuming now that $L(w,x_0)<\infty$, we similarly obtain
$$
L(v_N,x_0)=\lim_{\delta\to 0}{\sup_{(x,y)\in Q_\delta (x_0)}{v_N(x,y)}}
$$
$$
\leq \lim_{\delta\to 0}{\sup_{(x,y)\in Q_\delta (x_0)}{S^w(v_N,2\delta)w(x,y)}}
$$
$$
\leq\limsup_{\delta\to 0}S^w(v_N,2\delta)\cdot  \lim_{\delta\to 0}{\sup_{(x,y)\in Q_\delta (x_0)}{w(x,y)}}
$$
$$
\leq S^w(v_N,\alpha_N)L(w,x_0).
$$
Property (b) of a CPD-weight also implies that $L(w,x_0)\geq \kappa>0$. Consequently,
$$
I^w(v_N,\alpha_N)\leq \frac {L(v_N,x_0)}{L(w,x_0)}\leq S^w(v_N,\alpha_N)
$$
for every $x_0\in A$ with $L(w,x_0)<\infty$. Since quantities $I^w(v_N,\alpha_N)$ and $S^w(v_N,\alpha_N)$ do not depend on $x_0$, taking into account \eqref {alpha_N} and the fact that $L(v_N,x_0)/L(w,x_0)=1$, $N>N_0$, if $L(w,x_0)=\infty$, we obtain uniform convergence in \eqref {UCv_n}.
Using an analogous argument one can establish the second equality in \eqref {UCv_n}. 

Assume now that the weight $w$ is continuous on $D(A)$ and \eqref{alpha_N}  holds. Then $w$ is bounded by two positive constants on $D(A)$. Since $L(w,x_0)=l(w,x_0)=w(x_0,x_0)$, relations \eqref {UCv_n} imply relations \eqref {x_0}.

To establish the converse statement, 
notice that for every $m\in \NN$, there is a positive integer $N_m$ such that for $N>N_m$ and $x_0\in A$,
$$
\left|L(v_N,x_0)-w(x_0,x_0)\right|\leq \frac 1m \ \ \ {\rm and}\ \ \ \left|l(v_N,x_0)-w(x_0,x_0)\right|\leq \frac 1m.
$$
Moreover, the sequence $\{N_m\}$ can be chosen to be increasing. Then
\begin{align*}
&\lim\limits_{\delta\to 0}{\sup\limits_{(x,y)\in Q_\delta (x_0)}{\(v_N(x,y)-w(x,y)\)}}
\\ &\leq \lim\limits_{\delta\to 0}{\sup\limits_{(x,y)\in Q_\delta (x_0)}{v_N(x,y)}}
-\lim\limits_{\delta\to 0}{\inf\limits_{(x,y)\in Q_\delta (x_0)}{w(x,y)}}\\
&=L(v_N,x_0)-w(x_0,x_0)\le \frac {1}{m},\ \ x_0\in A,\ \  N>N_m.
\end{align*}
In view of property (b) of a CPD-weight, there are positive numbers $h$ and $\kappa$ such that $w(x,y)>h$ for $\left|x-y\right|<\kappa$. Let $\beta^m_N=\beta^m_N(x_0)<\kappa/\sqrt{2}$ be such that 
$$
\sup\limits_{(x,y)\in Q_{\beta^m_N} (x_0)}{\(v_N(x,y)-w(x,y)\)}<\frac 2m.
$$
The collection of open balls $\{B((x,x),\beta^m_N(x))\}_{x\in A}$ has a finite subcollection $\{B((x_i,x_i),\beta^m_N(x_i))\}$ whose union $U_{m,N}$ covers the compact set $D(A)$. Let $\alpha^m_N:={\rm dist}(D(A),(A\times A)\setminus U_{m,N})>0$. Since 
$$
Q_{m,N}:=\{(x,y)\in A\times A : 0<\left|x-y\right|\le \alpha^m_N\}\subset U_{m,N},
$$
for every $N>N_m$, we have 
$$
\sup\limits_{(x,y)\in Q_{m,N}}{\(v_N(x,y)-w(x,y)\)}
$$
$$
\leq \max \limits_ i \sup\limits _{(x,y)\in Q_{\beta^m_N}(x_i)}{\(v_N(x,y)-w(x,y)\)}\leq \frac 2m.
$$
Consequently, since $|x-y|<\kappa$ for any $x,y\in Q_{m,N}$, we have
$$
S^w(v_N,\frac{\alpha^m_N}{2})=\sup\limits_{(x,y)\in Q_{m,N}}{\frac {v_N(x,y)}{w(x,y)}}\leq 1+\frac {2}{h m},\ \ \ N>N_m.
$$
Letting $\gamma_N:=\alpha^m_N/2$, $N_m<N\leq N_{m+1}$, $m\in \NN$, we will have 
$$
\limsup\limits_{N\to \infty}{S^w(v_N,\gamma_N)}\leq 1.
$$
Using the second equality in \eqref {x_0}, one can obtain that $$
\liminf\limits_{N\to \infty}{I^w(v_N,\gamma^\prime_N)}\geq 1, 
$$
where $\{\gamma^\prime _N\}$ is some positive and bounded sequence. Then for any positive sequence $\alpha_N=o(\min \{\gamma_N,\gamma^\prime_N\})$, relation \eqref {alpha_N} still holds which completes the proof of Proposition \ref {P3}.

\end {document}